% % % % % % % % % % % % % % % % % % % % % % % % % % % % % % % % % % % % % % % % % % % % %
% DRAGO NICOLO' - Draft DDF -
% % % % % % % % % % % % % % % % % % % % % % % % % % % % % % % % % % % % % % % % % % % % % 
%
\documentclass[a4paper,10pt,openright]{article} % Per avere margini destro e sinistro diversi
\usepackage[english]{babel} % per lingua. TeXnic dà warning per misteriosi motivi...
\usepackage[latin1]{inputenc} % per gli accenti e carattere
\usepackage{amsmath,amsthm,amsfonts,amscd,eucal,latexsym,amssymb,mathrsfs,cancel,tikz}
\usepackage{epsfig}  % ciao
\usepackage{simplewick}
\usepackage{hyperref}
\usepackage{color}

%\definecolor{NicoColor}{RGB}{56, 174, 199}
\definecolor{NiColor}{RGB}{77,77,255}
\definecolor{NiColoRed}{RGB}{255,77,77}
\definecolor{NiCologReen}{RGB}{77,255,77}

\oddsidemargin 0cm      % left margin of right page 
\evensidemargin 0cm     % left margin of left page 
\textheight 20cm        % height of text  24 
\textwidth 16cm         % width of text  
\newtheoremstyle{TheoremStyle}% <name>
        {3pt}% <Space above>
        {3pt}% <Space below>
        {}% <Body font>
        {}% <Indent amount>
        {\bf}%{\itshape}% <Theorem head font>
        {:}% <Punctuation after theorem head>
        {.5em}% <Space after theorem head>
        {}% <Theorem head spec (can be left empty, meaning 'normal')>
\newtheoremstyle{ExampleStyle}% <name>
        {3pt}% <Space above>
        {3pt}% <Space below>
        {}% <Body font>
        {}% <Indent amount>
        {\bf}%{\itshape}% <Theorem head font>
        {:}% <Punctuation after theorem head>
        {.5em}% <Space after theorem head>
        {}% <Theorem head spec (can be left empty, meaning 'normal')>

\theoremstyle{TheoremStyle}
\newtheorem{theorem}{Theorem}
\newtheorem{corollary}[theorem]{Corollary}
\newtheorem{proposition}[theorem]{Proposition}
\newtheorem{lemma}[theorem]{Lemma}
\newtheorem{Definition}[theorem]{Definition}
\newtheorem{remark}[theorem]{Remark}
\theoremstyle{ExampleStyle}
\newtheorem{Example}[theorem]{Example} %[theorem]{Example}

\title{Fundamental solutions for the wave operator on static Lorentzian manifolds with timelike boundary}
\author{Claudio Dappiaggi$^{1,2,a}$, Nicol\'o Drago$^{1,2,b}$, Hugo Ferreira$^{1,3,c}$ 
	\vspace{4mm}\\
	{\small $^1$ Istituto Nazionale di Alta Matematica}\\ 
	{\small Sezione di Pavia - Via Ferrata, 5, 27100 Pavia, Italy.}\vspace{2mm}\\
	{\small $^2$ Dipartimento di Fisica}\\ 
	{\small Universit{\`a} di Pavia - Via Bassi 6, 27100 Pavia, Italy.}\vspace{2mm}\\
	{\small $^3$ INFN, Sezione di Pavia - Via Bassi 6, 27100 Pavia, Italy.}\vspace{2mm}\\
	{\footnotesize  ~$^a$ claudio.dappiaggi@unipv.it~,~$^b$ nicolo.drago@unipv.it~,~$^c$ 
		hugo.ferreira@pv.infn.it }}

\begin{document}
\maketitle

\begin{abstract}
	We consider the wave operator on static, Lorentzian manifolds with timelike boundary and we discuss the existence of advanced and retarded fundamental solutions in terms of boundary conditions. By means of spectral calculus we prove that answering this question is equivalent to studying the self-adjoint extensions of an associated elliptic operator on a Riemannian manifold with boundary $(M,g)$. The latter is diffeomorphic to any, constant time hypersurface of the underlying background. In turn, assuming that $(M,g)$ is of bounded geometry, this problem can be tackled within the framework of boundary triples. These consist of the assignment of two surjective, trace operators from the domain of the adjoint of the elliptic operator onto an auxiliary Hilbert space $\mathsf{h}$, which is the third datum of the triple. Self-adjoint extensions of the underlying elliptic operator are in one-to-one correspondence with self-adjoint operators $\Theta$ on $\mathsf{h}$. On the one hand, we show that, for a natural choice of boundary triple, each $\Theta$ can be interpreted as the assignment of a boundary condition for the original wave operator. On the other hand, we prove that, for each such $\Theta$, there exists a unique advanced and retarded fundamental solution. In addition, we prove that these share the same structural property of the counterparts associated to the wave operator on a globally hyperbolic spacetime.
\end{abstract}

\section{Introduction}

The existence and the characterization of the fundamental solutions of the D'Alembert wave operator $\square$ on a Lorentzian manifold $(N,g)$ is a classical problem which has been thoroughly studied in many contexts. Particularly if the underlying background is globally hyperbolic, a complete answer is known, {\it cf.} \cite{BGP}, showing that there exist two unique distributions $\mathcal{G}^\pm\in\mathcal{D}^\prime(N\times N)$, called {\em advanced} ($-$) and {\em retarded} ($+$) fundamental solutions whose associated Green operators $\mathsf{G}^\pm:\mathcal{D}(N)\to C^\infty(N)$ are such that $\square\circ\mathsf{G}^\pm=\mathsf{G}^\pm\circ\square=\textrm{id}|_{\mathcal{D}(N)}$  and, for any $f\in\mathcal{D}(N)$, $\textrm{supp}(\mathsf{G}^\pm(f))\subseteq J^\mp(\textrm{supp}(f))$, $J^\pm$ being the causal future ($+$) and past ($-$).

Completely different is the situation if $(N,g)$ is a Lorentzian manifold with timelike boundary $(\partial N, \iota_N^*g)$, where $\iota_N:\partial N\hookrightarrow N$ and where $(\partial N, \iota_N^*g)$ is a Lorentzian submanifold. In this case a complete, cohesive analysis of the existence of fundamental solutions is not available. Yet, this class of backgrounds contains several notable examples, such as anti-de Sitter (AdS) or asymptotically AdS spacetimes which play a key r\^ole in several models that have recently attracted a lot of attention for the study of the properties of the wave or of the Klein-Gordon equation, see for example \cite{Bachelot,Holzegel,Wrochna:2016ruq,Vasy}. 

In addition, fundamental solutions play a pivotal r\^ole in the covariant quantization of free field theories, most notably in the construction of the $*$-algebra of observables. As a matter of facts, focusing on Bosonic fields, the canonical commutation relations are implemented in terms of a $*$-ideal which, in turn, is built out of a symplectic form defined out of the fundamental solutions, so to encode the information of dynamics and causality, see {\it e.g.} \cite{Benini:2013fia,Benini:2015bsa}. From this viewpoint, the question that we shall answer in this paper has a direct impact in our understanding of several physical systems, as existence of fundamental solutions guarantees on the one hand that any underlying quantum field theory can be constructed following the standard rationale. On the other hand, if these solutions turn out not to be unique, it will entail that each of the possible choices corresponds to a different physical quantum system, as it has been already observed in different contexts, {\it e.g.} \cite{Benini:2017dfw,Dappiaggi:2017wvj,Dappiaggi:2014gea}.

More precisely, our last statement stems from the observation that, in comparison to the same class of linear hyperbolic, partial differential equations on globally hyperbolic backgrounds, the main structural difference in our setting lies in the fact that initial data are not sufficient to identify a unique solution, which can be obtained only if one supplements with a suitable boundary condition. From the perspective of fundamental solutions, this statement amounts to saying that one cannot expect unique, advanced and retarded Green's operators, rather their existence should be tied to the specific boundary conditions assigned on $\partial N$. The goal of this paper is to start a full-fledged analysis of this problem on a large class of manifolds with non-empty boundary. Observe that, although we restrict the attention to the scalar wave equation, a similar analysis could be done for vector valued partial differential equations. In the special case of the Dirac field a recent work, connected to this problem, has appeared in \cite{GrosseMurro}.

We focus on globally hyperbolic, oriented and time-oriented Lorentzian manifolds $(N,h)$ with timelike boundary \cite{Ake-Flores-Sanchez-18}, such that there exists, in addition, an isometry between $(N,h)$ and a standard static spacetime $M\times_{\beta}\mathbb{R}$. This is a warped product between $\mathbb{R}$ and a Riemannian manifold $(M,g)$ with a non-empty boundary $(\partial M,\iota^*_Mg)$ where $\iota_M:\partial M\hookrightarrow M$. Thereon, we consider the D'Alembert wave operator $\widetilde\square$ as well as the auxiliary problem $\square=-\partial^2_t+A$, $t$ being the coordinate along $\mathbb{R}$, while $A$ is a second-order, elliptic, partial differential operator built out of $g$ and of $\beta$.
In this framework the advanced and retarded fundamental fundamental solutions are defined as $\mathcal{G}^\pm\in\mathcal{D}^\prime(\mathring{N}\times\mathring{N})$, $\mathring{N}\doteq N\setminus\partial N$, whose associated Green operators $\mathsf{G}^\pm:\mathcal{D}(\mathring{N})\to\mathcal{D}^\prime(\mathring{N})$ satisfy $\square\circ\mathsf{G}^\pm=\mathsf{G}^\pm\circ\square=\textrm{id}|_{\mathcal{D}(\mathring{N})}$, while also enjoying the support property $\textrm{supp}(\mathsf{G}^\pm(f))\subseteq J^\pm(\textrm{supp}(f))$, for all $f\in\mathcal{D}(\mathring{N})$. On account of the underlying metric being static, as a first step one can reduce the problem to an auxiliary one obtained via the ansatz $\mathcal{G}^+=\theta(t-t^\prime)\mathcal{G}$ and $\mathcal{G}^-=-\theta(t^\prime-t)\mathcal{G}$, where $\theta$ is the Heaviside distribution while $\mathcal{G}\in\mathcal{D}^\prime(\mathring{N}\times\mathring{N})$ is a solution of 
\begin{equation*}
\left\{\begin{array}{l}
(\square\otimes\mathbb{I})\mathcal{G}=(\mathbb{I}\otimes\square)\mathcal{G}=0\\
\mathcal{G}|_{t=t^\prime}=0,\quad\partial_t\mathcal{G}|_{t=t^\prime}=-\partial_{t^\prime}\mathcal{G}|_{t=t^\prime}=\delta_{\textrm{diag}(\mathring{M})}
\end{array}\right. ,
\end{equation*}
where $|_{t=t^\prime}$ indicates the pull-back of a distribution on $\mathring{N}\times\mathring{N}$ to $(\{t\}\times\mathring{M})\times (\{t\}\times\mathring{M})$ while $\delta_{\textrm{diag}(\mathring{M})}$ is the Dirac delta distribution on the diagonal of $\mathring{M}\times\mathring{M}$. As it stands, the above initial value problem is incomplete unless one also assigns a boundary condition on $\partial N$ which allows for the identification of a unique solution. In order to understand which class of such conditions is admissible we focus our attention on $A$, which, being the metric static, is a partial differential operator acting on scalar functions over $M$.
Our strategy proceeds in two steps. In the first we start by reading $A$ as a symmetric operator on $\mathsf{H}\equiv L^2(M;\textrm{d}\mu_g)$, $\textrm{d}\mu_g$ being the metric-induced volume form and we show that, for every self-adjoint extension of $A$, by means of spectral calculus, it is possible to construct $\mathcal{G}$, solution of the above system of partial differential equations. 

In the second step instead we start by investigating which are the possible self-adjoint extensions of $A$ and how to characterize them explicitly in terms of boundary conditions. To this avail we require an additional assumption, namely $(M,g)$ ought to be of bounded geometry, a widely used structure which has been recently studied in the context of manifolds with boundary in \cite{AmmanGrosseNistor,GrosseSchneider,Schick}. On the one hand, we observe that our assumptions include several interesting scenarios such as anti-de Sitter spacetime. On the other hand, Riemannian manifolds with boundary and of bounded geometry turn out to be the natural framework to construct Sobolev spaces and to prove thereon a generalization of the standard Lions-Magenes trace theorems on $\mathbb{R}^n$. This last feature is of special interest to us since it allows us to classify and to characterize the self-adjoint extensions using the framework of boundary triples, \cite{Grubb68} and \cite[Ch. 6]{HSF12}.

In short and adapting it to the case at hand, the latter includes and even supersedes Von Neumann theory of deficiency indexes -- see \cite{Posilicano-18} for an equivalent approach based on resolvents.
It associates to the symmetric operator $A$ an auxiliary Hilbert space, in our case chosen as $\mathsf{h}\equiv L^2(\partial M, \textrm{d}\mu_{\iota^*_Mg})$, together with two linear, surjective maps $\gamma_i\colon D(A^*)\to\mathsf{h}$, $i=0,1$ where $A^*$ is the adjoint of $A$, so that a Green's formula holds true:
\begin{gather*}
(A^*f|f^\prime)_{\mathsf{H}}-(f|A^*f^\prime)_{\mathsf{H}}=
(\gamma_1f|\gamma_0f^\prime)_{\mathsf{h}}-(\gamma_0f|\gamma_1f^\prime)_{\mathsf{h}}\qquad
\forall f,f^\prime\in D(A^*).
\end{gather*}
In the case we are interested in, it turns out that $\gamma_0,\gamma_1$ can be built working at the level of Sobolev spaces over $M$ as combination between the above mentioned trace theorems and the normal derivative to $\partial M$.
As a by-product, one can apply a standard result from the theory of boundary triples according to which there is a one-to-one correspondence between the self-adjoint operators $\Theta$ on $\mathsf{h}$ and the self-adjoint extensions of $A$, denoted by $A_\Theta$.

By using these results and the spectral calculus for $A_\Theta$ we are able to construct explicitly, for each choice of $\Theta$, fundamental solutions $\mathcal{G}^\pm_\Theta$ associated to $\square$, proving in addition with an energy estimate that they enjoy the sought support properties.
In addition, we show that, for every $f\in\mathcal{D}(N)$, it holds that $\gamma_1(\mathsf{G}_\Theta^\pm(f))=\Theta\gamma_0(\mathsf{G}_\Theta^\pm(f))$, where $\mathsf{G}^\pm_\Theta$ are the advanced and retarded Green's operators associated with $\mathcal{G}^\pm_\Theta$.
This justifies our statement according to which fixing $\Theta$, and thus a pair of fundamental solutions $\mathcal{G}^\pm_\Theta$, is subordinated to a choice of boundary condition. Furthermore, under mild assumptions on the warping factor $\beta$, so to ensure that the operator $A$ is uniformly elliptic, we are able to prove two additional results which strongly characterize the advanced and retarded Green's operators. On the one hand, we show that $\mathsf{G}^\pm_\Theta[\mathcal{D}(\mathring{N})]\subset C^\infty(N)$, on the other hand we are able to enlarge the domain of $\mathsf{G}_\Theta\doteq\mathsf{G}^-_\Theta-\mathsf{G}^+_\Theta$ so to construct an exact sequence of linear maps, which characterize completely, for every admissible $\Theta$, the space of associated smooth solutions of the equation of motion ruled by $\square$. 

Finally we observe that, although the class of boundary conditions that we define in terms of the self-adjoint extensions of $A$ is rather large including for example those of Robin type, it does not encompass some cases which are often discussed in the literature. Most notably it might be desirable to allow an explicit time-dependence of the boundary condition, as it happens for example in those of Wentzell type, which have remarkable applications in several models, see \cite{Feller57,Ueno,Zahn18} and \cite{Gal} in particular. The last part of this work is devoted to an analysis of such scenario. We show that it is possible to extend the range of applicability of the framework of boundary triples proving in particular the existence of fundamental solutions for a larger class of boundary conditions, including those of Wentzell type.

\vskip .2cm

The structure of the paper is the following: In Section \ref{Sec:Geometric_data} we introduce the basic geometric data of this paper. In particular, we emphasize the notion of a static Lorentzian spacetime with timelike boundary and we outline the notion of manifold with boundary and of bounded geometry, reviewing its main properties. In Section \ref{Sec:Sobolev_spaces} we introduce Sobolev spaces on Riemannian manifolds of bounded geometry and we put a particular emphasis on the trace theorem as proven in \cite{GrosseSchneider}. In Section \ref{Sec:boundary_triples} we discuss the framework of boundary triples, first from an abstract point of view and then we specialize it to the case of a second-order, elliptic partial differential operators, making a direct connection to the theory of Sobolev spaces outlined in the previous section. In particular, we use boundary triples to characterize the self-adjoint extensions of symmetric operators with equal deficiency indexes. In Section \ref{Sec:Fundamental_Solutions} we obtain the main results of this work. Here we start from the D'Alembert wave operator on a static Lorentzian spacetime and we rewrite it in terms of an equivalent operator of the form $\square=-\partial^2_t+A$, where $A$ is a second-order, elliptic partial differential operator, symmetric on $L^2(M;\textrm{d}\mu_g)$. As a first step, by using spectral analysis, we prove the existence of advanced and retarded Green's operators for any choice of self-adjoint extension of $A$. Via a suitable boundary triple, these are in turn put into a one-to-one relation with the choice of a self-adjoint operator on $L^2(\partial M;\textrm{d}\mu_{i^*_Mg})$. Subsequently, we prove several structural properties of the fundamental solutions, ranging from the support  to the characterization of the relation between the choice of boundary condition for $\square$ and the self-adjoint extension of $A$. Furthermore, under additional mild conditions on the metric, we show that all smooth solutions to the wave equation can be written in terms of the advanced and retarded Green's operators which thus turn out to encompass as much information on the underlying operator ruling the dynamics as their natural counterpart on globally hyperbolic spacetimes. At last, in Section \ref{Sec:Dynamical_boundary_condition}, we extend the previous framework to account also for a larger class of time-dependent boundary condition, including in particular those of Wentzell type.

\subsection{Geometric data}\label{Sec:Geometric_data}

The goal of this section is to introduce both the geometric data and the key functional spaces at the heart of this work, fixing in particular the notations and conventions. With respect to the structure of Lorentzian manifolds with empty boundary we refer mainly to \cite{BEE}, while, for the case with a non-empty, timelike boundary, recent analyses are available in \cite{CGS,Solis}. It is noteworthy that our framework can be read as a special instance of that considered and studied in \cite{Ake-Flores-Sanchez-18}.

 Following the standard definition, see for example \cite[Ch. 1]{Lee}, in this paper both the symbols $M$ and $N$ refer to a smooth, second-countable, connected, manifold of dimension $m\geq 1$ ({\em resp.} $m+1$), with smooth boundary $\partial M$ ({\em resp.} $\partial N$). A point $p\in M$ ({\em resp.} $N$) such that there exists an open neighbourhood $U$ containing $p$, diffeomorphic to an open subset of $\mathbb{R}^m$ ({\em resp.} $\mathbb{R}^{m+1}$), is called an {\em interior point} and the collection of these points is indicated with $Int(M)\equiv\mathring{M}$ ({\em resp.} $Int(N)\equiv\mathring{N}$). As a consequence $\partial M$ ({\em resp.} $\partial N$), if non empty, can be read as a manifold on its own and $\partial M=M\setminus\mathring{M}$ ({\em resp.} $\partial N=N\setminus\mathring{N}$). 
 
\begin{Definition}\label{Def:Lorentzian_manifold}
	We say that $N$ is a {\bf Lorentzian manifold with timelike boundary} if it is oriented, time oriented and endowed with a smooth Lorentzian metric $h$ such that also $\iota_N^*h$ is a Lorentzian metric, $\iota_N$ being the embedding of $\partial N$ in $N$.
\end{Definition}

In the class of Lorentzian manifolds with timelike boundary $(N,h)$, we will be interested in those which are {\em standard static}, that is for which there exists a nowhere vanishing, irrotational, timelike Killing field $\chi\in\Gamma(TN)$, {\it cf.} \cite[Lemma 3.78]{BEE} and \cite{San06}, and $(N,h)$ is isometric to the warped product $M\times_\beta \mathbb{R}$, with line element 
\begin{equation}\label{eq:line_element}
h=-\beta \textrm{d}t^2+g\,,
\end{equation}
where $t\colon M\times_\beta \mathbb{R}\to\mathbb{R}$ is the projection along the second component, playing thus the role of time variable, $\beta\in C^\infty(M;(0,\infty))$ and $g$ identifies a time-independent Riemannian metric on each submanifold $\{t\}\times M$. As a consequence 

\begin{corollary}\label{cor:standard_static}
	Let $(N,h)$ be a standard static, Lorentzian manifold with timelike boundary. Then also $\partial N$ is a standard static Lorentzian manifold. 
\end{corollary}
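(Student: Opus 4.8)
The plan is to read the statement directly off the defining isometry $(N,h)\cong M\times_\beta\mathbb{R}$ and to show that it restricts to an isometry presenting $\partial N$ as a warped product of exactly the same shape. Concretely, since $\mathbb{R}$ has empty boundary one has $\partial(M\times\mathbb{R})=\partial M\times\mathbb{R}$, so the isometry identifying $N$ with $M\times_\beta\mathbb{R}$ maps $\partial N=N\setminus\mathring N$ onto $\partial M\times\mathbb{R}$, and pulling back the line element $h=-\beta\,\textrm{d}t^2+g$ along $\iota_N$ yields $\iota_N^*h=-(\iota_M^*\beta)\,\textrm{d}t^2+\iota_M^*g$, the point being that the time projection $t$ is unaffected by the restriction to the boundary while $g$ and $\beta$ are simply pulled back along $\iota_M$.

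Next I would check that the two ingredients in this expression are of the type required by the definition of a standard static line element: $\iota_M^*\beta\in C^\infty(\partial M;(0,\infty))$ because $\beta$ is everywhere strictly positive, and $\iota_M^*g$ is a Riemannian metric on $\partial M$ because the restriction of a positive-definite symmetric bilinear form to a submanifold is again positive definite. In particular $\iota_N^*h$ is automatically Lorentzian (consistently with $\partial N$ being a timelike boundary in the sense of Definition~\ref{Def:Lorentzian_manifold}), and $(\partial N,\iota_N^*h)$ is isometric to the warped product $\partial M\times_{\iota_M^*\beta}\mathbb{R}$; the vector field $\partial_t$, i.e. the restriction to $\partial N$ of the Killing field $\chi$, is then a nowhere vanishing, irrotational, timelike Killing field for $\iota_N^*h$, which is the remaining datum needed to call $\partial N$ standard static.

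It remains to transfer the global structures. Orientability of $\partial N$ follows from the standard fact that the boundary of an oriented manifold is canonically oriented, e.g. by the outward-normal-first convention, while time-orientability is provided precisely by the timelike Killing field $\chi|_{\partial N}$ just produced, which we declare future-directed, in accordance with the time orientation of $N$. I do not expect a real obstacle: the only delicate point is the compatibility of the boundary with the warped-product splitting, namely that $\chi$ is tangent to $\partial N$; besides being manifest in the coordinates of $M\times_\beta\mathbb{R}$, this can also be seen abstractly from the fact that the flow of $\chi$ is by isometries and therefore preserves the intrinsically defined set $\partial N=N\setminus\mathring N$, so that its generator is tangent to $\partial N$ and restricts there to a Killing field of $\iota_N^*h$.
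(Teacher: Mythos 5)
Your proof is correct and follows essentially the same route as the paper's: restrict the isometry $(N,h)\cong M\times_\beta\mathbb{R}$ to the boundary to obtain $\partial N\cong\partial M\times_{\beta|_{\partial M}}\mathbb{R}$ with line element $-\widetilde\beta\,\mathrm{d}t^2+\iota_M^*g$. You simply spell out a few details the paper leaves implicit (induced orientation, time-orientation, and tangency of the Killing field to $\partial N$), all of which are verified correctly.
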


\begin{proof}
	Per hypothesis $(N,h)$ is isometric to $M\times_\beta \mathbb{R}$ with line element \eqref{eq:line_element} where $\partial_t$ plays the role of the complete, irrotational, timelike, nowhere vanishing Killing field. Hence $\partial N$ is isometric to $\partial M\times_\beta\mathbb{R}$ and, calling $\iota_M:\partial M\to M$ the embedding map, \eqref{eq:line_element} reduces to $-\widetilde{\beta}\textrm{d}t^2+\iota^*_M(g)$, $\widetilde{\beta}\doteq\beta|_{\partial M}$. As a consequence $\partial N$ has the sought property.
\end{proof}

Observe that, with these data, $M$ comes equipped with an induced orientation and with a smooth Riemannian metric $g$, so that $(\partial M, \widetilde{g})$ is also an oriented Riemannian manifold if endowed with the induced orientation and metric, that is $\widetilde{g}\doteq \iota^*_M(g)$, $\iota_M$ being the embedding of $\partial M$ in $M$. In the class of Riemannian manifolds with non-empty boundary, we are interested in a particular subclass, distinguished by its geometric properties in a neighbourhood of $\partial M$. The following definitions were first given in \cite{AmmanGrosseNistor,GrosseNistor}, barring the next one, which is standard, {\it cf.} \cite{Eich91}:

\begin{Definition}\label{Def:bounded_geometry}
	A Riemannian manifold $(M,g)$ with $\partial M=\emptyset$ is called of {\bf bounded geometry} if the injectivity radius $r_{\textrm{inj}}(M)>0$ and if $TM$ is of {\em totally bounded curvature}, that is $\|\nabla^k R\|_{L^\infty(M)}<\infty$ for all $k\in\mathbb{N}\cup\{0\}$, $R$ being the scalar curvature and $\nabla$ the Levi-Civita connection associated with $g$.
\end{Definition}

\noindent To avoid the problem that $r_{\textrm{inj}}(M)$ vanishes whenever $\partial M\neq\emptyset$, one must first consider the following generalization:

\begin{Definition}\label{Def:bounded_geometry_submanifold}
	Let $(M,g)$ be a Riemannian manifold of bounded geometry and let $(Y,\iota_Y)$ be a codimension $k$ closed, embedded, smooth submanifold with an inward pointing, unit normal vector field $\nu_Y$. We say that $(Y,\iota^*_Y g)$ is a {\em bounded geometry submanifold} if the following holds:
	\begin{enumerate}
		\item the second fundamental form $K_Y$ of $Y$ in $M$ together with all its covariant derivatives along $Y$ is bounded,
		\item there exists $\epsilon_Y>0$ such that the map $\varphi_{\nu_Y}:Y\times(-\epsilon_Y,\epsilon_Y)\to M$ defined as $(x,z)\mapsto\varphi_{\nu_Y}(x,z)\doteq\exp_x(z\nu_{Y,x})$ is injective, where $\exp_x$ is the exponential map of $M$ at $x$.
	\end{enumerate}
\end{Definition} 

\noindent We observe that, as proven in \cite{AmmanGrosseNistor}, Definition \ref{Def:bounded_geometry_submanifold} entails that $(Y,\iota^*_Y g)$ is automatically a Riemannian manifold of bounded geometry. We introduce the class of Riemannian manifolds we will be working with in this paper:

\begin{Definition}\label{Def:bounded_geometry_boundary}
	Let $(M,g)$ be a Riemmannian manifold with non-empty boundary $\partial M$. We say that $(M,g)$ has {\bf bounded geometry} if there exists a Riemannian manifold of bounded geometry $(\widehat{M},\widehat{g})$ of the same dimension of $M$ such that 
	\begin{enumerate}
		\item $M\subset\widehat{M}$ and $g=\widehat{g}|_M$,
		\item $(\partial M,\iota^*\widehat{g})$ is a bounded geometry submanifold of $\widehat{M}$, where $\iota:\partial M\to\widehat{M}$ is the embedding map\footnote{Recall that, if we consider the restriction map $\textrm{res}:\widehat{M}\to M$, then $\textrm{res}\circ\iota=\iota$. Hence $\iota^*\widehat{g}=\iota^*g$.}.
	\end{enumerate}
\end{Definition}

\begin{remark}
	Definition \ref{Def:bounded_geometry_boundary} is equivalent to the original one of manifolds with boundary and of bounded geometry given by Schick in \cite{Schick}.
	We observe that, while the definition given in this last cited paper does not require any extrinsic data such as in particular $\widehat{M}$, all results obtained are independent from the choice of the latter. 
	
	 Since, from now on, we will be working only with Riemannian manifolds with non-empty boundary and of bounded geometry, we will drop the subscript $Y$ as in Definition \ref{Def:bounded_geometry_submanifold}, since we will be always referring to $\partial M$ as the embedded submanifold. In addition we call {\em geodesic collar} (of $\partial M$) the set $\partial M\times [0,\epsilon)$ such that the map $\varphi_\nu$ is a diffeomorphism onto its image and we define
	\begin{equation}\label{eq:geodesic_collar}
	\mathcal{GC}_\epsilon(M)\doteq\varphi_\nu[\partial M\times [0,\epsilon)].
	\end{equation}
\end{remark}

\begin{remark}
	We observe that all Riemannian manifolds with compact boundary meet the requirements of Definition \ref{Def:bounded_geometry_boundary}. At the same time one can also consider non-compact boundaries such as for example $\mathbb{H}^n$, the collection of all points $(x_1,...,x_n)\in\overline{\mathbb{R}_+}\times\mathbb{R}^{n-1}$ endowed with the Euclidean metric of $\mathbb{R}^n$.
\end{remark}

As a next step, if $(M,g)$ is a Riemannian manifold of bounded geometry, we can introduce a distinguished set of coordinates which are at the heart of the definition of Sobolev spaces and of the associated trace theorem proven in \cite{GrosseSchneider}. Here, we will recall only the basic structures and facts, leaving all details and proofs to \cite[Sec. 4.2]{AmmanGrosseNistor} and to \cite[Sec. 4.1]{GrosseSchneider}. Note that the following construction was used in \cite{Schick} though with the name of normal collar coordinates.

Let $(M,g)$ be a Riemannian manifold with boundary and of bounded geometry as per Definition \ref{Def:bounded_geometry_boundary}. For any $p\in\partial M$, we can choose any orthonormal basis of $T_p\partial M$ to identify it with $\mathbb{R}^{m-1}$, $m=\dim M$. From now on this identification will be left implicit. Since the injectivity radius of $\partial M$ is finite, for all $0<r<r_{\textrm{inj}}(\partial M)$, the exponential map $\exp_p^{\partial M}:\mathcal{B}^{m-1}(0)\to\mathcal{B}_r(p)$ is a diffeomorphism. Here $\mathcal{B}^{m-1}(0)$ stands for the ball of radius $r$ in $\mathbb{R}^{m-1}$ centered at $0$. By considering in addition the map $\varphi_\nu$ for $\partial M$ (see Definition \ref{Def:bounded_geometry_submanifold} and \eqref{eq:geodesic_collar}), whenever $0<r<\min\left\{\frac{r_{\textrm{inj}}(\partial M)}{2},\frac{r_{\textrm{inj}}(M)}{4},\frac{\epsilon}{2}\right\}$, we identify the following:
\begin{equation}\left\{ 
\begin{array}{lll}
\kappa_p:\mathcal{B}^{m-1}(0)\times [0,r) \to M, &\kappa_p(x,z)\doteq\varphi_\nu(\exp_p^{\partial M}(x),z), & p\in\partial M\\
\kappa_p:\mathcal{B}_r^m(0)\to M, & \kappa_p(v)\doteq\exp^M_p(v), & p\in\mathring{M}
\end{array}\right.,
\end{equation}
where $\mathcal{B}_r^m(0)$ indicates the ball of radius $r$ centered at the origin of $T_pM$, here implicitly identified with $\mathbb{R}^m$. Let $U_p(r)$ stand for the image in $M$ of the map $\kappa_p$, then we can define the following:

\begin{Definition}\label{Def:Fermi_coordinates}
	Let $(M,g)$ be a Riemannian manifold with boundary and of bounded geometry and let $0<r<\min\left\{\frac{r_{\textrm{inj}}(\partial M)}{2},\frac{r_{\textrm{inj}}(M)}{4},\frac{\epsilon}{2}\right\}$. We call {\em Fermi (or normal collar) chart} the map $\kappa_p:\mathcal{B}^{m-1}(0)\times[0,r)\to M$ for $p\in\partial M$. The ensuing coordinates $(x^i,z):U_p(r)\to\mathbb{R}^{m-1}\times [0,\infty)$, $i=1,...,m-1$ are called {\em Fermi (or normal collar) coordinates}.
\end{Definition}

If the point $p$ does not lie on the boundary of $M$, one can adapt straightforwardly this last definition to obtain the standard normal geodesic coordinates. Since we will not make use of them, we omit giving an explicit expression. 

To conclude the section, we study the interplay between Riemannian manifolds with boundary and of bounded geometry and standard static, Lorentzian manifolds with a timelike boundary. 

\begin{proposition}\label{prop:static_spacetime_with_boundary}
	Let $(M,g)$ be a Riemannian manifold with boundary and of bounded geometry and let $(\widehat{M},\widehat{g})$ be a Riemannian manifold of bounded geometry such that $M\subset\widehat{M}$ and $g=\widehat{g}|_M$. Then
	\begin{enumerate}
		\item Every $\beta\in C^\infty(M;(0,\infty))$ identifies an isometry class $[(N,h)]$ of standard static, Lorentzian manifolds with timelike boundary,
		\item If in addition there exists $\widehat{\beta}\in C^\infty(\widehat{M};(0,\infty))$ such that $\widehat{\beta}|_M=\beta$ and if $\frac{\widehat{g}}{\widehat{\beta}}$ identifies a complete Riemannian metric on $\widehat{M}$, then each representative $(N,h)$ of $[(N,h)]$ is a submanifold of a standard static, globally hyperbolic spacetime $(\widehat{N},\widehat{h})$. 
	\end{enumerate}
A manifold $(N,h)$ satisfying condition $1.$ of Proposition \ref{prop:static_spacetime_with_boundary} will be called a {\em static Lorentzian spacetime, with timelike boundary}.
\end{proposition}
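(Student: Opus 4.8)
The plan is to prove the two claims in sequence, treating the first as essentially a normalization/consistency statement and the second as the substantive one where global hyperbolicity must be extracted from the completeness of $\widehat g/\widehat\beta$. For claim 1, I would start from the given $\beta\in C^\infty(M;(0,\infty))$ and simply define $N\doteq M\times\mathbb{R}$ with the metric $h=-\beta\,\mathrm{d}t^2+g$ as in \eqref{eq:line_element}. One checks directly that $h$ is a Lorentzian metric: on $\mathring M\times\mathbb{R}$ this is immediate since $\beta>0$ and $g$ is Riemannian, and on $\partial M\times\mathbb{R}$ the pull-back $\iota_N^*h=-\widetilde\beta\,\mathrm{d}t^2+\widetilde g$ with $\widetilde\beta=\beta|_{\partial M}>0$ and $\widetilde g=\iota_M^*g$ is again Lorentzian, so $\partial N$ is timelike in the sense of Definition \ref{Def:Lorentzian_manifold} (this is exactly the content already observed around Corollary \ref{cor:standard_static}). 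The vector field $\partial_t$ is nowhere vanishing, timelike ($h(\partial_t,\partial_t)=-\beta<0$), and Killing because none of $\beta$, $g$ depends on $t$; irrotationality follows from $h(\partial_t,\cdot)=-\beta\,\mathrm{d}t$ being closed up to the conformal factor, i.e. $\partial_t^\flat\wedge \mathrm{d}\partial_t^\flat=0$. Orientation and time-orientation are induced by the orientation of $M$ together with the choice $\partial_t$ future-pointing. Finally one notes that two choices of orthonormal-type data produce isometric spacetimes, so what is canonically attached to $\beta$ is the isometry class $[(N,h)]$; this gives claim 1 and the terminal definition.

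For claim 2, I would build the ambient spacetime explicitly: set $\widehat N\doteq\widehat M\times\mathbb{R}$ with $\widehat h=-\widehat\beta\,\mathrm{d}t^2+\widehat g$. By construction $\widehat\beta|_M=\beta$ and $\widehat g|_M=g$, so $(N,h)$ with $N=M\times\mathbb{R}$ is isometrically a submanifold of $(\widehat N,\widehat h)$, and $\widehat N$ is standard static for the same reasons as above, now with $\partial\widehat M=\emptyset$ if $\widehat M$ is chosen closed, or at least with $M$ sitting in its interior away from any boundary of $\widehat M$. The remaining point is global hyperbolicity of $(\widehat N,\widehat h)$. Here I would invoke the standard characterization of global hyperbolicity for standard static spacetimes: a standard static spacetime $\widehat M\times_{\widehat\beta}\mathbb{R}$ with metric $-\widehat\beta\,\mathrm{d}t^2+\widehat g$ is globally hyperbolic if and only if the conformally rescaled Riemannian metric $\widehat g/\widehat\beta$ is complete — this is precisely the criterion of Kay, and of Sánchez \cite{San06} (see also \cite{BEE}), and it is exactly the hypothesis assumed. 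Concretely, the optical metric $\widehat g/\widehat\beta$ governs the spatial length of causal curves, and its completeness guarantees that $\{0\}\times\widehat M$ is a Cauchy hypersurface. Thus $(\widehat N,\widehat h)$ is globally hyperbolic, and $(N,h)$ embeds in it; since $\partial M$ is timelike in $\widehat M$ in the Riemannian sense (its normal $\nu$ is spacelike for $\widehat h$), $\partial N=\partial M\times\mathbb{R}$ is a timelike boundary inside the globally hyperbolic $\widehat N$, which is the assertion.

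The routine parts are the direct verifications that $h$ and $\widehat h$ are Lorentzian, that $\partial_t$ is an irrotational timelike Killing field, and that the embedding $N\hookrightarrow\widehat N$ is isometric — all of these are just unwinding the warped-product line element. The one genuine input, and the step I would flag as the crux, is the equivalence between completeness of the optical metric $\widehat g/\widehat\beta$ and global hyperbolicity of the associated standard static spacetime; everything else is bookkeeping. I would either cite \cite{San06} (and \cite{BEE}) for this equivalence verbatim, or, if a self-contained argument is wanted, reconstruct it by showing that a causal curve in $\widehat N$ projects to a curve in $\widehat M$ whose $\widehat g/\widehat\beta$-length is controlled by its $t$-elapse, so bounded-$t$ slabs have compact closure of the relevant causal diamonds, whence a Cauchy surface exists. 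The only subtlety to keep an eye on is making sure the auxiliary manifold $\widehat M$ can be taken with empty boundary (or with $M$ in its interior), so that $\widehat N$ really is globally hyperbolic without boundary in the usual sense; this is guaranteed by Definition \ref{Def:bounded_geometry_boundary}, which provides exactly such a $\widehat M$.
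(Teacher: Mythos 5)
Your proposal is correct and follows essentially the same route as the paper: construct $M\times_\beta\mathbb{R}$ with line element $-\beta\,\mathrm{d}t^2+g$ for claim 1, then embed it isometrically into $\widehat{M}\times_{\widehat{\beta}}\mathbb{R}$ and invoke the standard criterion (completeness of $\widehat{g}/\widehat{\beta}$, cf.\ \cite[Theorem 3.66]{BEE}) for global hyperbolicity of the ambient standard static spacetime. Your version merely spells out the routine verifications (Lorentzian signature, timelike boundary, the Killing field $\partial_t$) that the paper leaves implicit.
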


\begin{proof}
	Consider $M$ as per hypothesis and construct the warped product $M\times_\beta\mathbb{R}$ endowed with the line element $\textrm{d}s^2=-\beta \textrm{d}t^2+g$ where $t\colon M\times_\beta\mathbb{R}\to\mathbb{R}$ is the projection along the second component. Every manifold $(N,h)$ which is isometric to $M\times_\beta\mathbb{R}$ with the given metric is standard static, hence proving the first point. To prove the second statement, it suffices to observe that $M\times_\beta\mathbb{R}$ can be seen as being isometrically embedded in the standard static spacetime $\widehat{M}\times_{\widehat{\beta}}\mathbb{R}$ with line element $\textrm{d}s^2=-\widehat{\beta}\textrm{d}t^2 + \widehat{g}$. This manifold is globally hyperbolic on account of \cite[Theorem 3.66]{BEE}.
\end{proof}

\begin{remark}
	Observe that condition 2 in the last proposition is a constraint only on the admissible functions $\widehat{\beta}$. As a matter of fact, every Riemannian manifold of bounded geometry is metric complete \cite{Eich91} and thus $\frac{\widehat{g}}{\widehat{\beta}}$ is also complete if and only if $\widehat{\beta}$ behaves at most quadratically at infinity, {\it cf.} \cite[Rem. 2.2]{San06}.
\end{remark}

\subsection{Sobolev spaces on manifolds of bounded geometry}\label{Sec:Sobolev_spaces}

We shall introduce the functional spaces that we will be using in the next sections, as well as their main properties. We will be using most of the results from \cite{GrosseSchneider}.
In the following we consider $(\widehat{M},\widehat{g})$, a Riemannian manifold of bounded geometry such that $\partial\widehat{M}=\emptyset$.
The case with non empty boundary has been discussed mainly in \cite{AmmanGrosseNistor}.
With $\mathcal{D}(\widehat{M})$ we will indicate the space of smooth, compactly supported functions on $\widehat{M}$ endowed with the standard locally convex topology, while with $L^p(\widehat{M})\doteq L^p(\widehat{M},\textrm{d}\mu_{\widehat{g}})$, $p\in\mathbb{N}$ we consider the completion of $\mathcal{D}(\widehat{M})$ with respect to the $L^p$-norm constructed out of the metric induced volume form $\textrm{d}\mu_{\widehat{g}}$.
With $\mathcal{E}(\widehat{M})$ we will indicate the space of smooth functions on $\widehat{M}$ endowed with the standard locally convex topology.
With $\mathcal{D}^\prime(\widehat{M})$ we refer to the space of distributions, whose test functions are the elements of $\mathcal{D}(\widehat{M})$.

\begin{remark}\label{Rem:quotient_Spaces}
The same definitions apply, mutatis mutandis, to the case of $(M,g)$ being a Riemannian manifold with boundary and of bounded geometry, though in this case $\mathcal{D}(M)$ is replaced in the preceding and in the forthcoming discussion by $C_{me}(M)$, that is the equivalence classes of complex valued measurable functions over $M$. Observe that, in view of Definition \ref{Def:bounded_geometry_boundary}, we can replace $\mathcal{D}(M)$ also with  $\mathcal{D}(\widehat{M})/\{f\in\mathcal{D}(\widehat{M})|\; f|_M=0\}$, which is isomorphic to the former.
\end{remark}

In order to introduce Sobolev spaces we need suitable local charts. On the one hand, since every Riemannian manifold $(\widehat{M},\widehat{g})$ of bounded geometry is also complete, we can define the standard geodesic normal coordinates, whose associated atlas is indicated with $\big(U^{\textrm{geo}}_\beta,\kappa^{\textrm{geo}}_\beta\big)_{\beta\in J}$, $J$ being a suitable index set. If we let $\{h^{\textrm{geo}}_\beta\}_{\beta\in J}$ be a partition of unity subordinated to this cover we have identified the triple $\mathcal{T}^{\textrm{geo}}\doteq\big(U^{\textrm{geo}}_\beta,\kappa^{\textrm{geo}}_\beta, h_\beta\big)_{\beta\in J}$, which we will refer to as {\em geodesic trivialization} of $\widehat{M}$.

On the other hand we say that a cover $\{U_\alpha\}_{\alpha\in I}$ of $\widehat{M}$, $I$ being an index set, is {\em uniformly locally finite} if there exists $n\in\mathbb{N}$ such that each element of the cover is intersected by at most $n$ other sets of the cover. In addition, we consider 
\begin{enumerate}
	\item on each $U_\alpha$, $\alpha\in I$, local coordinates, that is a diffeomorphism $\kappa_\alpha: V_\alpha\to U_\alpha$, where $V_\alpha$ is an open subset of $\mathbb{R}^m$, $m=\dim M$,
	\item a partition of unity $\{h_\alpha\}_{\alpha\in I}$ subordinated to the cover $\{U_\alpha\}_{\alpha\in I}$.
\end{enumerate}
The triple $\mathcal{T}\doteq\left(U_\alpha,\kappa_\alpha,h_\alpha\right)_{\alpha\in I}$ is called a {\em uniformly locally finite trivialization of $\widehat{M}$}. In the collection of these trivializations, we select a distinguished subclass by the relation with geodesic coordinates:
\begin{Definition}\label{Def:admissible_trivialization}
	Let $(\widehat{M},\widehat{g})$ be a manifold of bounded geometry of dimension $m$. We call $\mathcal{T}$ a uniformly locally finite trivialization of $\widehat{M}$ {\em admissible} if the following two conditions are met:
	\begin{enumerate}
		\item The atlas $(U_\alpha,k_\alpha)_{\alpha\in I}$ built out of $\mathcal{T}$ is compatible with a geodesic atlas $(U_\beta^{\textrm{geo}},k_\beta^{\textrm{geo}})_{\beta\in J}$ of $\widehat{M}$, that is, for all $k\in\mathbb{N}\cup\{0\}$, there exists $C_k>0$ such that, for all $\alpha\in I$, for all $\beta\in J$ and for all multi-indices $\mathfrak{a}\in(\mathbb{N}\cup\{0\})^m$ with $|\mathfrak{a}|\leq k$,
		$$\left|D^{\mathfrak{a}}(\kappa_\alpha^{-1}\circ\kappa^{\textrm{geo}}_\beta)\right|\leq C_k\quad\textrm{and}\quad\left|D^{\mathfrak{a}}((\kappa_\beta^{\textrm{geo}})^{-1}\circ\kappa_\alpha)\right|\leq C_k,$$
		\item for all $k\in\mathbb{N}$, there exists $c_k>0$ such that, for all $\alpha\in I$ and for all multi-indices $\mathfrak{a}\in(\mathbb{N}\cup\{0\})^m$ with $|\mathfrak{a}|\leq k$, it holds
		$$\left|D^{\mathfrak{a}}(h_\alpha\circ\kappa_\alpha)\right|\leq c_k.$$
	\end{enumerate} 
\end{Definition}

\noindent From now on we shall only consider admissible, uniformly locally finite trivializations.

\begin{Definition}\label{Def:Sobolev_space}
	Let $(\widehat{M},\widehat{g})$ be a Riemannian manifold of bounded geometry of dimension $m$ and let $\mathcal{T}$ be an admissible, uniformly locally finite trivialization, and let $\mathcal{T}^{\textrm{geo}}$ be an associated geodesic trivialization. Then, for every $s\in\mathbb{R}$ and for every integer $1<p<\infty$, we call $H^{s,\mathcal{T}}_p(\widehat{M})$ the collection of all distributions $u\in\mathcal{D}^\prime(\widehat{M})$ such that
	$$\|u\|_{H^{s,\mathcal{T}}_p}\doteq\Bigg[\sum_{\alpha\in I}\|(h_\alpha u)\circ\kappa_\alpha\|^p_{H^s_p(\mathbb{R}^m)}\Bigg]^{\frac{1}{p}}<\infty,$$
	where $\|\cdot\|_{H^s_p(\mathbb{R}^m)}$ indicates the standard Sobolev norm on $_{H^s_p(\mathbb{R}^m)}$. Equivalently we define $H^{s,\mathcal{T}^{\textrm{geo}}}_p(\widehat{M})$ by replacing $\mathcal{T}$ with $\mathcal{T}^{\textrm{geo}}$.
\end{Definition}

The following proposition summarizes the results of \cite[Th. 3.9]{GrosseSchneider} and of Section 7.4.5 in \cite{Triebel_vol2}, see also \cite{Hebey}:

\begin{proposition}\label{Prop:Sobolev_space_equivalence}
	Let $(\widehat{M},\widehat{g})$ be a Riemannian manifold of bounded geometry and let $\mathcal{T},\mathcal{T}^{\textrm{geo}}$ be respectively a uniformly locally finite and a geodesic trivialization of $\widehat{M}$. Let $k\in\mathbb{N}\cup\{0\}$ and let
	$W^k_p(\widehat{M})$ be the completion of the subspace $\mathcal{E}^k_p(\widehat{M})\doteq\{f\in\mathcal{E}(\widehat{M})|\quad f,\nabla f,\ldots,\nabla^kf\in L^p(\widehat{M}))\}$ with respect to the norm 
	$$\|f\|_{W^k_p(\widehat{M})}\doteq\bigg(\sum_{i=0}^k\|\nabla^i f\|^p_{L^p(\widehat{M})}\bigg)^{\frac 1p}\,,$$
	$\nabla$ being the covariant derivative built out of the metric $\widehat{g}$. Then it holds that, for all $s\in\mathbb{R}$ and for all integer $p$ such that $1<p<\infty$, neither $H^{s,\mathcal{T}}_p(\widehat{M})$ nor $H^{s,\mathcal{T}^{\textrm{geo}}}_p(\widehat{M})$ depend on the choice of the trivialization. Hence $H^s_p(\widehat{M})\equiv H^{s,\mathcal{T}}_p(\widehat{M})=H^{s,\mathcal{T}^{\textrm{geo}}}_p(\widehat{M})$. In addition, if $s\in\mathbb{N}\cup\{0\}$, it holds that $W^s_p(\widehat{M})=H^s_p(\widehat{M})$.
\end{proposition}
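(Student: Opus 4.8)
The proposition is the transcription to the present setting of \cite[Th.~3.9]{GrosseSchneider} and of Section~7.4.5 in \cite{Triebel_vol2}, so the plan is to explain how those results apply and where bounded geometry enters. Everything reduces to one assertion: \emph{any two admissible, uniformly locally finite trivializations $\mathcal{T}_1=(U_\alpha,\kappa_\alpha,h_\alpha)_{\alpha\in I}$ and $\mathcal{T}_2=(U_\beta',\kappa_\beta',h_\beta')_{\beta\in J}$ of $\widehat M$ induce equivalent norms on $\mathcal{D}'(\widehat M)$}, i.e. $\|\cdot\|_{H^{s,\mathcal{T}_1}_p}\asymp\|\cdot\|_{H^{s,\mathcal{T}_2}_p}$. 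Granting this, $H^{s,\mathcal{T}}_p(\widehat M)$ does not depend on $\mathcal{T}$; moreover a geodesic trivialization is itself an admissible, uniformly locally finite one, because a cover by geodesic balls of a fixed radius $r<r_{\textrm{inj}}(\widehat M)$ centred at a maximal $r/2$-separated set is uniformly locally finite by volume comparison (available since $\widehat M$ has totally bounded curvature), condition~1 of Definition~\ref{Def:admissible_trivialization} holds because transition maps between overlapping geodesic charts have uniformly bounded derivatives of every order (\emph{cf.}~\cite{Eich91,Schick}), and a subordinate partition of unity satisfying condition~2 can be built adapted to the geodesic charts. Hence $H^{s,\mathcal{T}}_p(\widehat M)=H^{s,\mathcal{T}^{\textrm{geo}}}_p(\widehat M)$ and the latter is also independent of the chosen geodesic trivialization.

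To establish the equivalence of norms I would argue locally and re-sum. Given $u\in\mathcal{D}'(\widehat M)$ and $\alpha\in I$, write $h_\alpha u=\sum_{\beta}h_\beta'\,h_\alpha u$, a finite sum by uniform local finiteness, transport it to $\mathbb{R}^m$ via $\kappa_\alpha$, and for each $\beta$ insert the transition diffeomorphism $\kappa_\alpha^{-1}\circ\kappa_\beta'$. Two facts about $H^s_p(\mathbb{R}^m)$ are needed: composition with a diffeomorphism whose derivatives up to a finite order are bounded is a bounded operator on $H^s_p(\mathbb{R}^m)$, and multiplication by a smooth function with bounded derivatives is a bounded pointwise multiplier on $H^s_p(\mathbb{R}^m)$; the operator norms are controlled by conditions~1 and~2 of Definition~\ref{Def:admissible_trivialization}. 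For non-integer or negative $s$ these are the results of \cite[\S 7.4.5]{Triebel_vol2}, obtained through the Fourier-multiplier and interpolation calculus for Bessel potential spaces, while for $s\in\mathbb{N}\cup\{0\}$ they follow from the chain and Leibniz rules. Summing over $\alpha$, and using again that each $U_\beta'$ meets at most a uniformly bounded number of the $U_\alpha$, gives $\|u\|_{H^{s,\mathcal{T}_1}_p}\lesssim\|u\|_{H^{s,\mathcal{T}_2}_p}$; exchanging $\mathcal{T}_1$ and $\mathcal{T}_2$ yields the reverse bound. This proves $H^s_p(\widehat M)\equiv H^{s,\mathcal{T}}_p(\widehat M)=H^{s,\mathcal{T}^{\textrm{geo}}}_p(\widehat M)$.

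For the last statement I would compute the $H^s_p$-norm, $s\in\mathbb{N}\cup\{0\}$, in a geodesic trivialization. In geodesic normal coordinates around $p\in\widehat M$ of radius $r<r_{\textrm{inj}}(\widehat M)$, the metric coefficients, the inverse metric, the density $\sqrt{\det g}$ and the Christoffel symbols, together with all their derivatives, are bounded above and away from zero uniformly in $p$ by bounded geometry. Hence, for $f\in\mathcal{E}^s_p(\widehat M)$, the covariant derivatives $\nabla^if$, $0\le i\le s$, are in each chart linear combinations of the partial derivatives $\partial^{\mathfrak a}(f\circ\kappa_\beta^{\textrm{geo}})$ with $|\mathfrak a|\le i$ and uniformly bounded coefficients, and conversely; together with the comparability of the intrinsic $L^p$-norm with the chart $L^p$-norms (again by the uniform density bounds and the partition of unity) this gives $\|f\|_{W^s_p(\widehat M)}\asymp\|f\|_{H^{s,\mathcal{T}^{\textrm{geo}}}_p(\widehat M)}$, where in passing from $W^s$ to $H^s$ on $\mathbb{R}^m$ one uses the Mihlin multiplier theorem. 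Since $\mathcal{E}^s_p(\widehat M)$ contains $\mathcal{D}(\widehat M)$, which is dense in $H^s_p(\widehat M)$, and the two norms are equivalent on it, the completion of $\mathcal{E}^s_p(\widehat M)$ equals the closure of $\mathcal{D}(\widehat M)$ in $H^s_p(\widehat M)$, namely $H^s_p(\widehat M)$; hence $W^s_p(\widehat M)=H^s_p(\widehat M)$.

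The crux is the harmonic analysis behind the second paragraph — the diffeomorphism invariance and the pointwise-multiplier property of $H^s_p(\mathbb{R}^m)$ for arbitrary real $s$ — together with the bookkeeping that keeps every constant uniform across the infinitely many charts. This is exactly where positive injectivity radius, totally bounded curvature, admissibility of the trivialization (Definition~\ref{Def:admissible_trivialization}) and uniform local finiteness are indispensable, and it is the part carried out in detail in \cite[Th.~3.9]{GrosseSchneider} and in \cite{Triebel_vol2}.
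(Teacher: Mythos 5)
The paper offers no proof of this proposition at all: it is stated explicitly as a summary of \cite[Th.~3.9]{GrosseSchneider} and \cite[\S 7.4.5]{Triebel_vol2}, and your sketch is a faithful and correct reconstruction of the argument carried out in those references (norm equivalence of admissible trivializations via uniform diffeomorphism-invariance and pointwise-multiplier bounds on $H^s_p(\mathbb{R}^m)$, admissibility of geodesic trivializations from bounded geometry, and the integer-order identification $W^s_p=H^s_p$ in normal coordinates). No gaps worth flagging beyond the implicit use of the density of $\mathcal{D}(\widehat M)$ in $H^s_p(\widehat M)$, which is itself a standard consequence of completeness and bounded geometry.
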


\begin{remark}\label{Remark: Sobolev spaces on manifold with boundary}
In the preceding discussion we have considered only manifolds without boundary.
Nonetheless it is possible to extend or to adapt all definitions also to any Riemannian manifold $(M,g)$ with boundary and of bounded geometry using Definition \ref{Def:Sobolev_space}.
A detailed discussion has been given especially in \cite[Sec. 5.1]{AmmanGrosseNistor}.
In particular we observe that Fermi coordinates as per Definition \ref{Def:Fermi_coordinates} can be completed to define an admissible trivialization out of which it is possible to define $H^{s,\mathcal{T}}_p(M)$ for all $s\in\mathbb{R}$ and for all integer values of $p$ such that $1<p<\infty$.
Most notably, it holds, also in this case, that $H^s_p(M)=W^s_p(M)$ for all $s\in\mathbb{N}\cup\{0\}$.

Whenever a boundary is present, one can introduce the subspace $H^s_{0,p}(M)\subset H^s_p(M)$ defined as the completion of $\mathcal{D}(M)$ with respect to the $H^s_p(M)$-norm.
Observe that, whenever $M$ is metric complete ({\it e.g.} if $M$ is a Riemannian manifold of bounded geometry), $H^s_{0,p}(M)=H^s_p(M)$, while the inclusion is strict in general. 
\end{remark}

\noindent To conclude this section we state the trace theorem, as proven in \cite[Th. 4.10 \& Cor. 4.12]{GrosseSchneider}, though specialized to the case of our interest:

\begin{theorem}\label{Th:trace}
Let $(M,g)$ be a Riemannian manifold with boundary and of bounded geometry. Let $s\in\mathbb{R}$ and let $H^s(M)\doteq H^s_2(M)$, $H^s_0(M)\doteq H^2_{0,2}(M)$ be the Sobolev spaces as per Definition \ref{Def:Sobolev_space} and per Proposition \ref{Prop:Sobolev_space_equivalence}, see also Remark \ref{Remark: Sobolev spaces on manifold with boundary}. Then, if $s>\frac{1}{2}$, the restriction map from $\mathcal{D}(M)$ to $\mathcal{D}(\partial M)$ extends to a continuous surjective map 
$$\Gamma:H^s(M)\to H^{s-\frac{1}{2}}(\partial M)\,.$$
If $s\geq 1$, $\ker\Gamma=H^s(M)\cap H^1_0(M)$.
%If $s=1$, then $\ker\Gamma=W^1_{2,0}(M)$ that is the completion of $\mathcal{D}(M)$ with respect to the norm $\mathcal{D}(M)\ni f\to\|f\|_{W^1_{2,0}(M)}\doteq\sum_{i=0}^1\|\nabla^i f\|_{L^2(M)}$.
\end{theorem}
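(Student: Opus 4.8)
The statement is the $p=2$ instance of the trace theorem of \cite{GrosseSchneider}, so the strategy I would follow is the classical one: reduce everything to the Lions--Magenes trace theorem on the Euclidean half-space $\mathbb{H}^m\doteq\mathbb{R}^{m-1}\times[0,\infty)$ and transport it to $M$ by localizing in Fermi coordinates, the uniform bounds built into the bounded geometry hypothesis being exactly what makes the patching work. By Proposition \ref{Prop:Sobolev_space_equivalence} and Remark \ref{Remark: Sobolev spaces on manifold with boundary} one may fix an admissible, uniformly locally finite trivialization $\mathcal{T}=(U_\alpha,\kappa_\alpha,h_\alpha)_{\alpha\in I}$ in which the charts meeting $\partial M$ are Fermi charts (Definition \ref{Def:Fermi_coordinates}) with image inside the geodesic collar $\mathcal{GC}_\epsilon(M)$ and the remaining charts lie in $\mathring M$; then $\|u\|_{H^s(M)}$ is equivalent to $\big(\sum_\alpha\|(h_\alpha u)\circ\kappa_\alpha\|^2_{H^s_2(\mathbb{R}^m)}\big)^{1/2}$. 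Restricting the Fermi charts to $\{z=0\}$ produces an admissible trivialization $\mathcal{T}^\partial$ of $(\partial M,\widetilde g)$ --- this uses that $(\partial M,\widetilde g)$ is itself of bounded geometry, by Definition \ref{Def:bounded_geometry_submanifold} --- with the analogous equivalence of $\|\varphi\|_{H^{s-1/2}(\partial M)}$ with $\big(\sum_\alpha\|(h_\alpha\varphi)\circ(\kappa_\alpha|_{z=0})\|^2_{H^{s-1/2}_2(\mathbb{R}^{m-1})}\big)^{1/2}$.

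\emph{Continuity of $\Gamma$} then follows by summation: charts disjoint from $\partial M$ carry no trace, while on each Fermi chart $(h_\alpha u)\circ\kappa_\alpha\in H^s_2(\mathbb{H}^m)$ and the Euclidean half-space trace estimate $\|w|_{z=0}\|_{H^{s-1/2}_2(\mathbb{R}^{m-1})}\le C\|w\|_{H^s_2(\mathbb{H}^m)}$ ($s>1/2$) holds with a constant $C$ independent of $\alpha$; squaring, summing over $\alpha$, and invoking the two norm equivalences yields $\|\Gamma u\|_{H^{s-1/2}(\partial M)}\le C'\|u\|_{H^s(M)}$. Since the resulting map agrees with the naive restriction on the dense subspace $\mathcal{D}(M)$, it is the desired continuous extension.

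\emph{Surjectivity} I would obtain by constructing a bounded right inverse. Fix a bounded Euclidean extension operator $E\colon H^{s-1/2}_2(\mathbb{R}^{m-1})\to H^s_2(\mathbb{H}^m)$ with $(E\psi)|_{z=0}=\psi$, multiply it by a cut-off $\chi(z)$ supported in $[0,r)$ and equal to $1$ near $z=0$ (so $\chi\cdot E$ is still bounded and still a right inverse of the trace), and, given $\varphi\in H^{s-1/2}(\partial M)$, set $u\doteq\sum_\alpha\kappa_\alpha^*\big(\chi\cdot E[(h_\alpha\varphi)\circ(\kappa_\alpha|_{z=0})]\big)$, the $\alpha$-th summand being supported in $U_\alpha$. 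The estimate $\|u\|^2_{H^s(M)}\lesssim\sum_\alpha\|u_\alpha\|^2_{H^s_2}\lesssim\sum_\alpha\|(h_\alpha\varphi)\circ(\kappa_\alpha|_{z=0})\|^2_{H^{s-1/2}_2}\lesssim\|\varphi\|^2_{H^{s-1/2}(\partial M)}$ holds with uniform constants precisely because the cover is uniformly locally finite and the chart transitions and the second fundamental form are uniformly bounded; the same uniform local finiteness allows $\Gamma$ to be commuted with the sum, giving $\Gamma u=\sum_\alpha h_\alpha\varphi=\varphi$. I expect this to be the main obstacle of the proof: one must set up the trivializations of $M$ and of $\partial M$ so that the two partitions of unity genuinely correspond under $\kappa_\alpha|_{z=0}$, and one must check that every constant entering --- from the half-space trace and extension and from the changes of chart --- is bounded uniformly in $\alpha$, which is exactly what the bounds in Definitions \ref{Def:bounded_geometry_submanifold} and \ref{Def:admissible_trivialization} are there to supply.

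\emph{The kernel.} For $s\ge 1$ the trace operators at orders $s$ and $1$ are consistent: $\mathcal{D}(M)$ is dense in $H^s(M)$ (Remark \ref{Remark: Sobolev spaces on manifold with boundary}) and $H^{s-1/2}(\partial M)\hookrightarrow H^{1/2}(\partial M)$ continuously, so both maps extend the same restriction and must coincide on $H^s(M)$. Thus $\Gamma$ annihilates $H^s(M)\cap H^1_0(M)$, whose elements are $H^1$-limits of functions compactly supported in $\mathring M$, on which the restriction vanishes. Conversely, if $u\in H^s(M)$ with $\Gamma u=0$, consistency reduces the claim to $s=1$, i.e.\ to showing that an $H^1(M)$ function with vanishing $H^{1/2}$-trace lies in $H^1_0(M)$. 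Localizing with the Fermi trivialization, each $(h_\alpha u)\circ\kappa_\alpha$ is an $H^1_2(\mathbb{H}^m)$ function with zero trace, hence lies in the $H^1_2$-closure of $C^\infty_c(\mathbb{R}^{m-1}\times(0,\infty))$ by the standard half-space fact; approximating each summand there, pulling back through $\kappa_\alpha$, and resumming --- convergence again guaranteed by uniform local finiteness --- produces an $H^1(M)$-approximation of $u$ by functions compactly supported in $\mathring M$, so $u\in H^1_0(M)$ and therefore $u\in H^s(M)\cap H^1_0(M)$. (One cannot replace $H^s(M)\cap H^1_0(M)$ here by $H^s_0(M)$ once $s>3/2$, since then vanishing of the zeroth-order trace no longer forces the normal derivative trace to vanish.)
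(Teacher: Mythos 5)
The paper does not actually prove this theorem: it is quoted directly from \cite[Th. 4.10 \& Cor. 4.12]{GrosseSchneider}, and your sketch reconstructs precisely the strategy of that reference --- localization in Fermi (normal collar) charts of an admissible, uniformly locally finite trivialization, the Euclidean half-space trace, extension and zero-trace facts, and uniformity of all constants supplied by the bounded-geometry hypotheses. Your argument is correct as an outline, including the closing observation that the kernel statement cannot be strengthened to $H^s_0(M)$ once $s>\tfrac32$.
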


\noindent Observe that, in view of this last theorem, one can read the elements of $H^s_{0,p}(M)$, $s\in\mathbb{N}$ as those of $H^s_p(M)$ whose representatives are functions whose derivatives with order less than $s$ have vanishing trace on $\partial M$.

\section{Boundary triples and their application to second-order elliptic differential operator}\label{Sec:boundary_triples}

In this section, we consider second-order, elliptic differential operators on a Riemannian manifold with boundary and of bounded geometry, and characterize their self-adjoint extensions. A convenient framework for addressing this question is that of {\em boundary triples}, a thoroughly analysed topic of which we recall the main aspects following \cite{BL12} and references therein. 

\subsection{Boundary triples}

In this section $\mathsf{H}$ indicates a separable Hilbert space over $\mathbb{C}$ while $S\colon D(S)\subset\mathsf{H}\to\mathsf{H}$ is a closed, symmetric linear operator. 
\begin{Definition}\label{Definition: boundary triples}
A {\em boundary triple} for the adjoint operator $S^*$ is a triple $(\mathsf{h},\gamma_0,\gamma_1)$ consisting of a separable Hilbert space $\mathsf{h}$ over $\mathbb{C}$ and two linear maps $\gamma_0,\gamma_1\colon D(S^*)\to\mathsf{h}$ such that
\begin{gather}\label{Equation: boundary equation for boundary triples}
(S^*f|f^\prime)_{\mathsf{H}}-(f|S^*f^\prime)_{\mathsf{H}}=
(\gamma_1f|\gamma_0f^\prime)_{\mathsf{h}}-(\gamma_0f|\gamma_1f^\prime)_{\mathsf{h}}\qquad
\forall f,f^\prime\in D(S^*)\,.
\end{gather}
and the map $\gamma:D(S^*)\to\mathsf{h}\times\mathsf{h}$, defined by $\gamma(f)=(\gamma_0(f),\gamma_1(f))$ for all $f\in D(S^*)$, is surjective.
\end{Definition}

\begin{Example}\label{Ex:ordinary_boundary_triple}
A canonical example of boundary triple, sometimes termed ordinary boundary triple, can be constructed if $S$ has equal and finite deficiency indices $d_\pm(S)=\dim(\mathcal{N}_\pm(S^*))<\infty$, $\mathcal{N}_\pm(S^*)\doteq\ker(S^*\pm i\mathbb{I})$, hence admitting self-adjoint extensions. In this case, letting $V\colon\mathcal{N}_-(S^*)\to\mathcal{N}_+(S^*)$ be any, but fixed unitary operator, one can set $\mathsf{h}=\mathcal{N}_+(S^*)$. Since $D(S^*)=D(S)\oplus_S\mathcal{N}_+(S^*)\oplus_S\mathcal{N}_-(S^*)$ \footnote{The orthogonal decomposition $\oplus_S$ refers to the scalar product on $D(S^*)$ defined by $(f\,|\,f')_S:=(f\,|\,f')+(S^*f\,|\,S^*f')\,.$}, one can introduce the projection maps $\pi_\pm:D(S^*)\to\mathcal{N}_\pm(S^*)$, defining 
$\gamma_0\doteq \pi_+ - V\circ\pi_-$ and $\gamma_1\doteq i(\pi_+ + V\circ\pi_-)$. Observe that, since $V$ is a bijection, the map $\gamma$ as in Definition \ref{Definition: boundary triples} is automatically surjective.
Finally, equation \eqref{Equation: boundary equation for boundary triples} follows by the identity
$$(S^*f|f^\prime)-(f|S^*f^\prime)=2i\big[(\pi_-f|\pi_-f^\prime)_{\mathsf{H}}-(\pi_+f|\pi_+f^\prime)_{\mathsf{H}}\big]\,,$$
which holds true for all $f,f^\prime\in D(S^*)$.
\end{Example}

\noindent Boundary triples allow us to characterize the self-adjoint extensions of $S$ in terms of ``boundary conditions'' on $\mathsf{h}$. As a starting point one can observe, that, under the hypotheses of Definition \ref{Definition: boundary triples}, one can single out two distinguished self-adjoint extensions of $S$, namely
\begin{equation}\label{eq:distinguished_self_adjoin}
S_0\doteq\left.S^*\right|_{\ker\gamma_0},\quad S_1\doteq\left.S^*\right|_{\ker\gamma_1},
\end{equation}
which are transversal since $D(S)=D(S_0)\cap D(S_1)$ and $D(S^*)=D(S_0)+D(S_1)$, $+$ standing for the algebraic sum. The following proposition shows how to construct all other sought extensions, {\it cf.} \cite{Malamud} for a proof:

\begin{proposition}\label{Proposition: characterization of self adjoint extension via boundary triples}
Let $S$ be a closed, symmetric operator on $\mathsf{H}$. Then an associated boundary triple $(\mathsf{h},\gamma_0,\gamma_1)$ exists if and only if $S^*$ has equal deficiency indices. In addition, if $\Theta\colon D(\Theta)\subseteq\mathsf{h}\to\mathsf{h}$ is a closed and densely defined linear operator, then $S_\Theta\doteq S^*|_{\ker(\gamma_1-\Theta\gamma_0)}$ is a closed extension of $S$ with domain 
$$D(S_\Theta)\doteq\{f\in D(S^*)\;|\;\gamma_0(f)\in D(\Theta),\;\gamma_1(f)=\Theta\gamma_0(f)\}\,.$$
The map $\Theta\mapsto S_\Theta$ is one-to-one and  $S_\Theta^*=S_{\Theta^*}$, that is, it restricts to a one-to-one map from self-adjoint operators $\Theta$ to self-adjoint extensions of $S$.
\end{proposition}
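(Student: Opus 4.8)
The plan is to prove Proposition \ref{Proposition: characterization of self adjoint extension via boundary triples} in three stages: first the equivalence between the existence of a boundary triple and the equality of deficiency indices, then the description of the domain of $S_\Theta$ and the closedness of $S_\Theta$ as an extension of $S$, and finally the injectivity of $\Theta\mapsto S_\Theta$ together with the adjoint identity $S_\Theta^*=S_{\Theta^*}$. For the forward direction of the equivalence, I would observe that if a boundary triple $(\mathsf{h},\gamma_0,\gamma_1)$ exists then the two transversal extensions $S_0$ and $S_1$ in \eqref{eq:distinguished_self_adjoin} are self-adjoint (this is standard: apply \eqref{Equation: boundary equation for boundary triples} to $f,f'\in\ker\gamma_0$ to get symmetry of $S_0$, then use surjectivity of $\gamma$ to show $D(S_0^*)\subseteq D(S_0)$), and the existence of one self-adjoint extension forces $d_+(S)=d_-(S)$ by von Neumann theory. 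For the converse, I would simply invoke Example \ref{Ex:ordinary_boundary_triple} when the indices are finite and, in the general (possibly infinite) case, adapt the same construction using a unitary $V:\mathcal{N}_-(S^*)\to\mathcal{N}_+(S^*)$, which exists precisely when $d_+(S)=d_-(S)$; the decomposition $D(S^*)=D(S)\oplus_S\mathcal{N}_+(S^*)\oplus_S\mathcal{N}_-(S^*)$ and the polarization identity displayed in that Example give \eqref{Equation: boundary equation for boundary triples}.

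Next, for the domain identification, note that $f\in\ker(\gamma_1-\Theta\gamma_0)$ literally means $\gamma_1 f-\Theta\gamma_0 f=0$, which is only meaningful when $\gamma_0 f\in D(\Theta)$ and then forces $\gamma_1 f=\Theta\gamma_0 f$; this is exactly the stated $D(S_\Theta)$. That $S_\Theta$ extends $S$ follows from $D(S)=D(S_0)\cap D(S_1)\subseteq\ker\gamma_0\cap\ker\gamma_1$, so on $D(S)$ both $\gamma_0$ and $\gamma_1$ vanish and the defining condition is trivially satisfied (using $0\in D(\Theta)$ since $\Theta$ is linear). Closedness of $S_\Theta$ is obtained by writing it as the preimage under the bounded-below map $f\mapsto(f,S^*f)$ of a closed set: if $f_n\to f$ and $S^*f_n\to g$ in $\mathsf{H}$ with $f_n\in D(S_\Theta)$, then $f\in D(S^*)$ and $S^*f=g$ since $S^*$ is closed, while continuity of $\gamma_0,\gamma_1$ with respect to the graph norm $\|\cdot\|_S$ (a general property of boundary triples, provable from the closed graph theorem once one knows $\gamma$ is bounded, or taken as known from the cited literature) gives $\gamma_0 f_n\to\gamma_0 f$, $\gamma_1 f_n\to\gamma_1 f$; since $\Theta$ is closed and $\gamma_1 f_n=\Theta\gamma_0 f_n$, one concludes $\gamma_0 f\in D(\Theta)$ and $\gamma_1 f=\Theta\gamma_0 f$, i.e. $f\in D(S_\Theta)$.

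For the adjoint identity, the key computation is: for $f\in D(S_\Theta)$ and $f'\in D(S^*)$, the Green identity \eqref{Equation: boundary equation for boundary triples} gives $(S^*f|f')-(f|S^*f')=(\gamma_1 f|\gamma_0 f')-(\gamma_0 f|\gamma_1 f')=(\Theta\gamma_0 f|\gamma_0 f')-(\gamma_0 f|\gamma_1 f')$. Thus $f'\in D(S_\Theta^*)$ if and only if $f\mapsto(\Theta\gamma_0 f|\gamma_0 f')-(\gamma_0 f|\gamma_1 f')$ is continuous in $f$ for the $\mathsf{H}$-norm; using the surjectivity of $\gamma$ (so that $\gamma_0 f$ ranges over a dense-enough subset, namely all of $D(\Theta)$, as $f$ ranges over $D(S_\Theta)$, while $\gamma_1 f$ is then determined), this condition rearranges to $(\Theta\varphi|\gamma_0 f')=(\varphi|\gamma_1 f')$ for all $\varphi\in D(\Theta)$, which is exactly $\gamma_0 f'\in D(\Theta^*)$ and $\gamma_1 f'=\Theta^*\gamma_0 f'$, i.e. $f'\in D(S_{\Theta^*})$. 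Injectivity of $\Theta\mapsto S_\Theta$ then follows: from $S_\Theta$ one recovers the graph of $\Theta$ as $\{(\gamma_0 f,\gamma_1 f):f\in D(S_\Theta)\}$, and surjectivity of $\gamma$ guarantees every point of the graph of $\Theta$ is attained. Restricting to self-adjoint $\Theta$, $S_\Theta^*=S_{\Theta^*}=S_\Theta$ shows $S_\Theta$ is self-adjoint, and conversely any self-adjoint extension $\widetilde S$ of $S$ satisfies $S\subseteq\widetilde S\subseteq S^*$, hence is of the form $S_\Theta$ for $\Theta$ defined on $\gamma_0(D(\widetilde S))$ by $\gamma_1=\Theta\gamma_0$ (well-defined because $\ker\gamma_0\cap D(\widetilde S)\subseteq\ker\gamma_1$, which in turn follows from self-adjointness of $\widetilde S$ and $S_0$ and the Green identity). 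The main obstacle I anticipate is the careful bookkeeping around surjectivity of $\gamma$ in the infinite-dimensional case — in particular ensuring that the rearrangement in the adjoint computation is valid, i.e. that as $f$ runs over $D(S_\Theta)$ the value $\gamma_0 f$ runs over all of $D(\Theta)$ (not just a subspace), which is where surjectivity of $\gamma$ is genuinely used rather than mere density; everything else is either quoted from \cite{Malamud} or a routine application of closedness of $S^*$, $\Theta$, and continuity of the boundary maps.
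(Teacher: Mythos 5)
The paper itself offers no proof of this proposition; it is quoted from the literature (``{\it cf.} \cite{Malamud} for a proof''), so there is nothing in-text to compare against. Your argument is the standard boundary-triple proof and, for everything the proposition actually asserts, it is essentially sound: the equivalence with equal deficiency indices via Example \ref{Ex:ordinary_boundary_triple} and the self-adjointness of $S_0$, the identification of $D(S_\Theta)$, the closedness of $S_\Theta$ via graph-norm continuity of $\gamma$ (which follows from the closed graph theorem once one shows $\gamma$ is \emph{closed} from $(D(S^*),\|\cdot\|_S)$ to $\mathsf{h}\times\mathsf{h}$ — your phrasing ``once one knows $\gamma$ is bounded'' is circular), and the computation $S_\Theta^*=S_{\Theta^*}$. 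In that last computation you should make explicit the step from ``the boundary functional is $\mathsf{H}$-continuous'' to ``it vanishes identically'': the functional factors through $\gamma_0 f$, is therefore constant on cosets $f+D(S)$ with $D(S)\subseteq\ker\gamma_0$ dense in $\mathsf{H}$, and a continuous functional with that property is zero. This is where density of $D(S)$, and not merely surjectivity of $\gamma$, is genuinely used.

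The one incorrect step is the final ``conversely'' clause, where you assert that \emph{every} self-adjoint extension $\widetilde S$ of $S$ is of the form $S_\Theta$, justified by $\ker\gamma_0\cap D(\widetilde S)\subseteq\ker\gamma_1$. That inclusion does not follow from self-adjointness and the Green identity, and it is false in general: for $\widetilde S=S_0=S^*|_{\ker\gamma_0}$ one has $\ker\gamma_0\cap D(S_0)=D(S_0)$, while $\gamma_1|_{\ker\gamma_0}$ is surjective onto $\mathsf{h}$ (by surjectivity of $\gamma$), so $D(S_0)\not\subseteq\ker\gamma_1$ unless $S=S_0$. Indeed $S_0$ is a self-adjoint extension corresponding to no operator $\Theta$ — precisely the ``degenerate scenario which has to be included by hand'' that the paper notes immediately after the proposition; the genuine bijection is with self-adjoint linear \emph{relations} on $\mathsf{h}$. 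Fortunately the proposition only claims that $\Theta\mapsto S_\Theta$ is one-to-one, which your graph-recovery argument does establish, so the surjectivity claim should simply be deleted rather than repaired.
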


\noindent Observe that, in this formulation, $S_1$ can be recovered by setting $\Theta=0$, while $S_0$ represents a kind of degenerate scenario, which has to be included by hand. The reason is due to the formulation of the proposition which we have chosen so to emphasize the connection with the heuristic notion of boundary conditions.

\begin{Example}
With reference to Example \ref{Ex:ordinary_boundary_triple}, self-adjoint extensions of $S$ are in one-to-one correspondence with surjective isometries $U\colon\mathcal{N}_-(S^*)\to\mathcal{N}_+(S^*)$, {\it cf.} \cite[Th. 5.37]{Moretti}, the domain of the extension being $D(S_U)=D(S)\oplus_S(\mathbb{I}-U)\mathcal{N}_-(S^*)$.
At the same time Proposition \ref{Proposition: characterization of self adjoint extension via boundary triples} guarantees that all self-adjoint extensions of $S$ are completely determined by the self-adjoint operators $\Theta:\mathsf{h}\to\mathsf{h}$.
Since the domain of the extension $S_\Theta$ is the collection of elements $f\in D(S^*)$ such that $\gamma_1(f)=\Theta\gamma_0(f)$, this is equivalent to requiring $(i\mathbb{I}-\Theta)\pi_+(f)=-(i\mathbb{I}+\Theta)V\circ\pi_-(f)$.
In other words $\pi_+(f)=\mathcal{C}(-\Theta)V\circ\pi_-(f)$, where $\mathcal{C}(-\Theta)$ stands for the Cayley transform of $\Theta$, hence a unitary operator on $\mathsf{h}$, {\it cf.} \cite[Th. 5.34]{Moretti}. By identifying $U=\mathcal{C}(-\Theta)V$ we have shown how to relate the two viewpoints. 
\end{Example}

\noindent To conclude this short digression, we emphasize how boundary triples also allow to obtain the spectral properties of the self-adjoint extensions of an Hermitian operator $S$ as above. To this end, first we need an auxiliary lemma, \cite{BL12}
\begin{lemma}\label{Lem:Decomposition_with_resolvent}
Let $S\colon D(S)\subseteq\mathsf{H}\to\mathsf{H}$ be a closed, symmetric operator and let $(\mathsf{h},\gamma_0,\gamma_1)$ be an associated boundary triple.
Let $S'$ be any self-adjoint extension of $S$, $S\subset S'\subset S^*$, and let $\lambda\in\rho(S')$, $\rho$ being the resolvent set.
Then 
\footnote{In this paper $V\dotplus W$ denotes the direct sum between the subspaces $V,W\subseteq\mathsf{H}$ ($V+W=\mathsf{H}$ and $V\cap W=\{0\}$) while $V\oplus W$ denotes the orthogonal direct sum ($V+W=\mathsf{H}$ and $V\perp W$).}
$$D(S^*)=D(S')\dotplus\mathcal{N}_\lambda(S^*),\quad\mathcal{N}_\lambda(S^*)\doteq\ker(S^*-\lambda\mathbb{I})\,.$$
\end{lemma}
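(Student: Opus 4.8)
The plan is to prove the two defining properties of a direct sum separately: first that $D(S^*) = D(S') + \mathcal{N}_\lambda(S^*)$, and then that $D(S') \cap \mathcal{N}_\lambda(S^*) = \{0\}$. The second property is the easier one and I would dispatch it first. Suppose $f \in D(S') \cap \mathcal{N}_\lambda(S^*)$. Then $f \in D(S')$ and $S^* f = \lambda f$, hence $S' f = \lambda f$ since $S' \subset S^*$. But $\lambda \in \rho(S')$, so $S' - \lambda\mathbb{I}$ is injective on $D(S')$, forcing $f = 0$. This uses only that $\lambda$ is in the resolvent set of the self-adjoint extension $S'$.

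For the spanning property, given an arbitrary $f \in D(S^*)$, I want to split off a piece in $\mathcal{N}_\lambda(S^*)$. The natural candidate is to set $g \doteq (S^* - \lambda\mathbb{I})f \in \mathsf{H}$, then use the resolvent of $S'$ to define $f' \doteq (S' - \lambda\mathbb{I})^{-1} g \in D(S')$. Since $\lambda \in \rho(S')$, this is well-defined, and by construction $(S^* - \lambda\mathbb{I})f' = (S' - \lambda\mathbb{I})f' = g = (S^* - \lambda\mathbb{I})f$. Therefore $f - f' \in \ker(S^* - \lambda\mathbb{I}) = \mathcal{N}_\lambda(S^*)$, and we obtain the decomposition $f = f' + (f - f')$ with $f' \in D(S')$ and $f - f' \in \mathcal{N}_\lambda(S^*)$. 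This shows $D(S^*) \subseteq D(S') + \mathcal{N}_\lambda(S^*)$; the reverse inclusion is immediate because both $D(S')$ and $\mathcal{N}_\lambda(S^*)$ are contained in $D(S^*)$ (for $\mathcal{N}_\lambda(S^*)$, note $S \subset S' \subset S^*$ self-adjoint implies $S'$ is symmetric, but more directly $\ker(S^* - \lambda\mathbb{I}) \subseteq D(S^*)$ by definition).

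I do not expect any serious obstacle here: the whole argument is a standard resolvent manipulation, of the same flavour as the classical von Neumann decomposition $D(S^*) = D(S) \oplus_S \mathcal{N}_i(S^*) \oplus_S \mathcal{N}_{-i}(S^*)$ recalled in Example \ref{Ex:ordinary_boundary_triple}, but with $i$ replaced by an arbitrary $\lambda \in \rho(S')$ and with the algebraic (non-orthogonal) direct sum $\dotplus$ in place of the graph-orthogonal one. The only point requiring a little care is bookkeeping the distinction between $S'$ and $S^*$ when applying $S' - \lambda\mathbb{I}$ versus $S^* - \lambda\mathbb{I}$ — the identification $(S^* - \lambda\mathbb{I})f' = (S' - \lambda\mathbb{I})f'$ on $D(S')$ is what makes the subtraction land in $\mathcal{N}_\lambda(S^*)$, and it is valid precisely because $S' \subset S^*$. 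Note also that the boundary triple $(\mathsf{h}, \gamma_0, \gamma_1)$ plays no active role in the proof; its presence in the hypothesis merely guarantees, via Proposition \ref{Proposition: characterization of self adjoint extension via boundary triples}, that self-adjoint extensions $S'$ exist and that $\rho(S')$ is non-empty.
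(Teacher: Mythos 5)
Your proposal is correct and follows essentially the same route as the paper: both construct the candidate $\tilde f=(S'-\lambda\mathbb{I})^{-1}\big((S^*-\lambda\mathbb{I})f\big)\in D(S')$ and both settle uniqueness by noting that $\lambda\in\rho(S')$ forces $D(S')\cap\mathcal{N}_\lambda(S^*)=\{0\}$. The only (harmless, in fact cleaner) difference is in the middle step: you conclude $f-\tilde f\in\ker(S^*-\lambda\mathbb{I})$ directly from the hypothesis $S'\subset S^*$, whereas the paper shows $f-\tilde f\perp\mathrm{Ran}(S-\bar{\lambda}\mathbb{I})$ by an inner-product computation using $S^{**}=S$ and then invokes $\mathrm{Ran}(S-\bar{\lambda}\mathbb{I})^\perp=\ker(S^*-\lambda\mathbb{I})$.
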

\begin{proof}
Let $f\in D(S^*)$ and let $\tilde{f}=(S'-\lambda\mathbb{I})^{-1}((S^*-\lambda\mathbb{I})f)\in D(S')$, where $\lambda\in\rho(S')$.
For any $f^\prime\in D(S)$ it holds that
\begin{align*}
	((S'-\bar{\lambda}\mathbb{I})f^\prime,f-\tilde{f})_{\mathsf{H}}&=
	((S'-\bar{\lambda}\mathbb{I})f^\prime,f)_{\mathsf{H}} - ((S'-\bar{\lambda}\mathbb{I})f^\prime,(S'-\lambda\mathbb{I})^{-1}((S^*-\lambda\mathbb{I})f))_{\mathsf{H}}\\&=
	((S'-\bar{\lambda}\mathbb{I})f^\prime,f)_{\mathsf{H}} -(f^\prime,(S^*-\lambda\mathbb{I})f)_{\mathsf{H}}=0\,,
\end{align*}
where the last equality descends by using that $S^{**}=S$, being $S$ closed.
Hence $f-\tilde{f}\in\textrm{Ran}(S-\bar{\lambda}\mathbb{I})^\perp=\ker(S^*-\lambda\mathbb{I})=\mathcal{N}_\lambda(S^*)$.
We have found the sought decomposition.
To prove that it is unique, it suffices to observe that $D(S')\cap\mathcal{N}_\lambda(S^*)=\{0\}$ since $\lambda\in\rho(S')$ per assumption.
\end{proof}

\noindent We define the following auxiliary functions:

\begin{Definition}\label{Def:gamma_Weyl_functions}
Let $S\colon D(S)\subseteq\mathsf{H}\to\mathsf{H}$ be a closed, symmetric operator and let $(\mathsf{h},\gamma_0,\gamma_1)$ be an associated boundary triple.
Moreover, consider the self-adjoint extension $S_0$ of $S$ defined by $S_0\doteq S^*|_{\ker\gamma_0}$.
We call {\em $\gamma$-field} and {\em Weyl function} respectively the maps $\Gamma\colon\rho(S_0)\to D(S^*)$ and $M\colon\rho(S_0)\to\mathsf{h}$ such that
$$\Gamma(\lambda)\doteq\big[\gamma_0|_{\mathcal{N}_\lambda(S^*)}\big]^{-1},\quad M(\lambda)\doteq\gamma_1\circ\Gamma(\lambda)\,,$$
where $\rho(S_0)$ is the resolvent of $S_0$.
\end{Definition}

\noindent We have all the ingredients to state the following key theorem which specializes to the case at hand a statement proven in \cite{DM91}:

\begin{theorem}\label{Theorem: spectral properties via Weil function}
Let $S:D(S)\subseteq\mathsf{H}\to\mathsf{H}$ be a closed, symmetric operator and let $(\mathsf{h},\gamma_0,\gamma_1)$ be an associated boundary triple. Let $S_\Theta$ be a self-adjoint extension of $S$ determined by means of a self-adjoint operator $\Theta\colon D(\Theta)\subseteq\mathsf{h}\to\mathsf{h}$. Let $\rho$, $\sigma_p$ and $\sigma_c$ indicate respectively resolvent, point and continuous spectrum of an operator. Then for every $\lambda\in\rho(S_0)$, $S_0=\left.D(S^*)\right|_{\ker\gamma_0}$, it holds
\begin{enumerate}
	\item $\lambda\in\rho(S_\Theta)$ if and only if $0\in\rho(\Theta-M(\lambda))$, where $M$ is the Weyl function,
	\item $\lambda\in\sigma_i(S_\Theta)$, $i=p,c$ if and only if $0\in\sigma_i(\Theta-M(\lambda))$.
\end{enumerate}
\end{theorem}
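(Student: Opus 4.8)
The plan is to exploit the decomposition from Lemma \ref{Lem:Decomposition_with_resolvent} together with the abstract Krein-type resolvent formula that boundary triples provide, reducing the study of $\lambda-S_\Theta$ to that of the operator $\Theta-M(\lambda)$ acting on the auxiliary Hilbert space $\mathsf{h}$. First I would fix $\lambda\in\rho(S_0)$ and record the basic identities satisfied by the $\gamma$-field and the Weyl function: since $\Gamma(\lambda)=[\gamma_0|_{\mathcal{N}_\lambda(S^*)}]^{-1}$, for every $\phi\in\mathsf{h}$ the element $\Gamma(\lambda)\phi\in\mathcal{N}_\lambda(S^*)$ satisfies $S^*\Gamma(\lambda)\phi=\lambda\Gamma(\lambda)\phi$, $\gamma_0\Gamma(\lambda)\phi=\phi$ and $\gamma_1\Gamma(\lambda)\phi=M(\lambda)\phi$. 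Applying Green's identity \eqref{Equation: boundary equation for boundary triples} to two such elements $\Gamma(\lambda)\phi$, $\Gamma(\mu)\psi$ yields $(\lambda-\bar\mu)(\Gamma(\lambda)\phi|\Gamma(\mu)\psi)_{\mathsf{H}}=(M(\lambda)\phi|\psi)_{\mathsf{h}}-(\phi|M(\mu)\psi)_{\mathsf{h}}$, which in particular shows $M(\lambda)^*=M(\bar\lambda)$ and $\operatorname{Im}M(\lambda)\geq 0$ for $\operatorname{Im}\lambda>0$ (so $M(\lambda)$ is, up to sign conventions, an operator-valued Herglotz/Nevanlinna function). These are the structural facts that make the argument run.

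The heart of the proof is the following claim, which I would establish as a lemma and which simultaneously proves items 1 and 2: for $\lambda\in\rho(S_0)$, the operator $\lambda-S_\Theta$ is a bijection from $D(S_\Theta)$ onto $\mathsf{H}$, or is injective, or has closed range of a given codimension, precisely when the corresponding property holds for $\Theta-M(\lambda):D(\Theta)\subseteq\mathsf{h}\to\mathsf{h}$. To see this, take $f\in D(S^*)$ with $(S^*-\lambda\mathbb{I})f=0$, i.e. $f\in\mathcal{N}_\lambda(S^*)$; then $f\in D(S_\Theta)$ iff $\gamma_0 f\in D(\Theta)$ and $\gamma_1 f=\Theta\gamma_0 f$, and writing $\phi=\gamma_0 f$ this reads $M(\lambda)\phi=\Theta\phi$, i.e. $\phi\in\ker(\Theta-M(\lambda))$. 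Since $f\mapsto\gamma_0 f=\Gamma(\lambda)^{-1}f$ is a bijection $\mathcal{N}_\lambda(S^*)\to\mathsf{h}$, this gives a linear isomorphism $\ker(\lambda-S_\Theta)\cong\ker(\Theta-M(\lambda))$, settling the point-spectrum statement and the injectivity half of item 1. For surjectivity, given $u\in\mathsf{H}$ I set $f_0=(S_0-\lambda\mathbb{I})^{-1}u\in D(S_0)$ (so $\gamma_0 f_0=0$) and seek a solution of $(\lambda-S_\Theta)f=-u$ in the form $f=f_0+\Gamma(\lambda)\phi$; by construction $(S^*-\lambda\mathbb{I})f=u+0=u$ regardless of $\phi$, and the boundary condition $\gamma_1 f=\Theta\gamma_0 f$ becomes $\gamma_1 f_0+M(\lambda)\phi=\Theta\phi$, i.e. $(\Theta-M(\lambda))\phi=\gamma_1 f_0$. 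Hence $\lambda-S_\Theta$ is onto iff $\Theta-M(\lambda)$ is onto (the element $\gamma_1 f_0$ being arbitrary in $\mathsf{h}$ as $u$ ranges over $\mathsf{H}$, which one checks using surjectivity of $\gamma$ and transversality of $S_0,S_1$). Combining injectivity and surjectivity with the closed graph theorem — using that $S_\Theta$ and $\Theta$ are self-adjoint, hence closed, so bounded invertibility is equivalent to bijectivity — gives item 1. For item 2 one argues analogously on ranges: $\operatorname{Ran}(\lambda-S_\Theta)$ is closed with a given deficiency iff the same holds for $\operatorname{Ran}(\Theta-M(\lambda))$, because the two maps differ only through the bounded, boundedly-invertible machinery $(\Gamma(\lambda),M(\lambda))$ and the topological direct sum of Lemma \ref{Lem:Decomposition_with_resolvent}; then $\lambda\in\sigma_c(S_\Theta)$ iff $\lambda-S_\Theta$ is injective with dense but non-closed range iff the same is true of $\Theta-M(\lambda)$, i.e. $0\in\sigma_c(\Theta-M(\lambda))$.

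Concretely, the steps I would carry out in order are: (i) derive the identities $S^*\Gamma(\lambda)=\lambda\Gamma(\lambda)$, $\gamma_0\Gamma(\lambda)=\mathrm{id}_{\mathsf{h}}$, $\gamma_1\Gamma(\lambda)=M(\lambda)$ and show $\Gamma(\lambda):\mathsf{h}\to\mathcal{N}_\lambda(S^*)$ is a bounded bijection with bounded inverse (boundedness uses that $\mathcal{N}_\lambda(S^*)$ is closed in the graph norm and $\gamma_0$ is continuous there); (ii) prove the Krein-type resolvent formula $(S_\Theta-\lambda\mathbb{I})^{-1}=(S_0-\lambda\mathbb{I})^{-1}+\Gamma(\lambda)(\Theta-M(\lambda))^{-1}\Gamma(\bar\lambda)^*$ valid whenever $0\in\rho(\Theta-M(\lambda))$, by verifying directly that the right-hand side maps into $D(S_\Theta)$ and inverts $S_\Theta-\lambda\mathbb{I}$; (iii) read off item 1 from this formula plus the converse implication obtained via the ansatz $f=f_0+\Gamma(\lambda)\phi$ above; (iv) upgrade to item 2 by the kernel/range isomorphisms, distinguishing $\sigma_p$ (non-trivial kernel) from $\sigma_c$ (trivial kernel, dense non-closed range) and invoking that both operators are self-adjoint so that residual spectrum is empty. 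The main obstacle I anticipate is the careful functional-analytic bookkeeping in step (ii)–(iv) when $\Theta$ is unbounded: one must keep track of the domains $D(\Theta)$ versus $\mathsf{h}$, verify that $\Gamma(\bar\lambda)^*$ lands in the right space and that $(\Theta-M(\lambda))^{-1}$, when it exists, is bounded, and ensure the range/deficiency statements are not spoiled by domain issues — this is exactly the content that \cite{DM91} handles in the abstract, so the real work is to check that the boundary triple constructed here meets their hypotheses, the rest being a specialization.
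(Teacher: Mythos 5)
Your argument is correct; note that the paper itself gives no proof of this theorem, deferring entirely to the reference [DM91], and what you have written is precisely the standard Derkach--Malamud argument: the kernel isomorphism $\ker(S_\Theta-\lambda)\cong\ker(\Theta-M(\lambda))$ via $\gamma_0|_{\mathcal{N}_\lambda(S^*)}$, and the ansatz $f=f_0+\Gamma(\lambda)\phi$ with $f_0=(S_0-\lambda)^{-1}u$, which turns the boundary condition into $(\Theta-M(\lambda))\phi=\gamma_1 f_0$ and hence identifies $\operatorname{Ran}(S_\Theta-\lambda)$ with the preimage of $\operatorname{Ran}(\Theta-M(\lambda))$ under the bounded surjection $u\mapsto\gamma_1(S_0-\lambda)^{-1}u$. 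The only point worth spelling out in step (iv) is that this surjection is open (open mapping theorem), so that closedness and density of the two ranges transfer in \emph{both} directions; with that, and emptiness of the residual spectrum for the self-adjoint operators $S_\Theta$ and $\Theta-M(\lambda)$, the $\sigma_p$/$\sigma_c$ dichotomy follows as you describe.
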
 

\noindent In other words, this theorem guarantees that the computation of the spectrum of $S_\Theta$, a self-adjoint extension of $S$, boils down to evaluating the spectra of $S_0$ and of $\Theta-M(\lambda)$.  

\subsection{Application to second-order elliptic differential operators}

In this section we apply the theory of boundary triples to the study of second-order, elliptic differential operators. We start our analysis from the distinguished, albeit special case of the Laplace-Beltrami operator, subsequently generalizing our construction. Hence, let $(M,g)$ be a Riemannian manifold with boundary and of bounded geometry as per Definition \ref{Def:bounded_geometry_boundary}.
On top of $M$ we consider the Laplace-Beltrami operator $\Delta_g$, which reads in a local chart $\Delta_g=-\nabla_ig^{ij}\nabla_j$, $\nabla$ being the Levi-Civita connection built out of $g$.
We wish to identify a boundary triple for $\Delta_g$, which is regarded as a densely defined operator $\Delta_g: H^2_0(M)\to L^2(M;d\mu_g)$, $H^2_0(M)$ being the closure of $\mathcal{D}(M)$ with respect to the $H^2(M)$-norm as defined in Proposition \ref{Prop:Sobolev_space_equivalence}, see also Remark \ref{Remark: Sobolev spaces on manifold with boundary}.
Standard arguments yield that $\Delta_g$ is a closed symmetric operator on $L^2(M)\equiv L^2(M;d\mu_g)$ whose adjoint $\Delta_g^*$ is defined on the so-called {\em maximal domain}
$$D_{\textrm{max}}(\Delta_g^*)\doteq\{f\in L^2(M)|\; \Delta_g(f)\in L^2(M)\}\,,\qquad\Delta_g^*f\doteq \Delta_gf\,.$$
Since $(M,g)$ is of bounded geometry, then $\Delta_g$ is uniformly elliptic and the maximal domain coincides with $H^2(M)$. Hence we shall identify $D_{\textrm{max}}(\Delta_g^*)=H^2(M)$.

To construct a boundary triple associated with $\Delta_g^*$, let $n$ be the outward-pointing unit normal of $\partial  M$ and let
$$\Gamma_0\colon H^2( M)\ni f\mapsto\Gamma f\in H^{3/2}(\partial M),\,\qquad\Gamma_1\colon H^2(M)\ni f\mapsto-\Gamma \nabla_n f\in H^{1/2}(\partial M)\,,$$
where $\Gamma$ is the trace map defined in Theorem \ref{Th:trace}. For $f_1,f_2\in H^2(M)=D_{\textrm{max}}(\Delta_g^*)$, the following Green's identity holds true:
\begin{gather}\label{Equation: boundary triple equation for Laplacian}
(\Delta_g^*f_1|\,f_2)_{L^2( M)}-(f_1|\,\Delta_g^*f_2)_{L^2( M)}=
(\Gamma_1f_1|\,\Gamma_0f_2)_{L^2(\partial  M)}-
(\Gamma_0f_1|\,\Gamma_1f_2)_{L^2(\partial M)}\,.
\end{gather}
Moreover, since the inner product $(\,|\,)_{L^2(\partial  M)}$ on $L^2(\partial M)\equiv L^2(\partial M;\iota^*_Md\mu_g)$, $\iota_M:\partial M\hookrightarrow M$, extends continuously to a pairing on $H^{-1/2}(\partial  M)\times H^{1/2}(\partial  M)$ as well as on $H^{-3/2}(\partial M)\times H^{3/2}(\partial M)$, there exist isomorphisms
$$j_\pm\colon H^{\pm 1/2}(\partial  M)\to L^2(\partial  M),\qquad \iota_\pm\colon H^{\pm 3/2}(\partial  M)\to L^2(\partial  M)\,,$$
such that, for all $(\psi,\phi)\in H^{1/2}(\partial M)\times H^{-1/2}(\partial M)$ and for all $(\widetilde{\psi},\widetilde{\phi})\in H^{3/2}(\partial M)\times H^{-3/2}(\partial M)$, 
\begin{align*}
(\psi,\phi)_{(1/2,-1/2)}=(j_+\psi|\,j_-\phi)_{L^2(\partial M)}\,,\quad
(\widetilde{\psi},\widetilde{\phi})_{(3/2,-3/2)}=(\iota_+\widetilde{\psi}|\,\iota_-\widetilde{\phi})_{L^2(\partial M)}\,,
\end{align*}
where $(\;,\;)_{(1/2,-1/2)}$ and $(\;,\;)_{(3/2,-3/2)}$ stand for the duality pairings between the associated Sobolev spaces.

Gathering all the above ingredients, the following result holds, being a generalization to the case of Riemannian manifolds with boundary and of bounded geometry of a classical result, {\it e.g.} \cite{Grubb68} :
\begin{proposition}\label{Proposition: boundary triple of the Laplacian}
Let $\Delta_g^*$ be the adjoint of the Laplace-Beltrami operator on a Riemannian manifold $(M,g)$ with boundary and of bounded geometry. Let 
\begin{gather}\label{Equation: Dirichlet boundary map for Laplacian}
\gamma_0\colon H^2(M)\ni f\mapsto \iota_+\Gamma_0f\in L^2(\partial  M)\,,\\
\label{Equation: Neumann boundary map for Laplacian}
\gamma_1\colon H^2(M)\ni f\mapsto j_+\Gamma_1f\in L^2(\partial  M)\,,
\end{gather}
Then $(L^2(\partial  M),\gamma_0,\gamma_1)$ is a boundary triple for $\Delta_g^*$.
Moreover the self adjoint extension $\Delta_{g,0}\doteq\Delta_g^*|_{\ker\gamma_0}$ coincides with the Dirichlet operator $\Delta_g^*|_{H^2( M)\cap H^1_0( M)}$.
\end{proposition}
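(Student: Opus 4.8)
The plan is to verify the three defining properties of a boundary triple from Definition \ref{Definition: boundary triples} for the candidate $(L^2(\partial M),\gamma_0,\gamma_1)$ given by \eqref{Equation: Dirichlet boundary map for Laplacian}--\eqref{Equation: Neumann boundary map for Laplacian}, namely: (i) $\mathsf{h}=L^2(\partial M)$ is a separable Hilbert space over $\mathbb{C}$ --- which is immediate; (ii) the abstract Green's identity \eqref{Equation: boundary equation for boundary triples} holds; and (iii) the map $\gamma=(\gamma_0,\gamma_1)\colon H^2(M)\to L^2(\partial M)\times L^2(\partial M)$ is surjective. The identification $D(\Delta_g^*)=D_{\textrm{max}}(\Delta_g^*)=H^2(M)$ has already been established in the text (uniform ellipticity from bounded geometry), so I may take it for granted and work entirely on the maximal domain.

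First I would derive the Green's identity \eqref{Equation: boundary triple equation for Laplacian} for $\Delta_g$ on $H^2(M)$. For $f_1,f_2\in\mathcal{D}(M)$ (or compactly supported smooth functions on $\widehat{M}$ restricted to $M$, cf. Remark \ref{Rem:quotient_Spaces}) this is the classical second Green identity obtained by integrating $\nabla_i(g^{ij}(f_2\nabla_j\overline{f_1}-\overline{f_1}\nabla_j f_2))$ --- wait, more precisely by integrating by parts twice and collecting the boundary terms involving $\Gamma f_i$ and $\Gamma\nabla_n f_i$ on $\partial M$; the sign conventions chosen for $\Gamma_1$ (with the minus sign and outward normal $n$) are exactly what makes the right-hand side take the antisymmetric form displayed. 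Then I would extend this identity from $\mathcal{D}(M)$ to all of $H^2(M)$ by density and continuity: the left-hand side is continuous in the $L^2(M)\times L^2(M)$ topology on $(\Delta_g^* f_1,\Delta_g^* f_2,f_1,f_2)$, hence in the $H^2(M)$ topology on $(f_1,f_2)$; the right-hand side is continuous because, by Theorem \ref{Th:trace}, $\Gamma_0\colon H^2(M)\to H^{3/2}(\partial M)$ and $\Gamma_1\colon H^2(M)\to H^{1/2}(\partial M)$ are bounded, and the duality pairings $(\,,\,)_{(3/2,-3/2)}$ and $(\,,\,)_{(1/2,-1/2)}$ pair these spaces continuously. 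Finally I would transport \eqref{Equation: boundary triple equation for Laplacian} to the form \eqref{Equation: boundary equation for boundary triples}: applying the isometric isomorphisms $\iota_+,j_+$ and using the displayed relations $(\widetilde\psi,\widetilde\phi)_{(3/2,-3/2)}=(\iota_+\widetilde\psi|\iota_-\widetilde\phi)_{L^2(\partial M)}$ and $(\psi,\phi)_{(1/2,-1/2)}=(j_+\psi|j_-\phi)_{L^2(\partial M)}$, the pairing $(\Gamma_1 f_1|\Gamma_0 f_2)_{L^2(\partial M)}$ becomes $(j_+\Gamma_1 f_1|\iota_+\Gamma_0 f_2)$-type terms; one must check the bookkeeping so that the $H^{1/2}$-side and the $H^{3/2}$-side are matched consistently with their duals, which is the slightly delicate point but purely formal once the isomorphisms are in place. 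This yields exactly $(\gamma_1 f_1|\gamma_0 f_2)_{L^2(\partial M)}-(\gamma_0 f_1|\gamma_1 f_2)_{L^2(\partial M)}$.

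Next I would prove surjectivity of $\gamma$. Given $(\phi_0,\phi_1)\in L^2(\partial M)\times L^2(\partial M)$, set $\psi_0\doteq\iota_+^{-1}\phi_0\in H^{3/2}(\partial M)$ and $\psi_1\doteq j_+^{-1}\phi_1\in H^{1/2}(\partial M)$. Since $\Gamma\colon H^2(M)\to H^{3/2}(\partial M)$ is surjective by Theorem \ref{Th:trace}, pick $u\in H^2(M)$ with $\Gamma u=\psi_0$; similarly, using that $\Gamma\colon H^2(M)\to H^{3/2}(\partial M)$ restricted appropriately and the trace of the normal derivative $\Gamma\circ\nabla_n\colon H^2(M)\to H^{1/2}(\partial M)$ is onto --- this is the standard fact that the full Cauchy trace $(\Gamma,\Gamma\nabla_n)\colon H^2(M)\to H^{3/2}(\partial M)\times H^{1/2}(\partial M)$ is surjective, which for bounded geometry follows by working in Fermi coordinates and invoking the Euclidean half-space result of Lions--Magenes together with the uniform local finiteness of the trivialization --- one can in fact realize any prescribed pair $(\psi_0,\psi_1)$ simultaneously. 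Concretely, in the geodesic collar $\mathcal{GC}_\epsilon(M)\cong\partial M\times[0,\epsilon)$ one builds a function of the form $v(x,z)=\chi(z)[\,\widetilde{\psi_0}(x)-z\,\widetilde{\psi_1}(x)\,]$, with $\chi$ a fixed cutoff equal to $1$ near $z=0$ and supported in $[0,\epsilon)$, where $\widetilde{\psi_i}$ are suitable extensions of $\psi_i$ off the boundary; boundedness in $H^2(M)$ is guaranteed by the bounded-geometry estimates on the collar, $\Gamma v=\psi_0$ and $-\Gamma\nabla_n v=\psi_1$. Applying $\iota_+,j_+$ gives $\gamma_0 v=\phi_0$, $\gamma_1 v=\phi_1$, so $\gamma$ is surjective.

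The main obstacle I anticipate is exactly the joint surjectivity of the Cauchy trace $(\Gamma,\Gamma\nabla_n)$ on a manifold of bounded geometry: Theorem \ref{Th:trace} as stated only gives surjectivity of the single trace $\Gamma$ onto $H^{s-1/2}(\partial M)$, so one needs the companion statement for the normal derivative and, crucially, the fact that the two can be prescribed independently, with uniform control of the $H^2(M)$-norm of the lift. This requires descending to Fermi coordinates, using the half-space trace theorem there, and then summing over the admissible uniformly locally finite trivialization while keeping the constants uniform --- the content of \cite[Sec. 4-5]{AmmanGrosseNistor} and \cite{GrosseSchneider}. Once this is granted, the remaining steps are routine. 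For the last assertion, that $\Delta_{g,0}\doteq\Delta_g^*|_{\ker\gamma_0}$ is the Dirichlet realization, one observes $\ker\gamma_0=\ker\Gamma_0=\ker\Gamma$ (since $\iota_+$ is an isomorphism, hence injective), and by the last clause of Theorem \ref{Th:trace}, $\ker\Gamma=H^2(M)\cap H^1_0(M)$; therefore $\Delta_{g,0}=\Delta_g^*|_{H^2(M)\cap H^1_0(M)}$, which is by definition the Dirichlet operator, and it is self-adjoint as $S_0$ always is in a boundary triple (equivalently, it is one of the two distinguished transversal extensions \eqref{eq:distinguished_self_adjoin}).
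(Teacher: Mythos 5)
Your proposal is correct and follows essentially the same route as the paper: the Green identity plus the trace theorem give the abstract boundary relation, joint surjectivity of the Cauchy trace $(\Gamma,\Gamma\nabla_n)$ is reduced to the Euclidean half-space case via Fermi coordinates and the uniformly locally finite trivialization (this is precisely what the paper compresses into ``repeat the proof of Grubb in combination with a partition of unity argument''), and the Dirichlet statement follows from $\ker\Gamma=H^2(M)\cap H^1_0(M)$. The only quibble is a sign in your collar ansatz (with $n$ outward and $z$ the inward geodesic coordinate one has $\Gamma_1 v=\partial_z v|_{z=0}$, so the lift should carry $+z\,\widetilde{\psi_1}$), which is exactly the bookkeeping you already flagged.
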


\begin{proof}
	It suffices to observe that, in view of our assumption on the geometry of $(M,g)$, the trace map $\Gamma$ is both continuous and surjective, {\it c.f} \cite{GrosseSchneider}. Hence also the map $\gamma:H^2(M)\to L^2(\partial M)\times L^2(\partial M)$ is surjective by combining the properties of $\Gamma$ together with the fact that $\iota_+$ and $j_+$ are isomorphisms and $\nabla_n\colon H^s(M)\to H^{s-1}(M)$. Together with this data one can repeat the same proof as in \cite{Grubb68} in combination with a partition of unity argument. The last statement of this proposition descends from Theorem \ref{Th:trace} and, in particular from the kernel of $\Gamma$ acting on $H^s(M)$ being $H^1_0(M)\cap H^s(M)$ for all $s\geq 1$.
\end{proof}

\begin{remark}\label{Remark: non-local boundary conditions}

	Notice that Proposition \ref{Proposition: boundary triple of the Laplacian} ensures that the self-adjoint extensions of $\Delta_g$ are parametrized by \textit{all} densely defined self-adjoint operators $\Theta$ on $L^2(\partial M)$.
	Therefore, the class of boundary conditions $\ker(\gamma_1-\Theta\gamma_0)$ is more general then the one considered usually in the literature \cite{Hormander}.
\end{remark}

\begin{remark}\label{Remark: boundary triple for an elliptic operator}
We observe that, while the results of Proposition \ref{Proposition: boundary triple of the Laplacian} can be straightforwardly extended if one adds to $\Delta_g$ terms of order $0$ which are both smooth and bounded, dealing with a generic elliptic differential operator $A$ of order $2$ on $(M,g)$ requires a more careful investigation.

More specifically consider an operator $A$ which reads locally $A=-\nabla_ia^{ij}\nabla_j$, where both $a^{ij}$ and $\nabla_i a^{ij}\in C^\infty(M)\cap L^\infty(M)$ . One has individuated a
closed, symmetric operator $A\colon H^2_0( M)\to L^2(M)$ whose adjoint $A^*$ has a maximal domain
$$D(A^*)\doteq\{ f\in L^2( M)|\; Af\in L^2( M) \}\,,\qquad A^*f\doteq Af\,.$$
Since $(M,g)$ is of bounded geometry, $H^2( M)\subseteq D(A^*)$, though the inclusion is in general strict unless $A$ is uniformly elliptic. Since $H^2( M)$ is dense in $D(A^*)$, see \cite{Ibort:2013} and references therein, the maps
$$(\Gamma_0,\Gamma_1)\colon H^2(M)\ni f\mapsto (\Gamma f,-\Gamma\nabla_{a,n} f)\in H^{1/2}(\partial M)\times H^{3/2}(\partial M)\,,$$
where $\nabla_{a,n}f\doteq n_ia^{ij}\nabla_jf$ while $\Gamma$ is defined as in Theorem \eqref{Th:trace}, can be extended to linear continuous maps
$$(\widetilde{\Gamma}_0,\widetilde{\Gamma}_1)\colon D(A^*)\to H^{-1/2}(\partial  M)\times H^{-3/2}(\partial  M)\,.$$
In addition, if $f_1\in D(A^*)$ and $f_2\in H^2( M)$, it holds
\begin{align}\label{Equation: partial boundary triple equation for elliptic operator}
(A^*f_1|\,f_2)_{L^2(M)}-
(f_1|\;A^*f_2)_{L^2(M)} & =
(\widetilde{\Gamma}_1f_1|\;\Gamma_0f_2)_{L^2(\partial M)}-
(\widetilde{\Gamma}_0f_1|\;\Gamma_1f_2)_{L^2(\partial M)}
\\ \nonumber &=
( \iota_-\widetilde{\Gamma}_1f_1|\;\iota_+\Gamma_0f_2)_{-\frac 32,\frac 32}-
(j_-\widetilde{\Gamma}_0f_1|\;j_+\Gamma_1f_2)_{-\frac 12,\frac 12}\,,
\end{align}
where in the last equality we used the definition of $\iota_\pm$, $j_\pm$ while $(\,,\,)_{-\frac 32,\frac 32}$, $(\,,\,)_{-\frac 12,\frac 12}$ denote the pairing between dual spaces.
In order for the above data to define a boundary triple, let $A_0$ be the Dirichlet self-adjoint extension of $A$, defined as $A_0\doteq A^*|_{H^2(M)\cap H^1_0(M)}$.
By applying Lemma \ref{Lem:Decomposition_with_resolvent} for $S^*=A^*$ and for $S'=A_0$ it follows that, for an arbitrary but fixed $\lambda\in\rho(A_0)$, the maximal domain $D(A^*)$ decomposes as
\begin{gather}\label{Equation: algebraic decomposition of the maximal domain of an elliptic operator}
D(A^*)=D(A_0)\dotplus\ker(A^*-\lambda)\,.
\end{gather}
One can consider the maps
\begin{equation}\label{eq:boundary_triple_for_A*}
(\gamma_0,\gamma_1)\colon D(A^*)\ni f\mapsto (j_-\widetilde{\Gamma}_0f_\lambda,\iota_+\widetilde{\Gamma}_1f_0)=(j_-\widetilde{\Gamma}_0f_\lambda,\iota_+\Gamma_1f_0)\in L^2(\partial M)\times L^2(\partial M)\,,
\end{equation}
where $f_0\in D(A_0)$ is the ``Dirichlet part'' of $f=f_{0}+f_\lambda$ according to decomposition \eqref{Equation: algebraic decomposition of the maximal domain of an elliptic operator}, while $f_\lambda\in\ker(S^*-\lambda)$.
It can be shown that $(L^2(\partial  M),\gamma_0,\gamma_1)$ is a boundary triple for $A^*$ \cite{HSF12,Grubb71}, moreover, the self adjoint extension $S_0\doteq S^*|_{H^2(M)\cap H^1_0(M)}$ coincides with the self-adjoint extension $S^*|_{\ker\gamma_0}$, hence justifying the notation. We remark that the ambiguity in the choice of $\lambda\in\rho(S_0)$ reflects possible degeneracies of $A$.
\end{remark}

\section{Fundamental solutions on spacetimes with timelike boundary}\label{Sec:Fundamental_Solutions}

In this section, following Section \ref{Sec:Geometric_data} and Proposition \ref{prop:static_spacetime_with_boundary} in particular, we consider $(N,\widetilde{h})$ to be a {\em static Lorentzian spacetime} with timelike boundary.
Henceforth, for any $\Omega\subseteq N$, with $J^\pm(\Omega)$ we indicate the causal future ($+$) and the causal past ($-$) of $\Omega$, {\it cf.} \cite[Ch. 1]{BEE}.
On top of it we consider the D'Alembert wave operator which, in view of \eqref{eq:line_element}, reads locally 
\begin{gather}\label{eq:original_box}
\square_{\widetilde{h}}=-\widetilde{h}^{ij}\nabla_i\nabla_j=\beta^{-1}\partial_t^2-\frac 12 g^{ij}\partial_i(\ln\beta)\partial_j+\Delta_g\,,
\end{gather}
where $\Delta_g$ is the Laplace-Beltrami operator associated with $(M,g)$.
The solutions of the wave equation built out of \eqref{eq:original_box} are best characterized in terms of the fundamental solutions associated with $\square_{\widetilde{h}}$.

\begin{Definition}\label{Definition: advanced, retarded and causal propagator}
A continuous linear map $\widetilde{\mathsf{G}}^{\pm}\colon \mathcal{D}(\mathring{N})\to\mathcal{D}^\prime(\mathring{N})$ is called a retarded ($+$) or an advanced ($-$) Green's operator for $\square_{\widetilde{h}}$ on a static Lorentzian spacetime $(N,h)$ with timelike boundary if, for every $f\in \mathcal{D}(\mathring{N})$, $\mathring{N}\doteq N\setminus\partial N$,
\begin{align}\label{Equation: defining properties of advanced propagator}
\square_{\widetilde{h}}\circ\widetilde{\mathsf{G}}^\pm f=\widetilde{\mathsf{G}}^\pm\circ \square_{\widetilde{h}} f=f\,,\quad
\textrm{supp}(\widetilde{\mathsf{G}}^\pm f)\subseteq J^\mp(\textrm{supp}(f))\,.
\end{align}
The causal (or advanced-minus-retarded) propagator $\widetilde{\mathsf{G}}$ associated to the pair $(\widetilde{\mathsf{G}}^+,\widetilde{\mathsf{G}}^-)$ is the linear operator 
\begin{align}\label{Equation: causal propagator}
\widetilde{\mathsf{G}}\doteq\widetilde{\mathsf{G}}^--\widetilde{\mathsf{G}}^+\colon \mathcal{D}(\mathring{N})\to\mathcal{D}^\prime(\mathring{N})\,.
\end{align}
\end{Definition}

\begin{remark}\label{Remark_globally_hyperbolic}
In view of Proposition \ref{prop:static_spacetime_with_boundary}, under mild additional hypotheses, $(N,\widetilde{h})$ can be realized as a submanifold of a standard static, globally hyperbolic spacetime $(\widehat{N},\widehat{h})$.
Thereon one can consider the D'Alembert wave operator $\square_{\widehat{h}}$ and standard results in the theory of normally hyperbolic operators guarantee the existence of unique advanced/retarded fundamental solutions $\widehat{\mathsf{G}}^\pm$ with the same defining properties as in \eqref{Equation: causal propagator} and, in addition, $\widehat{\mathsf{G}}^\pm[\mathcal{D}(\widehat{N})]\subset C^\infty(\widehat{N})$.
As a consequence, whenever $\Omega\subset\mathring{N}$ endowed with $\widehat{h}|_\Omega$ identifies a globally hyperbolic submanifold of $N$ and thus also of $\widehat{N}$, it turns out that, calling $\widetilde{\mathsf{G}}^\pm_\Omega$ the fundamental solutions for the D'Alembert wave operator on $(\Omega,\widehat{h}|_\Omega)$, then $\widehat{\mathsf{G}}^\pm|_{\Omega}=\widetilde{\mathsf{G}}^\pm_\Omega$ and, whenever $\widetilde{\mathsf{G}}^\pm$ exist, $\widetilde{\mathsf{G}}^\pm|_{\Omega}=\widetilde{\mathsf{G}}^\pm_\Omega$.
The same line of reasoning cannot be used to compare $\widehat{\mathsf{G}}^\pm$ with $\widetilde{\mathsf{G}}^\pm$ since $N$ is not a globally hyperbolic submanifold of $\widehat{N}$.
We will be showing that the existence and uniqueness of $\widetilde{\mathsf{G}}^\pm$ is instead subordinated to the choice of `boundary conditions' at $\partial N$.
\end{remark}

\paragraph{Conformal reduction.}
%\noindent
In order to address the problem of the existence of $\widetilde{\mathsf{G}}^\pm$, it is convenient to consider on $(N,\widetilde{h})$ the conformally rescaled metric
\begin{align}\label{Equation: conformally rescaled metric}
h\doteq \beta^{-1}\widetilde{h}=-\textrm{d}t^2+\beta^{-1}g\,,
\end{align}
whose wave operator reads $\square_{h}=\partial^2_t+\Delta_{\beta^{-1}g}$.
The connection between these wave operators is well known, in particular
\begin{align}\label{Equation: conformal relation between wave operators}
\square_{\widetilde{h}}\circ\beta^{\frac{1-m}{4}}=
\beta^{-\frac{3+m}{4}}\bigg[\square_{h}
+\frac{1-m}{2}\beta^{-\frac 12}\square_h(\beta^{\frac 12})-\frac{(1-m)(m-3)}{4}\beta\,h^{ij}\nabla_i\beta\nabla_j\beta
\bigg]\doteq
\beta^{-\frac{3+m}{4}}\bigg[\partial_t^2+A\bigg]\,,
\end{align}
where $\dim N=m+1$ while $A$ is an elliptic operator as per remark \ref{Remark: boundary triple for an elliptic operator}.
In the following we will give a construction for the advanced and retarded Green's operators $\mathsf{G}^\pm$ for $\partial_t^2+A$ rather that for $\square_{\widetilde{h}}$. Its counterparts, namely $\widetilde{\mathsf{G}}^\pm$, are related to the former ones via the identity
\begin{align}\label{Equation: conformal relation between propagators}
\widetilde{\mathsf{G}}^\pm=
\beta^{\frac{1-m}{4}}\circ\mathsf{G}^\pm\circ\beta^{\frac{3+m}{4}}\,.
\end{align}

In the following our results will depend on the regularity properties of $\beta$ and in particular we shall select it in such a way that the spacetime $(N,h)$ meets the conditions of Proposition \ref{prop:static_spacetime_with_boundary}, where $h=-\textrm{d}t^2+\beta^{-1}g$.
This entails in particular that the operator
\begin{align}\label{Equation: elliptic operator with beta dependence}
A\doteq
\Delta_{\beta^{-1}g}+
\frac{1-m}{2}\beta^{-\frac 12}\Delta_{\beta^{-1}g}(\beta^{\frac 12})-\frac{(1-m)(m-3)}{4}\,g^{ij}\nabla_i(\beta)\nabla_j(\beta)\,,
\end{align}
is uniformly elliptic.

In order to tackle the problem of the existence of $\mathsf{G}^\pm$ as per Definition \ref{Definition: advanced, retarded and causal propagator},
one can use Schwartz kernel theorem, rewriting the underlying problem in terms of two distributions $\mathcal{G}^\pm\in\mathcal{D}^\prime(\mathring{N}\times\mathring{N})$ such that, besides the condition on the support stated in \eqref{Equation: defining properties of advanced propagator}, 
\begin{equation}\label{eq:auxiliary_fundamental}
[(\partial_t^2+A)\otimes\mathbb{I}]\mathcal{G}^\pm=[\mathbb{I}\otimes(\partial_t^2+A)]\mathcal{G}^\pm=\delta_{\textrm{diag}(\mathring{N})}\,.
\end{equation}
Here $\delta_{\textrm{diag}(\mathring{N})}$ being the Dirac-delta distribution on the diagonal $\textrm{diag}(\mathring{N})\stackrel{\iota}{\hookrightarrow}\mathring{N}\times\mathring{N}$ of $\mathring{N}\times\mathring{N}$ such that, for every $f\in\mathcal{D}(\mathring{N}\times\mathring{N})$, $\delta_{\textrm{diag}(\mathring{N})}(f)=\int\limits_N \iota^*f \textrm{d}\mu_h$, where $\textrm{d}\mu_h$ is the metric induced volume form on $N$.
Since $(N,h)$ is per assumption a standard static spacetime, realizing it as the warped product $M\times_{\beta}\mathbb{R}$, we can make the ansatz $\mathcal{G}^-=\theta(t-t^\prime)\mathcal{G}$ and $\mathcal{G}^+=-\theta(t^\prime-t)\mathcal{G}$, where $\theta$ stands for the Heaviside distribution on $\mathbb{R}\times\mathbb{R}$ which is implicitly extended to $\mathring{N}\times\mathring{N}$.
From \eqref{eq:auxiliary_fundamental} it turns out that $\mathcal{G}\in\mathcal{D}^\prime(\mathring{N}\times\mathring{N})$ is the distribution associated to $\mathsf{G}$ via Schwartz kernel theorem and it satisfies 
\begin{equation}\label{eq:PDE_for_G}
\left\{\begin{array}{l}
[(\partial_t^2+A)\otimes\mathbb{I}]\mathcal{G}=[\mathbb{I}\otimes(\partial_t^2+A)]\mathcal{G}=0\\
\mathcal{G}|_{t=t^\prime}=0,\quad\partial_t\mathcal{G}|_{t=t^\prime}=-\partial_{t^\prime}\mathcal{G}|_{t=t^\prime}=\delta_{\textrm{diag}(\mathring{M})}
\end{array}\right. ,
\end{equation}
where $|_{t=t^\prime}$ indicates the pull-back of a distribution on $\mathring{N}\times\mathring{N}$ to $(\{t\}\times\mathring{M})\times (\{t\}\times\mathring{M})$ while $\delta_{\textrm{diag}(\mathring{M})}$ stands for the Dirac delta distribution on the diagonal of $\mathring{M}\times\mathring{M}$.
Observe that the product between the Heaviside distribution and $\mathcal{G}$ which defines $\mathcal{G}^\pm$ is well-posed as a consequence of \cite[Th. 8.2.10]{Hormander}.

\begin{remark}\label{Remark: discussion of the globally hyperbolic ultrastatic case}
From the previous discussion, it emerges that the existence of the Green's operators $\mathsf{G}^\pm$ can be reduced to finding first a solution for \eqref{eq:PDE_for_G} and proving, subsequently, that the support properties in \eqref{Equation: defining properties of advanced propagator} are verified.
In order to motivate the strategy that we shall follow to address these two questions for static Lorentzian spacetimes with timelike boundary, let us consider a special instance of the scenario of Remark \ref{Remark_globally_hyperbolic}.
$(N,h)$ is realized as a submanifold of a globally hyperbolic, standard static spacetime $(\widehat{N},\widehat{h})$ where the warping factor $\widehat{\beta}$ of Proposition \ref{prop:static_spacetime_with_boundary} is assumed to be equal to $1$.
Hence $\widehat{N}$ is isometric to $\widehat{M}\times\mathbb{R}$ where $(\widehat{M},\widehat{g})$ is a complete Riemannian manifold.
The d'Alembert wave operator reads $\square_{\widehat{h}}=\partial^2_t+\Delta_{\widehat{g}}$.
In this framework the Laplace-Beltrami operator $\Delta_{\widehat{g}}$ is known to be an essentially self-adjoint operator on $L^2(\widehat{M})\equiv L^2(\widehat{M};\textrm{d}\mu_{\widehat{g}})$, $\textrm{d}\mu_{\widehat{g}}$ being the metric induced volume form.
Indicating with a slight abuse of notation still with $\Delta_{\widehat{g}}$ the unique self-adjoint extension, for every $f\in\mathcal{D}(\widehat{M}\times\mathbb{R})$ it holds
\begin{align*}
\widehat{\mathcal{G}}(f_1,f_2)=
\int_{\mathbb{R}^2}\,\bigg(f_1(t)\bigg|\,\Delta_{\widehat{g}}^{-\frac{1}{2}}\sin\big[\Delta_{\widehat{g}}^{\frac{1}{2}}(t-t^\prime)\big]f_2(t^\prime)\bigg)\textrm{d}t\textrm{d}t'\,,
\end{align*}
where $f(t)\in H^2(M)$ denotes the evaluation of $f$, regarded as an element of $C_{\textrm{c}}^\infty(\mathbb{R},H^2(\widehat{M}))$.
Here $\Delta_{\widehat{g}}^{-\frac{1}{2}}\sin\big[\Delta_{\widehat{g}}^{\frac{1}{2}}(t-t^\prime)\big]$ in a densely defined self-adjoint operator on $L^2(M)$ which is defined exploiting the functional calculus for $\widehat{\Delta}_g$ \cite[Chap. 7]{ReedSimon}.
\end{remark}

From the example outlined in this last remark, it turns out that a solution of \eqref{eq:PDE_for_G} can be identified by exploiting the boundary triples introduced in the previous section.
Indeed, by fixing a boundary condition $\Theta$ we are reduced to dealing with the operator $\partial_t^2+A_\Theta$ for which we can exploit the spectral calculus associated with $A_\Theta$ as in the previous remark \ref{Remark: discussion of the globally hyperbolic ultrastatic case}.
This procedure will define a causal propagator for $\partial_t^2+A$ which can also be regarded as the unique causal propagator for $\partial_t^2+A_\Theta$.
The following theorem translates this paradigm:

\begin{theorem}\label{Theorem: construction of the advanced and retarded propagators}
Let $(N,h)$ be a static Lorentzian spacetime with timelike boundary as per Proposition \ref{prop:static_spacetime_with_boundary}.
Let $(\gamma_0,\gamma_1,L^2(\partial M))$ be the boundary triple as in \eqref{eq:boundary_triple_for_A*} associated with the elliptic operator $A^*$ defined in \eqref{Equation: elliptic operator with beta dependence} and let $\Theta$ be a densely defined self-adjoint operator on $L^2(\partial M)$.
Let $A_\Theta$ be the self-adjoint extension of $A$ defined as per Proposition \ref{Proposition: characterization of self adjoint extension via boundary triples} by $A_\Theta\doteq A^*|_{D(A_\Theta)}$, where $D(A_\Theta)\doteq\ker(\gamma_1-\Theta\gamma_0)$.
Furthermore, let assume that the spectrum of $A_{\Theta}$ is bounded from below.\\
Then the advanced and retarded Green's operators $\mathsf{G}^\pm_\Theta$ associated to the wave operator
$\partial_t^2+A_\Theta$ exist and they are unique.
They are completely determined in terms of the bidistributions $\mathcal{G}^-_\Theta=\theta(t-t^\prime)\mathcal{G}_\Theta$ and $\mathcal{G}^+_\Theta=-\theta(t^\prime-t)\mathcal{G}_\Theta$, where $\mathcal{G}_\Theta\in\mathcal{D}^\prime(\mathring{N}\times\mathring{N})$ is such that, for all $f\in \mathcal{D}(\mathring{N})$
\begin{align}\label{Equation: construction of the causal propagator}
\mathcal{G}_\Theta(f_1,f_2)\doteq\int_{\mathbb{R}^2}\textrm{d}t\textrm{d}t'\,\bigg( f_1(t)\bigg| A_\Theta^{-\frac{1}{2}}\sin\big[A_\Theta^{\frac{1}{2}}(t-t^\prime)\big]f_2(t^\prime)\bigg),
\end{align}
where $f(t)\in H^2(M)$ denotes the evaluation of $f$, regarded as an element of $C_{\textrm{c}}^\infty(\mathbb{R},H^\infty(M))$ and $A_\Theta^{-\frac{1}{2}}\sin\big[A_\Theta^{\frac{1}{2}}(t-t^\prime)]$ is defined exploiting the functional calculus for $A_{\Theta}$.
Moreover it holds that 
$$\mathsf{G}^\pm_\Theta\colon\mathcal{D}(\mathring{N})\to C^\infty(\mathbb{R},H^\infty_\Theta(M))\,,$$
where $H^\infty_\Theta(M)\doteq\bigcap_{k\geq 0}D(A_\Theta^k)$. %\subset H^\infty(M)$.
In particular,
\begin{align}\label{Equation: boundary condition for advanced and retarded propagators}
\gamma_1\big(\mathsf{G}^\pm_\Theta f\big)=\Theta\gamma_0\big(\mathsf{G}^\pm_\Theta f\big)
\qquad\forall f\in C^\infty_0(\mathring{N})\,.
\end{align}
\end{theorem}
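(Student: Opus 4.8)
The plan is to construct $\mathsf{G}^\pm_\Theta$ explicitly from the spectral calculus of $A_\Theta$, to verify the defining properties of Definition~\ref{Definition: advanced, retarded and causal propagator} by direct computation, and finally to read off \eqref{Equation: boundary condition for advanced and retarded propagators} from the fact that the range of $\mathsf{G}^\pm_\Theta$ lies inside $D(A_\Theta)$. \emph{Step 1 (spectral calculus).} By hypothesis $\sigma(A_\Theta)\subseteq[c,\infty)$ for some $c\in\mathbb{R}$, so for each fixed $t$ the entire functions $\lambda\mapsto\cos(t\sqrt{\lambda})$ and $\lambda\mapsto\sin(t\sqrt{\lambda})/\sqrt{\lambda}$ are bounded on $\sigma(A_\Theta)$, with bound locally uniform in $t$ --- the only potentially unbounded behaviour, namely $\cosh(t\sqrt{-\lambda})$ for $\lambda<0$, is confined to the compact set $[c,0]\cap\sigma(A_\Theta)$, which is precisely where the lower bound on the spectrum is used. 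Hence the functional calculus for $A_\Theta$ yields strongly continuous families of bounded self-adjoint operators $C(t)\doteq\cos(tA_\Theta^{1/2})$ and $S(t)\doteq A_\Theta^{-1/2}\sin(tA_\Theta^{1/2})$ on $L^2(M)$ with $C(0)=\mathbb{I}$, $S(0)=0$, $\dot C=-A_\Theta S$, $\dot S=C$; the operators $C(t),S(t)$ commute with $A_\Theta$, hence map each $D(A_\Theta^k)$ into itself and preserve $H^\infty_\Theta(M)$ (which $A_\Theta$ does too), and for $\phi\in H^\infty_\Theta(M)$ the maps $t\mapsto C(t)\phi$ and $t\mapsto S(t)\phi$ are smooth with values in $H^\infty_\Theta(M)$, because $\partial_t^{2k}S(t)\phi=(-1)^kS(t)A_\Theta^k\phi$ and $\partial_t^{2k+1}S(t)\phi=(-1)^kC(t)A_\Theta^k\phi$.

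\emph{Step 2 (construction, regularity, reproduction of the formulas).} For $f\in\mathcal{D}(\mathring{N})=\mathcal{D}(\mathring{M}\times\mathbb{R})$ write $f(t)\doteq f(\cdot,t)\in\mathcal{D}(\mathring{M})\subseteq H^\infty_\Theta(M)$, so $t\mapsto f(t)$ belongs to $C^\infty_{\textrm{c}}(\mathbb{R},H^\infty_\Theta(M))$, and set $(\mathsf{G}^-_\Theta f)(t)\doteq\int_{-\infty}^tS(t-t')f(t')\,\textrm{d}t'$ and $(\mathsf{G}^+_\Theta f)(t)\doteq-\int_t^{+\infty}S(t-t')f(t')\,\textrm{d}t'$, Bochner integrals over compact intervals valued in $H^\infty_\Theta(M)$. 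By Step 1 and differentiation under the integral sign, $\mathsf{G}^\pm_\Theta f\in C^\infty(\mathbb{R},H^\infty_\Theta(M))$, which is the asserted mapping property. Moreover $(\mathsf{G}^-_\Theta-\mathsf{G}^+_\Theta)f(t)=\int_{\mathbb{R}}S(t-t')f(t')\,\textrm{d}t'$, which after pairing with $f_1$ reproduces \eqref{Equation: construction of the causal propagator}; separate continuity in $(f_1,f_2)$ together with the Schwartz kernel theorem identify the corresponding $\mathcal{G}_\Theta\in\mathcal{D}^\prime(\mathring{N}\times\mathring{N})$, and since $\mathcal{G}_\Theta|_{t=t'}=0$ (as $S(0)=0$) while $\partial_t\mathcal{G}_\Theta|_{t=t'}=\delta_{\textrm{diag}(\mathring{M})}$ the products $\mathcal{G}^-_\Theta=\theta(t-t')\mathcal{G}_\Theta$ and $\mathcal{G}^+_\Theta=-\theta(t'-t)\mathcal{G}_\Theta$ are well defined (alternatively via \cite[Th.~8.2.10]{Hormander}) and coincide with the bidistributions in the statement. \emph{Step 3 (Green's identities).} Differentiating $\mathsf{G}^\pm_\Theta f$ twice in $t$ and using $S(0)=0$, $\dot S=C$, $C(0)=\mathbb{I}$, $\dot C=-A_\Theta S$ gives $(\partial_t^2+A_\Theta)\mathsf{G}^\pm_\Theta f=f$, while integrating by parts twice in $t'$ gives $\mathsf{G}^\pm_\Theta(\partial_t^2+A_\Theta)f=f$; since $f$ and $\mathsf{G}^\pm_\Theta f$ take values in $H^\infty_\Theta(M)\subseteq D(A^*)$, on which $A_\Theta$ acts as the differential operator $A$, these are exactly the first identities in \eqref{Equation: defining properties of advanced propagator} for $\square=\partial_t^2+A$. \emph{Step 4 (boundary condition).} For each $t$ the vector $\mathsf{G}^\pm_\Theta f(t)$ is a Bochner integral of elements $S(t-t')f(t')$; since $S(t-t')$ commutes with every power of $A_\Theta$ and $f(t')\in H^\infty_\Theta(M)$, each such element, hence also $\mathsf{G}^\pm_\Theta f(t)$, lies in $\bigcap_{k\ge0}D(A_\Theta^k)=H^\infty_\Theta(M)\subseteq D(A_\Theta)=\ker(\gamma_1-\Theta\gamma_0)$ by the definition of $A_\Theta$ in Proposition~\ref{Proposition: characterization of self adjoint extension via boundary triples}, which is precisely \eqref{Equation: boundary condition for advanced and retarded propagators}.

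\emph{Step 5 (support and uniqueness) --- the hard part.} The construction already gives $\mathsf{G}^+_\Theta f(t)=0$ for $t>\sup\{\,t':f(\cdot,t')\neq0\,\}$ and $\mathsf{G}^-_\Theta f(t)=0$ for $t<\inf\{\,t':f(\cdot,t')\neq0\,\}$, i.e.\ one-sided boundedness in $t$ of the support, but the sharp inclusion $\textrm{supp}(\mathsf{G}^\pm_\Theta f)\subseteq J^\mp(\textrm{supp}\,f)$ --- and, with it, uniqueness of $\mathsf{G}^\pm_\Theta$ among Green's operators whose range respects the $\Theta$ boundary condition --- requires a genuine finite-propagation-speed estimate for $\partial_t^2+A_\Theta$. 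I would argue that a homogeneous solution $u\in C^\infty(\mathbb{R},H^\infty_\Theta(M))$ has conserved total energy $\|\dot u(t)\|^2_{L^2(M)}+(u(t)\,|\,A_\Theta u(t))_{L^2(M)}$, the boundary term in the integration by parts vanishing exactly because $A_\Theta$ is self-adjoint, and then localise this identity to a truncated backward light cone of an interior point: the term on the lateral null boundary carries the favourable sign, so a Gronwall argument forces $u$ to vanish outside the causal region, and applying this to the difference of two Green's operators yields uniqueness. The delicate point --- and the only place where more than bookkeeping with the functional calculus is involved --- is the contribution to the local energy identity from where the truncated cone meets $\partial M$: this is where one must convert the self-adjointness of $A_\Theta$ (equivalently the admissibility of $\Theta$) into the statement that such a boundary term has a sign compatible with the estimate, so that reflections off the timelike boundary stay within $J^\mp(\textrm{supp}\,f)$. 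I expect this energy estimate to be isolated as a separate lemma and invoked here.
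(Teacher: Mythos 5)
Your Steps 1--4 reproduce the paper's argument essentially verbatim: the spectral-calculus construction of $S(t)=A_\Theta^{-1/2}\sin(tA_\Theta^{1/2})$ using the lower bound on $\sigma(A_\Theta)$, the regularity $\mathsf{G}^\pm_\Theta f\in C^\infty(\mathbb{R},H^\infty_\Theta(M))$ obtained by commuting powers of $A_\Theta$ through the integral (the paper phrases this as $(1+A_\Theta)^k[\mathsf{G}_\Theta f](t)=\mathsf{G}_\Theta[(1+A)^kf](t)\in L^2(M)$), the Green identities, and the derivation of \eqref{Equation: boundary condition for advanced and retarded propagators} from $[\mathsf{G}^\pm_\Theta f](t)\in H^\infty_\Theta(M)\subseteq D(A_\Theta)=\ker(\gamma_1-\Theta\gamma_0)$ are all correct and are exactly what the paper does.

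The genuine gap is Step 5. You rightly identify the support property and uniqueness as the crux and even diagnose the delicate point (the boundary contribution where the causal region meets $\partial M$), but you then defer the whole estimate to a lemma you ``expect'' to be isolated elsewhere; no such separate lemma exists, and what you actually establish is only one-sided boundedness of $\textrm{supp}(\mathsf{G}^\pm_\Theta f)$ in $t$, which is strictly weaker than $\textrm{supp}(\mathsf{G}^\pm_\Theta f)\subseteq J^\mp(\textrm{supp}f)$ and yields neither the causal support property nor uniqueness. The paper proves this inline, and not quite by the route you sketch (global energy conservation of $\|\dot u\|^2+(u|A_\Theta u)$ localised to a truncated null cone --- note this quantity need not even be positive when $A_\Theta$ has negative spectrum). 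Instead it fixes a compact $K\subseteq\{0\}\times M$, sets $U_K\doteq N\setminus J(M\setminus K)$, $K_t\doteq M_t\cap U_K$, and uses the local functional $E_K[u](t)\doteq\tfrac12\big[C_\infty\|u(t)\|^2_{L^2(K_t)}+\|\dot u(t)\|^2_{L^2(K_t)}+\|\nabla_{\beta^{-1}g}u(t)\|^2_{L^2(K_t)}\big]$ with $C_\infty\doteq\sup_{U_K}|A-\Delta_{\beta^{-1}g}|$; the zeroth-order term is needed precisely because $A$ is not the pure Laplacian, so one aims not at conservation but at a Gr\"onwall inequality. Differentiating in $t$ produces negative trace terms on $\partial K_t$ coming from the shrinkage of the slices, and the term $(\gamma_{0,K_t}\dot u\,|\,\gamma_{1,K_t}u)_{L^2(\partial K_t)}$ produced by the divergence theorem is absorbed into them by Cauchy--Schwarz, giving $\tfrac{\textrm{d}}{\textrm{d}t}E_K[u](t)\le b\,E_K[u](t)$, hence $u\equiv0$ on $U_K$ for homogeneous $u$ with vanishing data on $K$; uniqueness follows by applying this to the difference of two propagators, which solves the homogeneous equation with vanishing data. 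To make your Step 5 a proof you must actually carry out this (or an equivalent) computation, in particular exhibiting how all trace terms on $\partial K_t$ --- including the portion lying in $\partial M$, which does not recede with $t$ and is where the admissibility of $\Theta$ must be used --- are controlled.
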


\begin{proof}
As starting point, we observe that, since per assumption $A_{\Theta}$ is bounded from below,
%$\sigma(A_\Theta)\cap(-\infty,0)$ is bounded,
the function $\sigma(A_\Theta)\ni\lambda\mapsto\lambda^{-1/2}\sin(\sqrt{\lambda}\tau)$ is smooth and bounded for all $\tau\in\mathbb{R}$.
Hence, for any $f\in C^\infty_0(\mathring{N})$ \eqref{Equation: construction of the causal propagator} identifies an element $\mathsf{G}_\Theta f\in C^\infty(\mathbb{R},D(A^{1/2}_{\Theta}))\subset\mathcal{D}^\prime(\mathring{N})$. Moreover, for all $k\in\mathbb{N}\cup\{0\}$ and for all $t\in\mathbb{R}$, we have
$$(1+A_\Theta)^k[\mathsf{G}_\Theta f](t)=\mathsf{G}_\Theta[(1+A_\Theta)^kf](t)=\mathsf{G}_\Theta[(1+A)^kf](t)\,,$$
which identifies an element of $L^2(M)$ since  $(1+A)^kf\in \mathcal{D}(\mathring{N})$.
Therefore $[\mathsf{G}_\Theta f](t)$ lies in $H^\infty_\Theta(M)$ and equation \eqref{Equation: boundary condition for advanced and retarded propagators} follows.
Notice that, since the hypothesis on $\beta$ guarantees that $A$ is a uniformly elliptic operator, $H^\infty_\Theta(M)\subset H^\infty(M)$.

The operator $\mathsf{G}_\Theta$ identifies unambiguously $\mathcal{G}_\Theta\in\mathcal{D}^\prime(\mathring{N}\times\mathring{N})$ satisfying per construction \eqref{eq:PDE_for_G}. In order to claim that $\mathcal{G}^+\doteq\theta(t-t^\prime)\mathcal{G}$ and $\mathcal{G}^-=-\theta(t^\prime-t)\mathcal{G}$ are the sought fundamental solutions we need to prove that $\textrm{supp}(\mathcal{G}^\pm(f))\subseteq J^\mp(\textrm{supp}(f))$. To this end it suffices to show that $\textrm{supp}(\mathcal{G}(f))\subseteq J(\textrm{supp}(f))$, $J(\textrm{supp}(f))\doteq J^-(\textrm{supp}(f))\cup J^+(\textrm{supp}(f))$.

This property is in turn a consequence of an energy estimate.
Let $u(t)\doteq[\mathcal{G}_\Theta(f)](t)$ and let $K\subseteq  M\simeq \{0\}\times M$ be a compact set.
We prove that, if $u(0)|_K=\dot{u}(0)|_K=0$, then $u$ vanishes on $U_K\doteq N\setminus J( M\setminus K)$.
As a consequence $u(t)$ vanishes on $K_t\doteq M_t\cap U_K$ for all $t\in\mathbb{R}$, where $ M_t\doteq\{t\}\times M$.
To this end we introduce the positive energy functional
$$E_K[u](t)\doteq\frac 12\big[C_\infty\|u(t)\|^2_{L^2(K_t)}+\|\dot{u}(t)\|^2_{L^2(K_t)}+\|\nabla_{\beta^{-1}g} u(t)\|^2_{L^2(K_t)}\big]\,,$$
where $\dot{u}(t)$ stands for $\frac{\textrm{d}}{\textrm{d}t}u(t)$ and where square integrability is defined still with respect to the metric-induced volume form, here left implicit.
Moreover we set $C_\infty\doteq\sup_{U_K}|C|>0$ where $C:=A-\Delta_{\beta^{-1}g}$, see equation \eqref{Equation: elliptic operator with beta dependence}.
A direct computation shows that
\begin{align*}
\frac{\textrm{d}}{\textrm{d}t}E_K[u](t)&=
C_\infty(\dot{u}(t)|u(t))-(\dot{u}(t)|[\Delta_{\beta^{-1}g}+C] u(t))+
(\nabla_{\beta^{-1}g}\dot{u}(t)|\nabla_{\beta^{-1}g} u(t))\\&-
\frac 12\big[C_\infty
\|\gamma_{0,K_t}u(t)\|^2_{L^2(\partial K_t)}+
\|\gamma_{0,K_t}\dot{u}(t)\|^2_{L^2(\partial K_t)}
+\|\gamma_{1,K_t}u(t)\|^2_{L^2(\partial K_t)}\big]\,,
\end{align*}
where $\gamma_{0,K_t}u(t)\doteq u(t)|_{\partial K_t}$, $\gamma_{1,K_t}u(t)\doteq\nabla_{\beta^{-1}g} u(t)|_{\partial K_t}$ are the traces of $u(t)$, $\nabla_{\beta^{-1}g} u(t)$ on $\partial K_t$, which are well-defined in view of the regularity of $u(t)$.
The divergence theorem yields
\begin{align*}
(\nabla_{\beta^{-1}g}\dot{u}(t)|\nabla_{\beta^{-1}g} u(t))&=
(\gamma_{0,K_t}\dot{u}(t)|\gamma_{1,K_t}u(t))_{L^2(\partial K_t)}+
(\dot{u}(t)|\Delta_{\beta^{-1}g} u(t))\\&\leq
\frac 12\big[
\|\gamma_{0,K_t}\dot{u}(t)\|^2_{L^2(\partial K_t)}+
\|\gamma_{1,K_t} u(t)\|^2_{L^2(\partial K_t)}\big]+
(\dot{u}(t)|\Delta_{\beta^{-1}g} u(t))\,.
\end{align*}
From this estimate we obtain a bound for the time derivative of $E_K[u](t)$, namely
\begin{align*}
\frac{\textrm{d}}{\textrm{d}t}E_K[u](t)&\leq
C_\infty(\dot{u}(t)|u(t))-(\dot{u}(t)|Cu(t)) %-\frac 12\|\gamma_{0,K_t}u(t)\|^2_{L^2(\partial K_t)}\leq %\\ &\leq
\leq b E_K[u](t)\,,
\end{align*}
where $b>0$.
By Gr\"onwall's inequality it follows that $E_K[u](t)\leq e^{bt}E_K[u](0)$ which vanishes because of our hypothesis on $u$.
Since $E_K[u](t)$ is positive it follows that $u(t)=0$ vanishes on $K_t$ for all $t\in\mathbb{R}$. This proves that $\mathcal{G}$ enjoys the desired support property. In addition the energy estimate also entails the uniqueness of $\mathcal{G}$ and in turn of $\mathcal{G}^\pm$. Suppose a second pair of retarded and advanced fundamental solutions exist, say $\mathcal{G}^\pm_2$. Then, calling $\mathcal{G}_2=\mathcal{G}^-_2-\mathcal{G}^+_2$, it would hold that, for every $f\in\mathcal{D}(N)$, $\mathcal{G}(f,\cdot)-\mathcal{G}_2(f,\cdot)$ is a solution of the D'Alembert wave equation with vanishing initial data. The energy estimate entails that this must be the zero solution, from which uniqueness descends.
\end{proof}

\begin{remark}
One may wonder for which $\Theta$ the self-adjoint extension $A_\Theta$ is bounded from below.
To the best of our knowledge this is an open problem.
In \cite{Grubb70} it was shown that, if the Dirichlet extension $A_0=A^*|_{\ker\gamma_0}$ has a finite lower bound $m(A_0)$ then $A_\Theta$ has a finite lower bound $m(A_\Theta)$ whenever that $\Theta$ has a lower bound such that $m(\Theta)>-m(A_0)$.
Moreover, in this latter case it holds $m(A_\Theta)>m(\Theta)m(A_0)[m(\Theta)+m(A_0)]^{-1}$.
\end{remark}

\begin{remark}
	Observe that, since the operator $A$ is per assumption uniformly elliptic, $H^\infty_\Theta(M)\subset H^\infty(M)$ where $H^\infty(M)\doteq\bigcap_{k\geq 0} H^k(M)$. In view of Remark \ref{Rem:quotient_Spaces}, we can identify this space as the quotient between $H^\infty(\widehat{M})$ and the collection of those elements $v\in H^\infty(\widehat{M})$ such that $v|_M=0$. Since $\widehat{M}$ is a metric complete Riemannian manifold, being of bounded geometry, it turns out that each element $v\in H^\infty(\widehat{M})$ admits a representative in $C^\infty(\widehat{M})$, see \cite[Ch. 3]{Hebey}. Consequently every $u\in H^\infty_\Theta(M)$ admits a representative lying in $C^\infty(M)$.
\end{remark}

\begin{remark}
	The previous analysis proves existence and uniqueness of the fundamental solutions for a wide class of boundary conditions.
	Yet it is important to mention that using the theory of boundary triples is not the only possible path which can be followed.
	In \cite{Posilicano-18} self-adjoint extensions of the restriction of a given self-adjoint operator are classified via an efficient parametrization of their resolvent.
	In \cite{Ibort:2013, Ibort:2014sua}, a class of self-adjoint extensions of the Laplace-Beltrami operator on a smooth Riemannian manifold $M$ with smooth compact boundary $\partial M$ has been studied in the framework of quadratic forms.
	While the extension to a non-compact boundary has not been investigated yet, this method highlights that, among the collection of boundary conditions, a distinguished class is represented in terms of their interplay with the unitary representations of the isometry group of $\partial M$ acting on $L^2(\partial M)$. 
\end{remark}

\begin{remark}
	We expect that a result similar in spirit to Theorem \ref{Theorem: construction of the advanced and retarded propagators} can be derived also in the framework of \cite{DerezinskiSiemssen17}, up to a suitable modification of the geometrical setting considered therein.
	This would allow to consider the wave operator with the insertion of an electromagnetic potential and with possibly low regularity of both the metric and the electromagnetic potential in the sense of \cite{DerezinskiSiemssen17}.
	We are currently investigating this topic.
\end{remark}

To conclude the analysis of the structural properties of the fundamental solutions, we extend to the case at hand a result which, in the case of globally hyperbolic spacetimes, was proven in \cite[Th. 3.4.7]{BGP}. The goal is to prove that the advanced and retarded Green operators allow for a complete characterization of the space of smooth solutions of the D'Alembert wave equation.
In view of equations \eqref{Equation: conformal relation between wave operators} and \eqref{Equation: conformal relation between propagators} we can still work with the operator $\partial^2_t+A$. To this end, first of all we need to enlarge the domain of definition of the operators $\mathsf{G}^\pm_\Theta$ defined in Theorem \ref{Theorem: construction of the advanced and retarded propagators}.
\begin{Definition}
Under the same hypotheses of Theorem \ref{Theorem: construction of the advanced and retarded propagators}, let $H^\infty_\Theta(M)\doteq\bigcap_{k\geq 1}D(A_\Theta^k)$. Hence, motivated by the nomenclature in \cite{Baer15} we call:
\begin{enumerate}
	\item $C^\infty_{\textrm{fc}}(\mathbb{R},H^\infty_\Theta(M))$ the space of {\em future compact}, $H^\infty_\Theta$-valued smooth functions, that is the collection of $f\in C^\infty(\mathbb{R},H^\infty_\Theta(M))$ for which $J^+_{\mathbb{R}\times_{\beta} M}(x)\cap\textrm{supp}(f)$ is compact or empty for all $x\in\mathbb{R}\times_{\beta} M$, where $J^+_{\mathbb{R}\times_{\beta} M}$ stands for the causal future in $\mathbb{R}\times_{\beta} M$ endowed with the metric \eqref{Equation: conformally rescaled metric},
	\item $C^\infty_{\textrm{pc}}(\mathbb{R},H^\infty_\Theta(M))$ the space of {\em past compact} $H^\infty_\Theta(M)$-valued smooth functions, that is the collection of $f\in C^\infty(\mathbb{R},H^\infty_\Theta(M))$ for which $J^-_{\mathbb{R}\times_{\beta} M}(x)\cap\textrm{supp}(f)$ is compact or empty for all $x\in\mathbb{R}\times_{\beta} M$, where $J^-_{\mathbb{R}\times_{\beta} M}$ stands for the causal past in $\mathbb{R}\times_{\beta} M$ endowed with the metric \eqref{Equation: conformally rescaled metric},
	\item $C^\infty_{\textrm{tc}}(\mathbb{R},H^\infty_\Theta(M))\doteq C^\infty_{\textrm{pc}}(\mathbb{R},H^\infty_\Theta(M))\cap C^\infty_{\textrm{fc}}(\mathbb{R},H^\infty_\Theta(M))$ the collection of {\em timelike compact} $H^\infty_\Theta(M)$-valued smooth functions.
\end{enumerate}
\end{Definition}
\noindent By combining Theorem \ref{Theorem: construction of the advanced and retarded propagators} and the results of \cite{Baer15}, it descends that $\mathsf{G}^\pm_\Theta$ extend respectively to linear maps
\begin{align*}
\mathsf{G}^+_\Theta\colon C^\infty_{\textrm{fc}}(\mathbb{R},H^\infty_\Theta(M))\to C^\infty(\mathbb{R},H^\infty_\Theta(M))\,,\qquad
\mathsf{G}^-_\Theta\colon C^\infty_{\textrm{pc}}(\mathbb{R},H^\infty_\Theta(M))\to C^\infty(\mathbb{R},H^\infty_\Theta(M))\,,
\end{align*}
which preserve the properties $\square_\Theta\circ\mathsf{G}^\pm_\Theta=\mathsf{G}^\pm_\Theta\circ\square_\Theta=\textrm{id}$ and $\textrm{supp}(\mathsf{G}^\pm f)\subseteq J^\mp(\textrm{supp}(f))$.

\begin{proposition}\label{Prop:exact_sequence}
Under the hypotheses of Theorem \ref{Theorem: construction of the advanced and retarded propagators}, let $\square_\Theta\doteq\partial_t^2+A_\Theta$.
Then the following is an exact sequence of linear maps:
\begin{align}\label{Equation: short exact sequence for spacetimes with timelike boundary}
0\to C^\infty_{\textrm{tc}}(\mathbb{R},H^\infty_\Theta(M))
\stackrel{\square_\Theta}{\longrightarrow}
C^\infty_{\textrm{tc}}(\mathbb{R},H^\infty_\Theta(M))
\stackrel{\mathsf{G}_\Theta}{\longrightarrow}
C^\infty(\mathbb{R},H^\infty_\Theta(M))
\stackrel{\square_\Theta}{\longrightarrow}
C^\infty(\mathbb{R},H^\infty_\Theta(M))\to 0\,.
\end{align}
\end{proposition}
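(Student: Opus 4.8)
The strategy is to transcribe to the present setting the purely algebraic manipulations behind the analogous exact sequence on globally hyperbolic spacetimes, \cite[Th.~3.4.7]{BGP} and \cite{Baer15}, the only non-formal ingredient being two causality facts — that the temporal slabs of $N$ are past/future compact, and that $J^+(K)\cap J^-(K)$ is timelike compact whenever $K$ is — which I will deduce from the corresponding, classical statements on the globally hyperbolic overspacetime $(\widehat N,\widehat h)=\widehat M\times_{\widehat\beta}\mathbb{R}$ of Proposition~\ref{prop:static_spacetime_with_boundary} and Remark~\ref{Remark_globally_hyperbolic}, using that every causal curve of $N$ is a causal curve of $\widehat N$, so that $J^\pm_N(x)\subseteq J^\pm_{\widehat N}(x)\cap N$. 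Throughout I will freely use the properties of the extended operators $\mathsf{G}^\pm_\Theta$ recalled just before the statement, namely $\square_\Theta\circ\mathsf{G}^\pm_\Theta=\mathsf{G}^\pm_\Theta\circ\square_\Theta=\textrm{id}$ on $C^\infty_{\textrm{fc}}$, resp.\ $C^\infty_{\textrm{pc}}$, together with $\textrm{supp}(\mathsf{G}^\pm_\Theta f)\subseteq J^\mp(\textrm{supp}(f))$, as well as the elementary fact that multiplication by an element of $C^\infty_b(\mathbb{R})$ and the operators $\partial_t$, $A_\Theta$ all preserve $C^\infty(\mathbb{R},H^\infty_\Theta(M))$.

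\emph{Exactness at the first two terms.} If $u\in C^\infty_{\textrm{tc}}(\mathbb{R},H^\infty_\Theta(M))$ and $\square_\Theta u=0$ then, since $C^\infty_{\textrm{tc}}\subseteq C^\infty_{\textrm{fc}}$, one gets $u=\mathsf{G}^+_\Theta(\square_\Theta u)=0$ (alternatively, the energy estimate of Theorem~\ref{Theorem: construction of the advanced and retarded propagators} applies after noting that a timelike compact solution vanishes for $|t|$ large); this is injectivity of $\square_\Theta$ on $C^\infty_{\textrm{tc}}$. Next, $\mathsf{G}_\Theta\circ\square_\Theta=\mathsf{G}^-_\Theta\circ\square_\Theta-\mathsf{G}^+_\Theta\circ\square_\Theta=0$ on $C^\infty_{\textrm{tc}}$, so $\textrm{Ran}(\square_\Theta|_{C^\infty_{\textrm{tc}}})\subseteq\ker\mathsf{G}_\Theta$. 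Conversely, let $f\in C^\infty_{\textrm{tc}}$ with $\mathsf{G}_\Theta f=0$ and set $u\doteq\mathsf{G}^+_\Theta f=\mathsf{G}^-_\Theta f$, so that $\textrm{supp}(u)\subseteq J^+(\textrm{supp}(f))\cap J^-(\textrm{supp}(f))$. The plan is to check that this last set is timelike compact: for fixed $x\in N$, $C\doteq\textrm{supp}(f)\cap J^+(x)$ is compact by future compactness of $f$, whence $J^+(x)\cap J^-(\textrm{supp}(f))=J^+(x)\cap J^-(C)$ is contained in $J^+_{\widehat N}(x)\cap J^-_{\widehat N}(C)\cap N$, which is compact since $J^+_{\widehat N}(\{x\})\cap J^-_{\widehat N}(C)$ is compact in the globally hyperbolic $\widehat N$ and $N$ is closed in $\widehat N$; the past case is symmetric. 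Hence $u\in C^\infty_{\textrm{tc}}(\mathbb{R},H^\infty_\Theta(M))$ and $\square_\Theta u=\square_\Theta\mathsf{G}^+_\Theta f=f$.

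\emph{Exactness at the last two terms.} From $\square_\Theta\circ\mathsf{G}^\pm_\Theta=\textrm{id}$ we get $\square_\Theta\circ\mathsf{G}_\Theta=0$, i.e.\ $\textrm{Ran}(\mathsf{G}_\Theta)\subseteq\ker(\square_\Theta|_{C^\infty})$. For the converse, fix $\chi\in C^\infty(\mathbb{R})$ with $\chi\equiv 0$ on $(-\infty,-T]$ and $\chi\equiv 1$ on $[T,\infty)$; one checks, again via $\widehat N$ — a finite coordinate-time causal cone in $\widehat N$ has bounded $\widehat g/\widehat\beta$-diameter and $(\widehat M,\widehat g/\widehat\beta)$ is complete — that $\{t\geq -T\}$ is past compact and $\{t\leq T\}$ is future compact in $N$. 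Given $u\in C^\infty(\mathbb{R},H^\infty_\Theta(M))$ with $\square_\Theta u=0$, put $f\doteq\square_\Theta(\chi u)=(\partial_t^2\chi)\,u+2(\partial_t\chi)\,\partial_t u$, which is supported in the slab $\{-T\leq t\leq T\}$, hence timelike compact. Since $\chi u\in C^\infty_{\textrm{pc}}$, $(1-\chi)u\in C^\infty_{\textrm{fc}}$ and $f=\square_\Theta(\chi u)=-\square_\Theta((1-\chi)u)$, applying $\mathsf{G}^-_\Theta$ and $\mathsf{G}^+_\Theta$ yields $\mathsf{G}^-_\Theta f=\chi u$ and $\mathsf{G}^+_\Theta f=-(1-\chi)u$, so $\mathsf{G}_\Theta f=\mathsf{G}^-_\Theta f-\mathsf{G}^+_\Theta f=\chi u+(1-\chi)u=u$, proving $\ker(\square_\Theta|_{C^\infty})\subseteq\textrm{Ran}(\mathsf{G}_\Theta)$. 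Finally, for surjectivity of $\square_\Theta$ on $C^\infty(\mathbb{R},H^\infty_\Theta(M))$, given $g$ write $g=\chi g+(1-\chi)g$ with $\chi g\in C^\infty_{\textrm{pc}}$ and $(1-\chi)g\in C^\infty_{\textrm{fc}}$, and set $u\doteq\mathsf{G}^-_\Theta(\chi g)+\mathsf{G}^+_\Theta((1-\chi)g)\in C^\infty(\mathbb{R},H^\infty_\Theta(M))$; then $\square_\Theta u=\chi g+(1-\chi)g=g$.

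\emph{Main obstacle.} The only genuinely delicate point is the handling of the timelike boundary $\partial N$ in the two causality statements invoked above; without the boundary the argument is an immediate copy of \cite{BGP,Baer15}. The plan is to bypass it by working inside the globally hyperbolic ambient $\widehat N$ of Proposition~\ref{prop:static_spacetime_with_boundary}, using $J^\pm_N\subseteq J^\pm_{\widehat N}\cap N$, the metric completeness of $(\widehat M,\widehat g/\widehat\beta)$ (so finite coordinate-time causal cones have compact closure), and the closedness of $N$ in $\widehat N$ to upgrade relative compactness to compactness. A minor additional care is the interpretation of $\textrm{supp}$ and of $C^\infty_{\textrm{fc}},C^\infty_{\textrm{pc}},C^\infty_{\textrm{tc}}$ for $H^\infty_\Theta(M)$-valued functions regarded as functions on $N$: since $A$ is uniformly elliptic, $H^\infty_\Theta(M)\subseteq H^\infty(M)\subseteq C^\infty(M)$, so these are genuine smooth functions on $N$ and the calculus of \cite{Baer15} applies verbatim.
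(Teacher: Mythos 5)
Your proof is correct and follows essentially the same route as the paper's (which itself transcribes \cite[Th.~3.4.7]{BGP}): injectivity via $\mathsf{G}^+_\Theta\circ\square_\Theta=\mathrm{id}$, exactness in the middle via the support property of $\mathsf{G}^\pm_\Theta$ and timelike compactness of $J^+(\mathrm{supp}\,f)\cap J^-(\mathrm{supp}\,f)$, and exactness at the last two terms via a past/future-compact splitting $u=\chi u+(1-\chi)u$ (the paper uses an abstract, non-unique decomposition $f=f_++f_-$ where you use an explicit temporal cutoff, an immaterial difference). You additionally verify the causal-compactness facts by passing to the globally hyperbolic ambient $\widehat N$, details the paper leaves implicit.
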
 

\begin{proof}
The proof follows the same steps of \cite[Th. 3.4.7]{BGP}.
Let $f\in C^\infty_{\textrm{tc}}(\mathbb{R},H^\infty_\Theta(M))$ be such that $\square_\Theta f=0$. Hence $f=\mathsf{G}^+_\Theta\square_\Theta f =0$, which shows the injectivity of the first arrow ruled by $\square_\Theta$. 

For all $f\in C^\infty_{\textrm{tc}}(\mathbb{R},H^\infty_\Theta(M))$ it holds from Definition \ref{Definition: advanced, retarded and causal propagator} 
%together with Lemma \ref{Lem:equivalence_of_Green_operators}
that $\square_\Theta\mathsf{G}_\Theta f=\mathsf{G}_\Theta\square_\Theta f=0$. Therefore $\square_\Theta C^\infty_{\textrm{tc}}(\mathbb{R},H^\infty_\Theta(M))\subseteq\ker\mathsf{G}_\Theta$. Let now $f\in C^\infty_{\textrm{tc}}(\mathbb{R},H^\infty_\Theta(M))$ be such that $\mathsf{G}_\Theta f=0$. Then $\psi=\mathsf{G}^+_\Theta f =-\mathsf{G}^-f$ belongs to $C^\infty_{\textrm{tc}}(\mathbb{R},H^\infty_\Theta(M))$ and $\square_\Theta\psi=f$, which entails $f\in\square_\Theta C^\infty_\Theta(\mathbb{R},H^\infty_\Theta(M))$. Hence $\textrm{Im}(\square_\Theta)=\ker\mathsf{G}_\Theta$ which prove the exactness of the second and of third arrow. We now show that
$\mathsf{G}_\Theta[C^\infty_{\textrm{tc}}(\mathbb{R},H^\infty_\Theta(M))]=\ker[\square_\Theta|_{C^\infty(\mathbb{R},H^\infty_\Theta(M))}]$.
The inclusion $\subseteq$ follows from $\square_\Theta\mathsf{G}_\Theta C^\infty_{\textrm{tc}}(\mathbb{R},H^\infty_\Theta(M))=0$.
Let $f\in C^\infty(\mathbb{R},H^\infty_\Theta(M))$ be such that $\square_\Theta f=0$.
Let us consider its (non-unique) decomposition as $f=f_++f_-$ with $f_\pm\in C^\infty(\mathbb{R},H^\infty_\Theta(M))$ where $f_+$ (resp. $f_-$) is future (resp. past) compact. It follows that $\square_\Theta f_-\in C^\infty_{\textrm{pc}}(\mathbb{R},H^\infty_\Theta(M))\subset C^\infty(\mathbb{R},H^\infty_\Theta(M))$ and from the support properties of $\mathsf{G}^\pm_\Theta$ combined with $\square_\Theta f_+=-\square_\Theta f_-$,
$$\mathsf{G}_\Theta\square_\Theta f_-=\mathsf{G}^-_\Theta\square_\Theta f_-+\mathsf{G}^+_\Theta\square_\Theta f_+ = f_++f_-=f\,,$$
from which it descends the exactness of the third and forth arrows.
Finally, the last arrow entails instead that, for every $h\in C^\infty(\mathbb{R},H^\infty_\Theta(M))$ there exists $f\in C^\infty(\mathbb{R},H^\infty_\Theta(M))$ such that $\Box_\Theta f=h$. Similarly to the proof of the previous points, it suffices to split $h=h_+ +h_-$ where $h_+\in C^\infty_{\textrm{fc}}(\mathbb{R},H^\infty_\Theta(M))$ while $h_-\in C^\infty_{\textrm{pc}}(\mathbb{R},H^\infty_\Theta(M))$. As a consequence of the properties of $\mathsf{G}^\pm_\Theta$, the sought function is $f=\mathsf{G}^+h_++\mathsf{G}^- h_-$.
\end{proof}

\begin{remark}
	Observe that another, equivalent way of reading the sequence \eqref{Equation: short exact sequence for spacetimes with timelike boundary} of Proposition \ref{Prop:exact_sequence} is that the causal propagator $\mathsf{G}_\Theta\doteq\mathsf{G}^-_\Theta-\mathsf{G}^+_\Theta$ induces an isomorphism between the quotient of vector spaces $\frac{C^\infty_{\textrm{tc}}(\mathbb{R},H^\infty_\Theta(M))}{\square_\Theta[C^\infty_{\textrm{tc}}(\mathbb{R},H^\infty_\Theta(M))]}$ and 
	$\mathcal{S}_\Theta(\mathbb{R}\times_{\beta} N)\doteq\{u\in C^\infty(\mathbb{R}; H^\infty_\Theta(M))\;|\;\square_\Theta u = 0\}.$
\end{remark}

\begin{Example}\label{Example: propagators for half space}
To better illustrate the above analysis we discuss in detail an explicit example often used in the literature. Most notably we consider a static Lorentzian spacetime with boundary $N=M\times\mathbb{R}$ with warping factor $\beta=1$. The underlying manifold $(M,g)$ is assumed to be $\mathbb{R}_+\times\mathbb{R}^n$, $n\geq 0$, endowed with the standard Euclidean metric. Let $(\gamma_0,\gamma_1,L^2(\mathbb{R}^n))$ be the boundary triple as per Proposition \ref{Proposition: boundary triple of the Laplacian} for $n>1$ while, for $n=0$ the boundary Hilbert space is $\mathbb{C}$.
The Laplace-Beltrami operator associated to $(M,g)$ reads in Cartesian coordinates
$$\Delta_g=-\partial_{x_1}^2-\sum\limits_{i=2}^{n+1}\partial_{x_i}^2\,,$$
and we indicate with $\Delta_\Theta$ the self-adjoint extension induced by $\Theta$ according to Theorem \ref{Theorem: spectral properties via Weil function}. If $n=0$ $\Delta$ becomes simply $-\partial_{x_1}^2$. This is a special case, for which the following discussion is not necessary. In order to apply Theorem \ref{Theorem: construction of the advanced and retarded propagators}, the first step consists of checking that $\Delta_{\Theta}$ is bounded from below.
Taking the Fourier transform $\mathcal{F}_{\mathbb{R}^n}$ in all variables barring $x_1$ and using the notational shortcut $k^2\doteq\sum\limits_{i=2}^{n+1}k^2_i$ this question reduces to the study of the spectral property of 
$$\widetilde{\Delta}\doteq(1\otimes\mathcal{F}_{\mathbb{R}^n})^{-1}\circ\Delta_g\circ(1\otimes\mathcal{F}_{\mathbb{R}^n})=-\partial_{x_1}^2+k^2\,.$$
defined on the domain (the tilde symbol indicating that the domains refer to the operator $\widetilde{\Delta}$)
\begin{align*}
\widetilde{H}^2_0(\mathbb{R}_+\times\mathbb{R}^n)=
\{\psi\in L^2(\mathbb{R}_+\times\mathbb{R}^n)\;|\; \partial_{x_1}^2\psi,k^2\psi\in L^2(\mathbb{R}_+\times\mathbb{R}^n)\,,\quad\gamma_0(\psi)=\gamma_1(\psi)=0\}\,.
\end{align*}
The adjoint is defined on
\begin{align*}
\widetilde{H}^2(\mathbb{R}_+\times\mathbb{R}^n)=
\{\psi\in L^2(\mathbb{R}_+\times\mathbb{R}^n)\;|\; \partial_{x_1}^2\psi,k^2\psi\in L^2(\mathbb{R}_+\times\mathbb{R}^n)\}\,,\qquad
[\widetilde{\Delta}_h]^*\psi=\widetilde{\Delta}_h\psi.
\end{align*}

\noindent
The deficiency indexes of $\widetilde{\Delta}$ are equal, in particular $\ker(\widetilde{\Delta}\pm i)\simeq L^2(\mathbb{R}^n)$.
According to Proposition \ref{Proposition: characterization of self adjoint extension via boundary triples} any self-adjoint extension of $\widetilde\Delta$ can be obtained from a self-adjoint operator $\widetilde\Theta$ on $L^2(\mathbb{R})$ as
\begin{align*}
D(\widetilde{\Delta}_{\widetilde\Theta})\doteq\{\psi\in \widetilde{H}^2(\mathbb{R}_+\times\mathbb{R}^n)\;|\;\gamma_0(\psi)\in D(\widetilde\Theta)\,,\quad\gamma_1(\psi)=\widetilde\Theta\gamma_0(\psi)\}\,,
\end{align*}
where, for smooth $\psi$, $(\gamma_0\psi)(y)=\psi(0,y)$ while $(\gamma_1\psi)(y)=\partial_x\psi(0,y)$.
Notice that, since $1\otimes\mathcal{F}_{\mathbb{R}^n}\colon L^2(\mathbb{R}_+\times\mathbb{R}^n)\to L^2(\mathbb{R}_+\times\mathbb{R}^n)$ is a unitary operator, each self adjoint extension $\widetilde{\Delta}_{\widetilde{\Theta}}$ corresponds to a self adjoint counterpart $\Delta_\Theta$ where $\widetilde{\Theta}\doteq(1\otimes\mathcal{F}_{\mathbb{R}^n})^{-1}\circ\Theta\circ(1\otimes\mathcal{F}_{\mathbb{R}^n})$.

In order to evaluate the spectrum of $\widetilde{\Delta}_{\widetilde{\Theta}}$, first we focus on the Dirichlet self-adjoint extension
\begin{align*}
D(\widetilde{\Delta}_\infty)\doteq\{\psi\in \widetilde{H}^2(\mathbb{R}_+\times\mathbb{R}^n)\;|\;\gamma_0(\psi)=0\}\,.
\end{align*}
For each $\psi\in D(\widetilde{\Delta}_\infty)$, $\textrm{ext}[\psi](x_1,k)=\theta(x_1)\psi(x_1,k)-\theta(-x_1)\psi(-x_1,k)$ identifies an element in $\widetilde{H}^2(\mathbb{R}^{n+1})=(1\otimes\mathcal{F}_{\mathbb{R}^{n+1}})H^2(\mathbb{R}^{n+1})$, $\theta$ being the Heaviside distribution.
The sine transform along the $x_1$-variable yields
\begin{align*}
\widetilde{\Delta}_\infty\psi(x_1,k)=\frac{2}{\pi}\int_{\mathbb{R}_+}\textrm{d}\xi\,(\xi^2+k^2)\sin(\xi x_1)
\bigg[\int_{\mathbb{R}_+}\textrm{d}y_1\,\sin(\xi y_1)\psi(y_1,k)\bigg]\,,
\end{align*}
from which we read that $\widetilde{\Delta}_\infty$ coincides with the multiplication operator by $\xi^2+k^2$, which entails, in turn, $\sigma(\widetilde{\Delta}_\infty)=\sigma(\Delta_{\infty})=(0,+\infty)$.
\\
The remaining contribution to the spectrum of $\widetilde{\Delta}_{\widetilde{\Theta}}$ can be studied via Theorem \ref{Theorem: spectral properties via Weil function}.
In particular the Weyl function associated with the boundary triple $(L^2(\mathbb{R}^n),\gamma_0,\gamma_1)$ is,
\begin{align*}
[M(\lambda)g](k)=-\sqrt{k^2-\lambda}\,g(k)\qquad
\textrm{ for }
\lambda\in\rho(\widetilde{\Delta}_\infty)\cap\mathbb{R}=(-\infty,0)\,.
\end{align*}
Still applying Theorem \ref{Theorem: spectral properties via Weil function} we obtain that $\lambda\in\sigma(\widetilde{\Delta}_{\widetilde{\Theta}})$ if and only if $0\in\sigma(\widetilde{\Theta}-M(\lambda))=\sigma(\widetilde{\Theta}+\sqrt{k^2-\lambda})$.
\\
At this stage we specialize to a specific scenario: Robin boundary conditions. In other words $\Theta=\alpha\mathbb{I}$, $\mathbb{I}$ being the identity operator on $L^2(\mathbb{R}^n)$, while $\alpha\in\mathbb{R}$. Hence also  $\widetilde{\Theta}=\alpha\mathbb{I}$ and it descends
$$\sigma(\widetilde{\Theta}+\sqrt{k^2-\lambda})=\{\alpha+\sqrt{k^2-\lambda}\;|\; k\in\mathbb{R}\}\,.$$
Thus if $\alpha\geq 0$, $0\notin\sigma(\Theta-M(\lambda))$, while, if $\alpha< 0$, all negative values of $\lambda$ greater that $-\alpha^2$ are allowed.
Hence $\widetilde{\Delta}_{\alpha\mathbb{I}}$ is bounded from below for all $\alpha\in\mathbb{R}$ with $\sigma(\widetilde{\Delta}_{\alpha\mathbb{I}})=(-\alpha^2,+\infty)$ for $\alpha< 0$, while $\sigma(\widetilde{\Delta}_{\alpha\mathbb{I}})=(0,+\infty)$ for $\alpha\geq 0$.
%Thus, letting $\lambda<0$, if $\alpha\leq 0$, $0\notin\sigma(\Theta-M(\lambda))$, while, if $\alpha\geq 0$, all negative values of $\lambda$ are allowed since it always possible to find $k\in\mathbb{R}$ such that $\alpha+\sqrt{k^2-\lambda}=0$.
%Hence $\sigma(\widetilde\Delta_\Theta)\cap(-\infty,0)$ is not bounded.
A special scenario occurs if we consider $n=0$, since the Weyl function reads $M(\lambda)=\sqrt{-\lambda}$. In this case, if $\alpha<0$, the only admissible, negative value for $\lambda$ is $-\alpha^2$ and thus all conditions of Theorem \ref{Theorem: spectral properties via Weil function} are met. We observe that this result is consistent with that of \cite{Dappiaggi:2017wvj} for the study of a massless, conformally coupled real, scalar field on the Poincar\'e patch of AdS spacetime.
\\
We can thus write the causal propagator $\widetilde{\mathsf{G}}_{\alpha\mathbb{I}}$ for $\partial_t^2+\widetilde{\Delta}$ explicitly.
This is tantamount to computing $\mathsf{G}_{\alpha\mathbb{I}}$ up to the Fourier transform in all spatial variables barring $x_1$.
We have
\begin{align*}
\big[\,\widetilde{\mathsf{G}}_{\alpha\mathbb{I}}f\,\big](t,x_1,k)&=
\int_{\mathbb{R}}\int_{\mathbb{R}_+}\sin\big[(\xi^2+k^2)^{\frac 12}(t-s)\big](\xi^2+k^2)^{-\frac 12}\psi(x_1,\xi)\bigg[\int_{\mathbb{R}_+}\psi(y_1,\xi)f(s,y_1,k)\textrm{d}y_1\bigg]\textrm{d}\xi\,\textrm{d}s
\\ &-
\int_{\mathbb{R}}\sinh\big[(\alpha^2-k^2)^{\frac 12}(t-s)\big](\alpha^2-k^2)^{-\frac 12} 1_{(-\alpha,\alpha)}(k)e_\alpha(x_1)\bigg[\int_{\mathbb{R}_+}e_\alpha(y_1)f(s,y_1,k)\textrm{d}y_1\bigg]\textrm{d}s\,,
\end{align*}
where we defined
$$\psi(x_1,\xi)\doteq(\xi^2+\alpha^2)^{-\frac 12}[\xi\cos(\xi x_1)+\alpha\sin(\xi x_1)]\,\qquad e_\alpha(x)\doteq (2\alpha)^{\frac 12}e^{-\alpha x}\,.$$
Another relevant choice for $\Theta$ is a differential operator.
This may model a Riemannian version of the well-known Wentzell boundary conditions (which will be treated in the next section).
In particular, in this latter case the operator $\widetilde{\Theta}$ acts as $[\widetilde{\Theta}f](k)=p_\Theta(k)f(k)$ where $p_\Theta$ is the symbol associated with $\Theta$ via Fourier transform.
Thus the condition for $\lambda<0$ to be in the spectrum of $\Delta_\Theta$ is that
$0\in\{p_\Theta(k)-\sqrt{k^2-\lambda}|\quad k \in\mathbb{R}\}\,.$
\end{Example}

\subsection{Dynamical boundary conditions}\label{Sec:Dynamical_boundary_condition}
Theorem \ref{Theorem: construction of the advanced and retarded propagators} and the analysis of the previous section were tied to the construction of the advanced and retarded Green's operators $\mathsf{G}^\pm_\Theta$ in terms of boundary conditions ascribed to a choice both of boundary triple and of a self-adjoint operator $\Theta$ on the boundary Hilbert space $L^2(\partial M)$. From the Lorentzian viewpoint, this scenario can be interpreted as assigning via $\Theta$ a time-independent boundary condition. 

Nonetheless, in many models and applications, this is not sufficient since one wishes to relax the hypothesis of time-independence, see {\it e.g.} \cite{Feller57, Zahn18} for a specific example. We show how it is possible to extend our previous analysis to encompass also most of these scenarios.

In the rest of the section, we consider $(N,h)$, a generic Lorentzian, static spacetime, with timelike boundary 
where $h$ is given by equation \eqref{Equation: conformally rescaled metric}.
We assume that the warping factor $\beta$ is chosen in such a way that, writing the D'Alembert wave operator as $\square = \partial^2_t+A$ with $A$ defined as per equation \eqref{Equation: elliptic operator with beta dependence}, $A$ is a uniformly elliptic operator on $L^2(M)$.

In addition, let $(\gamma_0,\gamma_1,L^2(\partial M))$ be the boundary triple associated to $A^*$ as per \eqref{eq:boundary_triple_for_A*}. With $(\widetilde{\gamma}_0,\widetilde{\gamma}_1)$ we denote the natural extension of $(\gamma_0,\gamma_1)$ to $C^\infty(\mathbb{R};D(A^*))$, while the boundary condition is implemented by restricting to the subspace of $C^\infty(\mathbb{R};D(A^*))$ whose elements $u$ satisfy

\begin{align}\label{Equation: dynamical boundary conditions}
\widetilde\gamma_1 u =(\partial_t^2+\widetilde\Theta)\widetilde\gamma_0 u\,,
\end{align}
where $\widetilde{\Theta}$ is the natural extension to $C^\infty(\mathbb{R},D(\Theta))$ of a self-adjoint operator $\Theta\colon D(\Theta)\subseteq L^2(\partial M)\to L^2(\partial M)$. Observe that, being the background static, $\partial_t^2\widetilde\gamma_0=\widetilde\gamma_0\partial_t^2$.

In order to address the problem of existence and uniqueness of the advanced and retarded Green's operators for the system at hand we make use of a technique which extends the one used in \cite{Feller57,Zahn18}. It consists of rewriting the wave equation \eqref{eq:PDE_for_G} with boundary conditions prescribed via \eqref{Equation: dynamical boundary conditions} in terms of the following equivalent system: Let $\widehat{u}=(u,v)\in C^\infty(\mathbb{R},D(A^*))\times C^\infty(\mathbb{R},D(\Theta))$
\begin{equation}\label{eq:auxiliary_system}
\widehat{\mathsf{P}}\widehat{u}\doteq\partial_t^2\widehat{u}+\widehat{A}_\Theta\widehat{u}=0\,,\quad
\widehat{A}_\Theta=\bigg(\begin{matrix}
\widetilde{A}^*&0\\-\widetilde\gamma_1&\widetilde\Theta
\end{matrix}
\bigg)\,,\qquad
\widetilde\gamma_0(u)=v\,,
\end{equation}
where $\widetilde{A}^*$ indicates the natural extension of $A^*$ to $C^\infty(\mathbb{R};D(A^*))$.

Observe that in \eqref{eq:auxiliary_system} we have introduced the auxiliary operator $\widehat{A}_\Theta$ which acts on $C^\infty(\mathbb{R};D(\widehat{A}_\Theta))$, where $D(\widehat{A}_\Theta)$ is the subset of
%\nicomment{
%\begin{align*}
%	\widehat{\mathsf{H}}_{\gamma_0}:=
%	\textrm{graph}(\gamma_0)\subset
%	\widehat{\mathsf{H}}:=
%	\mathsf{H}\oplus\mathsf{h}=
%	L^2(M)\oplus L^2(\partial M)\,,
%\end{align*}
%}
%The Hilbert space structure on $\widehat{\mathsf{H}}$ is defined by}
$\widehat{\mathsf{H}}\doteq\mathsf{H}\oplus\mathsf{h}=L^2(M)\oplus L^2(\partial  M)$, the Hilbert space with scalar product
\begin{gather*}
(\widehat{\phi}_1|\widehat{\phi}_2)_{\widehat{\mathsf{H}}}=
(\phi_1|\phi_2)_{\mathsf{H}}+(\varphi_1|\varphi_2)_{\mathsf{h}}\,, \quad\widehat{\phi}_i=(\phi_i,\varphi_i),\;i=1,2,
\end{gather*}
such that
\begin{align}\label{Equation: definition of the lifted S operator}
	D(\widehat{A}_\Theta)\doteq\{
	\widehat{\phi}=(\phi,\varphi)\in\widehat{\mathsf{H}}\;|\;
	\phi\in H^2( M)\,,
	\varphi\in D(\Theta)\,,
	\gamma_0\phi=\varphi
	\}\,,\qquad
	\widehat{A}_\Theta\widehat{\phi}=\bigg(
	\begin{array}{l}
	A^*\phi \\ \Theta\varphi-\gamma_1\phi
	\end{array}
	\bigg)\,.
\end{align}

The Hilbert space structure on $\widehat{\mathsf{H}}$ is defined by
\begin{align*}
	(\widehat{\phi}_1|\widehat{\phi}_2)_{\widehat{\mathsf{H}}}=
	(\phi_1|\phi_2)_{\mathsf{H}}+(\varphi_1|\varphi_2)_{\mathsf{h}}\,, \qquad\widehat{\phi}_i=(\phi_i,\varphi_i),\;i=1,2\,.
\end{align*}
%\nicomment{
%Notice that, on $\widehat{\mathsf{H}}_{\gamma_0}$, $\|\;\|_{\widehat{\mathsf{H}}}$ is controlled by $\|\;\|_{\mathsf{H}}$ while $\Theta\colon D(\Theta)\subseteq \mathsf{h}\to\mathsf{h}$ is a self-adjoint operator on $\mathsf{h}$ and $(\mathsf{h},\gamma_0,\gamma_1)$ is a boundary triple for $A^*$.
%Therefore, $D(\widehat{A}_\Theta)$ is \textit{not} dense in $\widehat{\mathsf{H}}$ -- actually it holds $\overline{D(\widehat{A}_\Theta)}=\widehat{\mathsf{H}}_{\gamma_0}$ -- and its adjoint $\widehat{A}_\Theta^*$ is only defined as a linear operator $\widehat{A}_\Theta^*\colon D(\widehat{A}_\Theta^*)\subset\widehat{\mathsf{H}}\to\widehat{\mathsf{H}}_{\gamma_0}$.
%Notwithstanding, the following Proposition shows that $\widehat{A}^*_\Theta=\widehat{A}_\Theta$, so that $\widehat{A}_\Theta$ is self-adjoint. 
%}
We recall that $\Theta\colon D(\Theta)\subseteq \mathsf{h}\to\mathsf{h}$ is a self-adjoint operator on $\mathsf{h}$ and that $(\mathsf{h},\gamma_0,\gamma_1)$ is a boundary triple for $A^*$. We can prove a relevant property of $\widehat{A}_\Theta$.

\begin{proposition}\label{Proposition: self adjointness of lifted operator}
	Let $(\mathsf{h},\gamma_0,\gamma_1)$ be the boundary triple associated with $A^*$ as per Proposition \ref{Proposition: boundary triple of the Laplacian}. Let $\Theta\colon D(\Theta)\subseteq \mathsf{h}\to \mathsf{h}$ be a self adjoint operator.
	Then the operator $\widehat{A}_\Theta\colon D(\widehat{A}_\Theta)\subseteq\widehat{\mathsf{H}}\to\widehat{\mathsf{H}}$, defined as per equation \eqref{Equation: definition of the lifted S operator}, is self-adjoint.
\end{proposition}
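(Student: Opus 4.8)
The plan is to prove self-adjointness of $\widehat{A}_\Theta$ directly, by first verifying that it is symmetric and densely defined, and then showing that its adjoint cannot be a proper extension. Throughout one uses that the warping factor has been chosen so that $A$ is uniformly elliptic, whence $D(A^*)=H^2(M)$ and $(\mathsf h,\gamma_0,\gamma_1)$ is the boundary triple of Proposition~\ref{Proposition: boundary triple of the Laplacian} (extended to $A$), so that the Green identity \eqref{Equation: boundary equation for boundary triples} holds for all pairs in $H^2(M)$, the map $\gamma=(\gamma_0,\gamma_1)\colon H^2(M)\to\mathsf h\times\mathsf h$ is surjective --- hence so is $\gamma_1$ restricted to $\ker\gamma_0=D(A_0)$ --- and $\Theta=\Theta^*$. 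Density of $D(\widehat{A}_\Theta)$ is immediate, since it already contains $D(A)\oplus\{0\}$ and all pairs $(\phi,\gamma_0\phi)$ with $\gamma_0\phi\in D(\Theta)$, whose second components exhaust the dense subspace $D(\Theta)\subseteq\mathsf h$.

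For symmetry, given $\widehat\phi_i=(\phi_i,\varphi_i)\in D(\widehat{A}_\Theta)$ I would expand $(\widehat{A}_\Theta\widehat\phi_1|\widehat\phi_2)_{\widehat{\mathsf H}}-(\widehat\phi_1|\widehat{A}_\Theta\widehat\phi_2)_{\widehat{\mathsf H}}$ into three contributions: a bulk term $(A^*\phi_1|\phi_2)_{\mathsf H}-(\phi_1|A^*\phi_2)_{\mathsf H}$, a boundary term $(\Theta\varphi_1|\varphi_2)_{\mathsf h}-(\varphi_1|\Theta\varphi_2)_{\mathsf h}$, and the cross terms $-(\gamma_1\phi_1|\varphi_2)_{\mathsf h}+(\varphi_1|\gamma_1\phi_2)_{\mathsf h}$ arising from the off-diagonal entry $-\widetilde\gamma_1$ of $\widehat{A}_\Theta$. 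The boundary term vanishes by self-adjointness of $\Theta$; the bulk term equals $(\gamma_1\phi_1|\gamma_0\phi_2)_{\mathsf h}-(\gamma_0\phi_1|\gamma_1\phi_2)_{\mathsf h}$ by \eqref{Equation: boundary equation for boundary triples}; and since $\gamma_0\phi_i=\varphi_i$ on $D(\widehat{A}_\Theta)$ this exactly cancels the cross terms. Hence $\widehat{A}_\Theta\subseteq\widehat{A}_\Theta^*$.

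The heart of the argument is the reverse inclusion $D(\widehat{A}_\Theta^*)\subseteq D(\widehat{A}_\Theta)$. Given $(\psi,\chi)\in D(\widehat{A}_\Theta^*)$ with $\widehat{A}_\Theta^*(\psi,\chi)=(\eta,\zeta)$, I would test the adjoint identity $(\widehat{A}_\Theta(\phi,\varphi)|(\psi,\chi))_{\widehat{\mathsf H}}=((\phi,\varphi)|(\eta,\zeta))_{\widehat{\mathsf H}}$ against successively larger classes of $(\phi,\varphi)\in D(\widehat{A}_\Theta)$. Taking $\phi\in D(A)=H^2_0(M)$ and $\varphi=0$ gives $(A\phi|\psi)_{\mathsf H}=(\phi|\eta)_{\mathsf H}$ for all such $\phi$, hence $\psi\in D(A^*)=H^2(M)$ and $A^*\psi=\eta$. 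Then taking $\phi\in\ker\gamma_0$ with $\gamma_1\phi$ arbitrary --- possible by surjectivity of $\gamma_1|_{\ker\gamma_0}$ --- and $\varphi=0$, and inserting \eqref{Equation: boundary equation for boundary triples} for the pair $\phi,\psi$, one is left with $(\gamma_1\phi|\gamma_0\psi-\chi)_{\mathsf h}=0$ for all $\gamma_1\phi\in\mathsf h$, so that $\gamma_0\psi=\chi$; in particular the compatibility constraint defining $D(\widehat{A}_\Theta)$ is met. Finally, testing against a general $(\phi,\varphi)\in D(\widehat{A}_\Theta)$ and using $A^*\psi=\eta$, $\gamma_0\psi=\chi$, $\gamma_0\phi=\varphi$ together with \eqref{Equation: boundary equation for boundary triples}, all $\phi$-dependent terms cancel and the identity collapses to $(\Theta\varphi|\chi)_{\mathsf h}=(\varphi|\zeta+\gamma_1\psi)_{\mathsf h}$ for every $\varphi\in D(\Theta)$; since $\Theta=\Theta^*$ this forces $\chi\in D(\Theta)$ and $\zeta=\Theta\chi-\gamma_1\psi$. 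Hence $(\psi,\chi)\in D(\widehat{A}_\Theta)$ and $\widehat{A}_\Theta(\psi,\chi)=(\eta,\zeta)$, proving $\widehat{A}_\Theta^*\subseteq\widehat{A}_\Theta$ and thus $\widehat{A}_\Theta=\widehat{A}_\Theta^*$.

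I expect the main obstacle to be the two boundary-extraction steps --- deducing $\gamma_0\psi=\chi$ and then $\chi\in D(\Theta)$ --- since it is precisely here that the fine structure of the boundary triple enters (surjectivity of $\gamma$, in particular that $\gamma_1$ still exhausts $\mathsf h$ after restriction to $\ker\gamma_0$) together with the self-adjointness of $\Theta$, which closes the domain exactly because $D(\Theta^*)=D(\Theta)$. By contrast, the interior step is routine, resting only on the identification $D(A^*)=H^2(M)$ already recorded in Remark~\ref{Remark: boundary triple for an elliptic operator}. A minor point requiring care is that every computation above is performed on $H^2(M)$, where $\gamma_0,\gamma_1$ and Green's identity are the classical (trace-theoretic) ones of Proposition~\ref{Proposition: boundary triple of the Laplacian}, and not their distributional extensions $\widetilde\Gamma_0,\widetilde\Gamma_1$.
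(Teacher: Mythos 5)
Your proof is correct and follows essentially the same strategy as the paper's: establish density and symmetry from the Green identity, then show $D(\widehat{A}_\Theta^*)\subseteq D(\widehat{A}_\Theta)$ by testing the adjoint identity against successively larger subclasses of the domain, using surjectivity of $(\gamma_0,\gamma_1)$ and $D(\Theta^*)=D(\Theta)$. The only (cosmetic) difference is the order in which the two boundary conditions are extracted --- you obtain $\gamma_0\psi=\chi$ before $\chi\in D(\Theta)$, whereas the paper deduces membership in $D(\Theta)$ first --- and your use of the exact adjoint identity rather than a boundedness argument to conclude $\chi\in D(\Theta^*)$ is, if anything, slightly cleaner.
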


\begin{proof}
	Since $\gamma_0$ is a surjective map, while $D(\Theta)$ is dense in $\mathsf{h}$ and $H^1_0(M)=\ker\gamma_0$, $D(\widehat{A}_\Theta)$ is dense in $\widehat{\mathsf{H}}$.
%	\nicomment{The domain $D(\widehat{A}_\Theta)$ is dense in $\widehat{\mathsf{H}}_{\gamma_0}$ because $D(A^*)$ is dense in $\mathsf{H}$ and because $\|\;\|_{\widehat{\mathsf{H}}}$ is controlled by $\|\;\|_{\mathsf{H}}$.}
	On account of equation \eqref{Equation: boundary equation for boundary triples} it holds $(\widehat{A}_\Theta\widehat{\phi}|\widehat{\psi})=(\widehat{\phi}|\widehat{A}_\Theta\widehat{\psi})$ for all $\widehat{\phi},\widehat{\psi}\in D(\widehat{A}_\Theta)$, that is, $\widehat{A}_\Theta$ is a symmetric operator.
	We prove that $D(\widehat{A}_\Theta^*)\subseteq D(\widehat{A}_\Theta)$.
	Let $\widehat{\phi}\in D(\widehat{A}_\Theta^*)$.
	The map $D(\widehat{A}_\Theta)\ni\widehat{\rho}\mapsto(\widehat{\phi}|\widehat{A}_\Theta\widehat{\rho})\in\mathbb{C}$ is bounded.
	A direct inspection shows that, if $\widehat{\phi}=(\phi,\varphi)$ and $\widehat{\rho}=(\rho,\varrho)$,
	\begin{gather}\label{Equation: technical equation for proof}
	(\widehat{\phi}|\widehat{A}_\Theta\widehat{\rho})_{\widehat{\mathsf{H}}}=
	(\phi|A^*\rho)_{\mathsf{H}}+(\varphi|\Theta\varrho-\gamma_1\rho)_{\mathsf{h}}\,.
	%(\phi|A^*\rho)_{\mathsf{H}}- (\varphi|(\gamma_1-\Theta)\varrho)_{\mathsf{h}}\,.
	\end{gather}
	If $\rho\in\ker\gamma_1\cap\ker\gamma_0=H^2_0(M)$, the last term on the right hand side of equation \eqref{Equation: technical equation for proof} vanishes and the boundedness of the left hand side implies that $\phi\in D(A^*)=H^2(M)$.
	By applying equation \eqref{Equation: boundary equation for boundary triples} we find
	\begin{align}\label{Equation: action of the adjoint lifted operator on the first component}
		(\widehat{A}_\Theta^*\widehat{\phi}|\widehat{\rho})_{\widehat{\mathsf{H}}}=
		(\widehat{\phi}|\widehat{A}_\Theta\widehat{\rho})_{\widehat{\mathsf{H}}}=
		(\phi|A\rho)_{\mathsf{H}}=(A^*\phi|\rho)_{\mathsf{H}}\,,
	\end{align}
	where in the last equality we used equation \eqref{Equation: boundary equation for boundary triples}.
	This shows that $\textrm{pr}_1\widehat{A}_\Theta^*\widehat{\phi}=A^*\phi$ where $\textrm{pr}_1\colon\widehat{\mathsf{H}}\to L^2( M)$ is the projection on the first component.
	
	Let us now consider $\widehat{\rho}\in D(\widehat{A}_\Theta)$.
	Since $\phi\in H^2(M)$, we can combine equations \eqref{Equation: technical equation for proof} and equation \eqref{Equation: boundary equation for boundary triples} for the boundary triple $(\mathsf{h},\gamma_0,\gamma_1)$ to obtain
	\begin{align}\label{Equation: simplified technical equation for proof}
	(\widehat{\phi}|\widehat{A}_\Theta\widehat{\rho})_{\widehat{\mathsf{H}}}=
	(A^*\phi|\rho)_{\mathsf{H}}
	+(\gamma_0\phi|\gamma_1\rho)_{\mathsf{h}}-(\gamma_1\phi|\varrho)_{\mathsf{h}}+(\varphi|\Theta\varrho)_{\mathsf{h}}-(\varphi|\gamma_1\rho)_{\mathsf{h}}\,.
	\end{align}
	In particular, the boundedness condition on $\widehat{\rho}\mapsto(\widehat{\phi}|\widehat{A}_\Theta\widehat{\rho})$ implies that $\varphi\in D(\Theta^*)=D(\Theta)$. Therefore equation \eqref{Equation: simplified technical equation for proof} simplifies further and we conclude that
	\begin{align}
	%\label{Equation: further simplified technical equation for proof}
	(\widehat{A}_\Theta^*\widehat{\phi}|\widehat{\rho})_{\widehat{\mathsf{H}}}-
	(\widehat{A}_\Theta\widehat{\phi}|\widehat{\rho})_{\widehat{\mathsf{H}}}=
	(\gamma_0\phi-\varphi|\gamma_1\rho)_{\mathsf{h}}\,.
	\end{align}
	If $\rho\in\ker\gamma_0$, the left hand side vanishes -- we recall that $\textrm{pr}_1\widehat{A}_\Theta^*\widehat{\phi}=A^*\phi$.
	Since $(\gamma_0,\gamma_1)$ is a surjective map, it follows that $\gamma_0(\phi)=\varphi$, that is, $\widehat{\phi}\in D(\widehat{A}_\Theta)$.
\end{proof}
\begin{remark}
Proposition \ref{Proposition: self adjointness of lifted operator} can be easily generalized to the case of an arbitrary boundary triple $(\mathsf{h},\gamma_0,\gamma_1)$ for the adjoint $S^*$ of a densely defined symmetric operator $S$ on an Hilbert space $\mathsf{H}$
\footnote{
		For the sake of completeness, one should assume that $\gamma_0$ is \textit{not} closable with respect to the norm topology of $\mathsf{H}$ -- in order to ensure that $D(\widehat{A}_\Theta)$ is dense in $\widehat{\mathsf{H}}$: This is the case in all applications.}.
In this latter case, $\Theta$ is a densely defined self-adjoint operator on $\mathsf{h}$, while the operator $\widehat{S}_\Theta$ is defined on $D(\widehat{S}_\Theta)\subset\widehat{\mathsf{H}}\doteq\mathsf{H}\oplus\mathsf{h}$, in particular,
$$D(\widehat{S}_\Theta)\doteq\{\widehat{\phi}\in\widehat{\mathsf{H}}|\;\phi\in D(S^*)\,,\;\varphi\in D(\Theta)\,,\;\gamma_0\phi=\varphi\}\,,$$
where $\widehat{\phi}=(\phi,\varphi)$.
\end{remark}
\begin{remark}
The spectral properties of $\widehat{A}_\Theta$ can be linked to elliptic boundary value problems with spectral-dependent boundary conditions, see for example \cite{Beh10} and \cite{BL12}.
Indeed, $\lambda\in\rho(\widehat{A}_\Theta)$ implies that $\lambda\in\rho(A_{\Theta-\lambda})$, where $A_{\Theta-\lambda}$ denotes the self-adjoint extension of $A$ defined via Proposition \ref{Proposition: characterization of self adjoint extension via boundary triples}.
\end{remark}

We can now formulate the counterpart of Theorem \ref{Theorem: construction of the advanced and retarded propagators} for the case at hand. Since the proof is the same mutatis mutandis, we will omit it:

\begin{theorem}\label{Theorem: advanced and retarded propagators for dynamical BC}
	Let $(\mathsf{h},\gamma_0,\gamma_1)$ be the boundary triple associated with $A^*$ as per Proposition \ref{Proposition: boundary triple of the Laplacian}.
	Let $\Theta\colon D(\Theta)\subseteq \mathsf{h}\to \mathsf{h}$ be a self-adjoint operator and let assume that the spectrum of the self-adjoint operator $\widehat{A}_\Theta$ defined in Proposition \ref{Proposition: self adjointness of lifted operator} is bounded from below.
	Finally let $\textrm{pr}_1\colon\mathsf{H}\to L^2(M)$ be the projection to the first component of $\widehat{\mathsf{H}}$ and let $\textrm{ext}\colon H^2( M)\to\widehat{\mathsf{H}}_{\gamma_0}$ be the extension map $\phi\mapsto\widehat{\phi}_{\textrm{ext}}\doteq(\phi,\gamma_0\phi)$.
	For $f\in C^\infty(\mathring{N})$, let
	\begin{align}\label{Equation: advanced and retarded propagators for dynamical BC}
		[\mathsf{G}_\Theta f](t)\doteq
		\textrm{pr}_1\bigg[\int_{\mathbb{R}}\widehat{A}_\Theta^{-\frac 12}\sin\big[\widehat{A}_\Theta^{\frac 12}(t-t^\prime)\big]\widehat{f}_{\textrm{ext}}(t^\prime)\textrm{d}t^\prime\bigg]\,.
	\end{align}
	Then $(\mathsf{G}^+_\Theta,\mathsf{G}^-_\Theta)$ where $\mathsf{G}^+=\theta(t-t^\prime)\mathsf{G}$ and $\mathsf{G}^-=-\theta(t^\prime-t)\mathsf{G}$ is a pair of advanced and retarded Green's operators for $\mathsf{P}$ as per Definition \ref{Definition: advanced, retarded and causal propagator} subjected to the boundary conditions \eqref{Equation: dynamical boundary conditions}.
\end{theorem}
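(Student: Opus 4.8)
The plan is to run, \emph{mutatis mutandis}, the proof of Theorem~\ref{Theorem: construction of the advanced and retarded propagators}, with the self-adjoint, bounded-below operator $A_\Theta$ on $L^2(M)$ replaced by the lifted operator $\widehat{A}_\Theta$ on $\widehat{\mathsf{H}}=L^2(M)\oplus L^2(\partial M)$ introduced in \eqref{Equation: definition of the lifted S operator}. By Proposition~\ref{Proposition: self adjointness of lifted operator} the operator $\widehat{A}_\Theta$ is self-adjoint, and by hypothesis its spectrum is bounded from below, so $\sigma(\widehat{A}_\Theta)\ni\lambda\mapsto\lambda^{-1/2}\sin(\sqrt{\lambda}\,\tau)$ is smooth and bounded for every $\tau\in\mathbb{R}$. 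Exactly as in Theorem~\ref{Theorem: construction of the advanced and retarded propagators}, the functional calculus for $\widehat{A}_\Theta$ then makes \eqref{Equation: advanced and retarded propagators for dynamical BC}, before taking $\textrm{pr}_1$, into a well-defined linear map $\widehat{\mathsf{G}}_\Theta$ sending $C^\infty_0(\mathbb{R},\widehat{H}^\infty_\Theta)$, with $\widehat{H}^\infty_\Theta\doteq\bigcap_{k\geq 0}D(\widehat{A}_\Theta^k)$, into $C^\infty(\mathbb{R},\widehat{H}^\infty_\Theta)$, and satisfying $(\partial_t^2+\widehat{A}_\Theta)\widehat{\mathsf{G}}_\Theta=\widehat{\mathsf{G}}_\Theta(\partial_t^2+\widehat{A}_\Theta)=\textrm{id}$ together with the Cauchy data of a causal propagator as in \eqref{eq:PDE_for_G}. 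One then sets $\mathsf{G}_\Theta\doteq\textrm{pr}_1\circ\widehat{\mathsf{G}}_\Theta\circ\textrm{ext}$, which is \eqref{Equation: advanced and retarded propagators for dynamical BC}, and defines $\mathsf{G}^\pm_\Theta$ by multiplying $\mathsf{G}_\Theta$ with the Heaviside distribution exactly as in Theorem~\ref{Theorem: construction of the advanced and retarded propagators}.

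Next I would verify that $\mathsf{G}_\Theta$ solves the dynamical problem. For $f\in\mathcal{D}(\mathring{N})$ one has $\widetilde\gamma_0 f=\widetilde\gamma_1 f=0$, hence $\widehat{f}_{\textrm{ext}}=(f,0)$ and, by the second line of \eqref{Equation: definition of the lifted S operator}, $\widehat{A}_\Theta^k(f,0)=(A^k f,0)$; thus $(f,0)\in C^\infty_0(\mathbb{R},\widehat{H}^\infty_\Theta)$ and $\textrm{ext}$ intertwines $\partial_t^2+A$ with $\partial_t^2+\widehat{A}_\Theta$ on such $f$, whence in particular $\mathsf{G}_\Theta\circ(\partial_t^2+A)=\textrm{id}$ on $\mathcal{D}(\mathring{N})$ and likewise $(\partial_t^2+A)\circ\mathsf{G}_\Theta=\textrm{id}$. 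Writing $\widehat{u}=(u,v)\doteq\widehat{\mathsf{G}}_\Theta(f,0)$, the membership $\widehat{u}(t)\in D(\widehat{A}_\Theta)$ forces $u(t)\in H^2(M)$, $v(t)\in D(\Theta)$ and $\widetilde\gamma_0 u=v$, while the two components of $(\partial_t^2+\widehat{A}_\Theta)\widehat{u}=(f,0)$ read $\partial_t^2 u+A^* u=f$ and $\partial_t^2 v+\widetilde\Theta v-\widetilde\gamma_1 u=0$. Substituting $v=\widetilde\gamma_0 u$ into the second equation produces precisely the dynamical boundary condition \eqref{Equation: dynamical boundary conditions}, so $u=\mathsf{G}_\Theta f$ solves the auxiliary system \eqref{eq:auxiliary_system}, and hence \eqref{eq:PDE_for_G} subject to \eqref{Equation: dynamical boundary conditions}, with the Green's operator identities above.

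The last, and genuinely new, step is the support property $\textrm{supp}(\mathsf{G}^\pm_\Theta f)\subseteq J^\mp(\textrm{supp}(f))$, from which uniqueness follows as in Theorem~\ref{Theorem: construction of the advanced and retarded propagators}. I would derive it from an energy estimate on $\widehat{\mathsf{H}}$ adapting the one in that proof: for a compact $K\subseteq M$ and $U_K\doteq N\setminus J(M\setminus K)$, I would control, on the domain of dependence $K_t\doteq M_t\cap U_K$, the restriction of the natural energy of the coupled system, namely of $\frac 12\big[\big((\widehat{A}_\Theta+c_\infty)\widehat{u}(t)\,\big|\,\widehat{u}(t)\big)_{\widehat{\mathsf{H}}}+\|\partial_t\widehat{u}(t)\|^2_{\widehat{\mathsf{H}}}\big]$, where $c_\infty$ is taken larger than both $\sup_{U_K}|C|$, $C=A-\Delta_{\beta^{-1}g}$ as in \eqref{Equation: elliptic operator with beta dependence}, and the modulus of the lower bound of $\widehat{A}_\Theta$. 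Expanding $\big(\widehat{A}_\Theta\widehat{u}\,\big|\,\widehat{u}\big)_{\widehat{\mathsf{H}}}$ via the divergence theorem and Green's identity \eqref{Equation: boundary equation for boundary triples}, the explicit flux through the portion of $\partial K_t$ lying on $\partial M$ is cancelled by the $\widetilde\gamma_1 u$ entry of $\widehat{A}_\Theta$ because $v=\widetilde\gamma_0 u$; the resulting functional is non-negative thanks to the lower bound of $\widehat{A}_\Theta$, and, using $\partial_t^2\widehat{u}=-\widehat{A}_\Theta\widehat{u}$ together with the self-adjointness of $\widehat{A}_\Theta$, its time derivative involves only the flux through the characteristic part of $\partial K_t$, which is absorbed verbatim as in Theorem~\ref{Theorem: construction of the advanced and retarded propagators}, leaving $\frac{\textrm{d}}{\textrm{d}t}E_K\leq b\,E_K$ for some $b>0$. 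Gr\"onwall's inequality and vanishing initial data then force $\widehat{u}$, and hence $u=\mathsf{G}_\Theta f$, to vanish on $K_t$ for all $t$, which gives the support statement; applying the same estimate to the difference of two candidate causal propagators gives uniqueness. I expect the main obstacle to be exactly this bookkeeping of the boundary flux: one must check, paying attention to the sign conventions built into the boundary triple $(\gamma_0,\gamma_1)$, that the divergence-theorem boundary term on $\partial M$ and the contribution of $\partial_t^2 v+\widetilde\Theta v$ to the energy cancel precisely, so that no hypothesis on $\Theta$ beyond self-adjointness — in particular no lower bound on $\Theta$ itself — is required.
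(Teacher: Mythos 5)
Your proposal follows exactly the route the paper intends: the paper omits this proof, stating only that it is the same as that of Theorem~\ref{Theorem: construction of the advanced and retarded propagators} \emph{mutatis mutandis}, and your expansion --- functional calculus for the self-adjoint, bounded-below lifted operator $\widehat{A}_\Theta$, identification of the second component of \eqref{eq:auxiliary_system} (with $v=\widetilde\gamma_0 u$) as the dynamical boundary condition \eqref{Equation: dynamical boundary conditions}, and a lifted energy estimate in which the $\partial M$ flux from the divergence theorem cancels against the $-\widetilde\gamma_1 u$ entry of $\widehat{A}_\Theta$ --- is a faithful rendering of precisely that argument. The one step deserving more care than your sketch (and the paper) gives it is the non-negativity of the energy functional once it is localized to $K_t$, since the global lower bound of $\widehat{A}_\Theta$ on $\widehat{\mathsf{H}}$ does not immediately yield positivity of the restricted quadratic form; this is a refinement within the same approach rather than a gap in strategy.
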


\begin{Example}
With reference to the same geometric setting as in Example \ref{Example: propagators for half space} we apply the previous analysis, Theorem \ref{Theorem: advanced and retarded propagators for dynamical BC} in particular, to the dynamical boundary condition \eqref{Equation: dynamical boundary conditions} defined out of the  operator  $\Theta=-\sum_{\ell=2}^{n+1}\partial_{x_\ell}^2$.
Following \cite{Zahn18}, a normalized complete system of generalized eigenfunctions is
\begin{align}
\widehat{\psi}_{\xi,k}(x_1,x)=\big[a(\xi^2+1)\big]^{-\frac 12}\bigg(\begin{array}{l}
e^{ikx}\big(\cos(\xi x_1)-\xi\sin(\xi x_1)\big)\\
e^{ikx}
\end{array}\bigg)\,,
\end{align}
where $a=(2\pi)^{n-1}\frac\pi 2$. A direct computation yields
\begin{align*}
\big[\textrm{pr}_1\widehat{\Delta}\widehat{\phi}\big](x_1,x)=&
\int_{\mathbb{R}_+\times\mathbb{R}}(k^2+\xi^2)\psi_{\xi,k}(x_1,x)\bigg[\int_{\mathbb{R}_+\times\mathbb{R}}\psi_{\xi,k}(y_1,y)\phi(y_1,y)\textrm{d}y_1\textrm{d}y\bigg]\textrm{d}\xi\textrm{d}k\\
[\mathsf{G}_\Theta f](t,x_1,x)=&
\int_{\mathbb{R}}\int_{\mathbb{R}_+\times\mathbb{R}}\!\!\!\!\sin\big[(\xi^2+k^2)^{\frac 12}(t-s)\big](\xi^2+k^2)^{-\frac 12}\psi_{\xi,k}(x_1,x)\cdot\\
&\cdot\bigg[\int_{\mathbb{R}_+\times\mathbb{R}}\!\!\!\psi_{\xi,k}(y_1,y)f(s,y_1,y)\textrm{d}y_1\textrm{d}y\bigg]\textrm{d}\xi\textrm{d}k\textrm{d}s\,.
\end{align*}
\end{Example}

\section*{Acknowledgements}
The work of C.~D.\ was supported by the University of Pavia.
The work of N.~D.\ was supported in part by a research fellowship of the University of Pavia. The work of N.~D.\ and of H.~F.\ was supported in part by a fellowship of the ``Progetto Giovani GNFM 2017'' under the project ``Wave propagation on Lorentzian manifolds with boundaries and applications to algebraic QFT" fostered by the National Group of Mathematical Physics (GNFM-INdAM).
We are grateful to Felix Finster, Nadine Grosse, Valter Moretti, Simone Murro and Juan Manuel P\'erez-Pardo for the useful comments and discussions. We are grateful to Igor Khavkine for the useful comments, especially concerning Proposition \ref{Prop:exact_sequence}.

\end{document}